\definecolor{DarkRed}{rgb}{0.80,0,0}
\newtheorem{theorem}{Theorem}[section]
\newtheorem{lemma}[theorem]{Lemma}
\newtheorem{claim}[theorem]{Claim}\crefname{claim}{claim}{claims}
\newtheorem{remark}[theorem]{Remark}
\newtheorem{corollary}[theorem]{Corollary}
\newtheorem{definition}[theorem]{Definition}
\newcounter{note}[section]
\newcommand{\eps}{\varepsilon}
\newcommand{\reals}{\mathbb{R}}
\newcommand{\naturalnumbers}{\mathbb{N}}
\newcommand{\calP}{\mathcal{P}}
\newcommand{\mcS}{\mathcal{S}}
\newcommand{\poly}{\operatorname{\text{{\rm poly}}}}
\newcommand{\N}{\mathbb{N}}
\newcommand{\diam}{\operatorname{diam}}
\newcommand{\spa}{\operatorname{spars}}
\newcommand{\sep}{\operatorname{sep}}
\newcommand{\cond}{\operatorname{cond}}
\newcommand{\dist}{\operatorname{dist}}
\newcommand{\leng}{\operatorname{leng}}
\newcommand{\conge}{\operatorname{cong}}
\newcommand{\step}{\operatorname{step}}
\newcommand{\ball}{\operatorname{ball}}
\newcommand{\vol}{\operatorname{vol}}
\global\long\def\poly{\mathrm{poly}}%
\global\long\def\vol{\mathrm{vol}}%
\global\long\def\eps{\epsilon}%
\global\long\def\D{\mathcal{D}}%
\global\long\def\N{\mathcal{N}}%
\global\long\def\reals{\mathbb{R}}%
\global\long\def\l{\ell}%
\global\long\def\step{\mathrm{step}}%
\global\long\def\cond{\mathrm{cond}}%
\global\long\def\spars{\mathrm{spars}}%
\global\long\def\sep{\mathrm{sep}}%
\global\long\def\dist{\mathrm{dist}}%
\global\long\def\rdist{\overline{\mathrm{dist}}}
\global\long\def\deg{\mathrm{deg}}%
\global\long\def\diam{\mathrm{diam}}%
\global\long\def\cov{\mathrm{cov}}%
\global\long\def\ball{\mathrm{ball}}%
\global\long\def\supp{\mathrm{supp}}%
\global\long\def\IN{\mathrm{in}}%
\global\long\def\OUT{\mathrm{out}}%
\global\long\def\MID{\mathrm{mid}}%
\global\long\def\Dem{\mathrm{Dem}}
\global\long\def\cong{\mathrm{cong}}
\global\long\def\path{\mathrm{path}}
\global\long\def\vvalue{\mathrm{value}}
\global\long\def\ratio{\mathrm{ratio}}
\renewcommand{\l}{\ell}
\newcommand{\wah}{w_h^\alpha}
\newcommand{\wahA}{w_{h,A}^\alpha}
\newcommand{\DahA}{D^\alpha_{h,A}}
\newcommand{\Gvc}{G_{\mathrm{vc}}}
\newcommand{\Vvc}{V_{\mathrm{vc}}}
\newcommand{\Evc}{E_{\mathrm{vc}}}
\newcommand{\Avc}{A_{\mathrm{vc}}}
\newcommand{\Fvc}{F_{\mathrm{vc}}}
\newcommand{\Pvc}{P_{\mathrm{vc}}}
\newcommand{\Cvc}{C_{\mathrm{vc}}}
\newcommand{\Gec}{G_{\mathrm{ec}}}
\newcommand{\Veca}{V_{\mathrm{ec}}}
\newcommand{\Eec}{E_{\mathrm{ec}}}
\newcommand{\Aec}{A_{\mathrm{ec}}}
\newcommand{\Fec}{F_{\mathrm{ec}}}
\newcommand{\Pec}{P_{\mathrm{ec}}}
\newcommand{\Cec}{C_{\mathrm{ec}}}
\newcommand{\vin}{v_{\mathrm{in}}}
\newcommand{\vmid}{v_{\mathrm{mid}}}
\newcommand{\vout}{v_{\mathrm{out}}}
\newcommand{\uin}{u_{\mathrm{in}}}
\newcommand{\umid}{u_{\mathrm{mid}}}
\newcommand{\uout}{u_{\mathrm{out}}}
\renewcommand{\poly}{\operatorname{poly}}
\newcommand{\myparskip}{3pt}
\DeclareMathOperator*{\argmin}{arg\,min}
\title{Length-Constrained Directed Expander Decomposition\\and Length-Constrained Vertex-Capacitated Flow Shortcuts}
\author{
    Bernhard Haeupler\thanks{Partially funded by the Ministry of Education and Science of Bulgaria's support for INSAIT as part of the Bulgarian National Roadmap for Research Infrastructure and through the European Research Council (ERC) under the European Union's Horizon 2020 research and innovation program (ERC grant agreement 949272).}\\
    \small bernhard.haeupler@inf.ethz.ch\\
    \small INSAIT, Sofia University "St. Kliment Ohridski" \& ETH Zürich
    \and
    Yaowei Long\thanks{Part of this work was done while at INSAIT, Sofia University "St. Kliment Ohridski", Bulgaria. This work was partially funded from the Ministry of Education and Science of Bulgaria (support for INSAIT, part of the Bulgarian National Roadmap for Research Infrastructure).}\\
    \small yaoweil@umich.edu\\
    \small University of Michigan
    \and
    Thatchaphol Saranurak\thanks{Supported by NSF Grant CCF-2238138. Part of this work was done while at INSAIT, Sofia University "St. Kliment Ohridski", Bulgaria. This work was partially funded from the Ministry of Education and Science of Bulgaria (support for INSAIT, part of the Bulgarian National Roadmap for Research Infrastructure).}\\
    \small thsa@umich.edu\\
    \small University of Michigan
    \and
    Shengzhe Wang\\
    \small shengzhe.wang@inf.ethz.ch\\
    \small ETH Zürich
}
\date{}
\begin{document}

\maketitle

\pagenumbering{gobble}

\begin{abstract}
We show the existence of \emph{length-constrained expander decomposition} in directed graphs and undirected vertex-capacitated graphs. Previously, its existence was shown only in undirected edge-capacitated graphs \cite{haeupler2022hopexpander,haeupler2024new}. 
Along the way, we prove the multi-commodity maxflow-mincut theorems for length-constrained expansion in both directed and undirected vertex-capacitated graphs. 

Based on our decomposition, we build a \emph{length-constrained flow shortcut} for undirected vertex-capacitated graphs, which roughly speaking is a set of edges and vertices added to the graph so that every multi-commodity flow demand can be routed with approximately the same vertex-congestion \emph{and} length, but all flow paths only contain few edges. This generalizes the shortcut for undirected edge-capacitated graphs from \cite{Haeupler2024emulator}. 

Length-constrained expander decomposition and flow shortcuts have been crucial in the recent algorithms in undirected edge-capacitated graphs \cite{Haeupler2024emulator,haeupler2024dynamic}. Our work thus serves as a foundation to generalize these concepts to directed and vertex-capacitated graphs. 

\end{abstract}

\clearpage

\tableofcontents

\clearpage

\pagenumbering{arabic}
\section{Introduction}

Expander decomposition found its early applications in property testing \cite{goldreich1998sublinear}, clustering \cite{kannan2004clusterings}, and approximation algorithms \cite{chekuri2005multicommodity} and, for the last two decades, has been the crucial ingredient in important developments of fast graph algorithms. This includes the first almost-linear time algorithms for spectral sparsifiers and Laplacian solvers \cite{spielman2004nearly}, approximate max flow \cite{kelner2014almost,sherman2013nearly,RST14}, deterministic global min cut \cite{li2021deterministic}, exact max flow \cite{ChenKLPGS22}, as well as many almost-optimal dynamic algorithms for minimum spanning trees \cite{spanningforest}, shortest paths \cite{chuzhoy2019new,bernstein2022deterministic}, sparsifiers \cite{bernstein2020fully}, $k$-edge-connectivity \cite{jin2022fully}, minimum cuts \cite{jin2024fully,el2025fully}, and more \cite{GRST21}. Significant effort \cite{DBLP:conf/soda/SaranurakW19,chang2019improved,DBLP:conf/focs/ChuzhoyGLNPS20,chang2020deterministic,li2021deterministic,li2023near,hua2023maintaining,agassy2023expander,gottesburen2024practical,chen2025parallel} has then focused on constructing expander decomposition itself.

Below, we discuss two successful orthogonal generalizations of expander decomposition.

\paragraph{Vertex and Directed Expander Decomposition.}

In 2005, Chekuri, Khanna, and Shepherd \cite{chekuri2005multicommodity} showed that the construction of expander decomposition in undirected edge-capacitated graphs naturally extends to work in undirected vertex-capacitated graphs and applies them for approximating all-or-nothing vertex-capacitated flow problems. Later, this was extended to directed graphs, an even more general setting \cite{chekuri2015all}.\footnote{In fact, expander decomposition was only implicit in \cite{chekuri2005multicommodity,chekuri2015all} as their definitions were specific to their applications. The purely graph-theoretic definition was later formalized in \cite{BernsteinGS20}.}

Since 2020, almost-linear time expander decomposition algorithms in these generalized settings have been developed \cite{BernsteinGS20,DBLP:conf/focs/LongS22,hua2023maintaining,sulser2024simplenearoptimalalgorithmdirected} and found impressive applications. For the vertex-capacitated ones, they were crucial for the fastest deterministic vertex connectivity algorithms \cite{saranurak2022deterministic,nalam2023deterministic} and data structures for connectivity queries under vertex failures \cite{DBLP:conf/focs/LongS22,long2025connectivity,jiang2025new}. For the directed ones, they were used for dynamic algorithms in directed graphs \cite{BernsteinGS20} and  the new combinatorial approaches for exact max flow \cite{chuzhoy2024maximum,bernstein2024maximum}.

\paragraph{Length-Constrained Expander Decomposition and Flow Shortcuts.}

More recently, Haeupler, R\"{a}cke, and Ghaffari \cite{haeupler2022hopexpander} introduced \emph{length-constrained expanders} (LC-expanders). At a very high level, these are graphs such that any ``reasonable'' demand can be routed with low congestion \emph{and} length. In contrast, normal expanders only guarantee low congestion. \cite{haeupler2022hopexpander} constructed LC-expander decomposition and applied it to show universally optimal distributed algorithms. In general, LC-expander decomposition is much more effective for problems that simultaneously concern length and congestion.

Based on the new decomposition, \cite{Haeupler2024emulator} introduced the notion of \emph{LC-flow shortcut}\footnote{It was called a \emph{low-step flow emulator} in \cite{Haeupler2024emulator}.}, a new kind of graph augmentation. Roughly speaking, an LC-flow shortcut is a set of edges and vertices added to the graph so that every multi-commodity flow demand can be routed with approximately the same congestion \emph{and} length, but all flow paths only have a few edges. This is formalized as follows (see \Cref{sec:preliminray} for background).


\begin{definition}
[Length-Constrained Flow Shortcut]
\label{def:LCFlowShortcut}
Given a graph $G=(V,E)$, we say an edge set $E'$ (possibly with endpoints outside $V$) is a $t$-step flow shortcut of $G$ with length slack $\lambda$ and congestion slack $\kappa$ if

\begin{itemize}

\item (Forward Mapping) for every demand $D$ routable in $G$ with congestion $1$ and length $h$, $D$ is routable in $G\cup E'$ with congestion 1, length $\lambda h$, and maximum step $t$, and

\item (Backward Mapping) for every demand $D$ on $V(G)$ routable in $G\cup E'$ with congestion 1 and length $h$, $D$ is routable in $G$ with congestion $\kappa$ and length $h$.

\end{itemize}

\end{definition}

In any undirected edge-capacitated graph, \cite{Haeupler2024emulator} showed, for any $\epsilon>0$, the existence of a $O(1/\epsilon)$-step LC-flow shortcut $E'$ of size $|E'|\le O(n^{1+O(\epsilon)} \mathrm{polylog}(n))$ with length slack $O(1/\epsilon^{3})$ and congestion slack $n^{O(\epsilon)}$.\footnote{The shortcut $E'$ in \cite{Haeupler2024emulator} actually has $O(1/\eps^4)$ length slack and $O(1/\eps^2)$ maximum step, but this is only because they tried to ensure that all endpoints of $E'$ are in $V$. Allowing endpoints outside $V$, one can replace their \emph{router} with a star and improve the quality to be as we stated.} Combined with newly developed close-to-linear time LC-expander decomposition \cite{haeupler2024new,haeupler2025cut}, they also obtained a close-to-linear time construction for LC-flow shortcuts albeit with worse quality.

LC-flow shortcuts have led to significant further progress. This includes the first close-to-linear time constant-approximation algorithm for minimum cost multi-commodity flow \cite{Haeupler2024emulator}. The dynamic but weaker version of flow shortcuts was also the key object in the first deterministic dynamic constant-approximate distance oracle with $O(n^{\epsilon})$ update time \cite{haeupler2024dynamic}.

However, all applications of LC-expander decomposition until now are limited to undirected edge-capacitated graphs.

\subsection{Our Results}

To extend the reach of the expander decomposition paradigm further, the history above suggests the following research question:

\begin{center}

\emph{Can we construct length-constrained expander decomposition and flow shortcuts\\ beyond undirected edge-capacitated graphs?}

\par\end{center}

Indeed, we answer this question affirmatively. In this paper, we focus on the existential results, but the arguments naturally give polynomial-time algorithms.
For future work, we are working towards almost-linear-time constructions, which would lead to further applications for minimum cost (multi-commodity) flow in vertex-capacitated and directed graphs.
Below, we discuss our contribution in more detail.

\paragraph{Length-Constrained Directed and Vertex Expander Decompositions.}

We formalize the notions of length-constrained expanders in directed graphs and in undirected vertex-capacitated graphs (\Cref{sec:basic_LCDE,sec:basic_LCVE}). Then, we show the existence of length-constrained expander decomposition in directed graphs (\Cref{thm:directed_edge_ED}) and in undirected vertex-capacitated graphs (\Cref{thm:vertex_ED}). Along the way, we also show that the definition of length-constrained expanders based on cuts is almost equivalent to the characterization based on multi-commodity flow (\Cref{thm:routing_LCDE,thm:routing_LCVE}). This can be viewed as a version of the approximate multicommodity maxflow mincut theorem \cite{leighton1999multicommodity} but for length-constrained expansion in directed and vertex-capacitated graphs.

While this part does not require technical novelty, it is an important foundation for our paper and, we believe, for future work using this concept.

\paragraph{Length-Constrained Vertex-Capacitated Flow Shortcuts.}

Our main technical contribution (\Cref{thm:vertex_emulator}) is to show that, for any undirected \emph{vertex-capacitated} graph and any $\epsilon>0$, there exists a $2^{O(\frac{1}{\epsilon})}$-step flow shortcut $E'$ of size $|E'|=O(n^{1+O(\epsilon)}\mathrm{polylog}(n))$ with length slack $O(1/\epsilon^{3})$ and congestion slack $n^{O(\epsilon)}$. This generalizes the flow shortcut of \cite{Haeupler2024emulator} in undirected edge-capacitated graphs.

Our trade-off between size, length slack, and congestion slack matches the one of \cite{Haeupler2024emulator}. However, our step-bound is $2^{O(\frac{1}{\epsilon})}$ instead of $O(1/\epsilon^{2})$. This is due to technical barriers unique to vertex-capacitated graphs, which also requires us to use very different analysis. We leave as a very interesting open problem if it is possible to obtain $\text{poly}(1/\epsilon)$ steps.

We note that obtaining similar LC-flow shortcuts on directed graphs is currently out of reach because it would give the breakthrough on \emph{reachability shortcuts}. Given a graph $G=(V,E)$, an edge set $E'$ is a $t$-step reachability shortcut of $G$ if, for every pair of vertices $u,v\in V$, $u$ can reach $v$ in $G$ if and only if $u$ can reach $v$ in $G\cup E'$ using at most $t$ steps.
Observe that an LC-flow shortcut in a directed graph is strictly stronger than a reachability shortcut. It is a major open problem whether there exists a $n^{o(1)}$-step reachability shortcut of size $n^{1+o(1)}$.\footnote{When the endpoints of $E'$ must be in $V$, \cite{Hesse03,HuangP21,bodwin2023folklore} already showed that there is no  $\Omega(n^{1/4})$-step reachability shortcut of size $O(n)$. The lower bounds extend to the shortcut of size $n^{1+\epsilon}$ with a worse step bound.}


\subsection{Our Techniques}

Next, we give a technical overview of our LC-flow shortcut on vertex-capacitated graphs. We will explain how the strategy used in \cite{Haeupler2024emulator} fails in our setting and how we overcome the obstacle. 
For simplicity, here we only consider graphs with \emph{unit capacity}. Also, we only construct a slightly weaker notion of LC-flow shortcut in the sense that, it receives an additional length parameter $h$ and the forward mapping only guarantees that every demand routable in $G$ with length $h'\leq h$ and congestion $1$ is routable in $G\cup E'$ with length $\lambda h$, congestion $1$ and step $t$.

\paragraph{Preliminaries.}
First, we give a brief background on length-constrained expansion. A\emph{ demand $D:V\times V\rightarrow\mathbb{R}_{\ge0}$ }assigns value to pair of vertices $(u,v)$ and $D$ is \emph{$h$-length} is it assigns non-zero values only to vertex pairs of distance $\dist_{G}(u,v)\le h$. A demand $D$ is \emph{routable} with congestion $\kappa$ and length $\lambda$ if there exists a multi-commodity flow routing $D$ with congestion $\kappa$ and length $\lambda$. $D$ \emph{respects} a \emph{node-weighting} $A:V\rightarrow\mathbb{R}_{\ge0}$ if for each vertex $u$, $\sum_{v}D(u,v)\le A(u)$. Let $|A|=\sum_{u\in V}A(u)$. For any $s\ge1$, $A$ is \emph{$(h,s)$-length $\phi$-expanding in $G$} if every $h$-length $A$-respecting demand is routable with length $hs$ and congestion $O(\frac{\log n}{\phi})$.\footnote{Our definition in the paper (\Cref{def:LC directed expansion}) is actually cut-based. This almost-equivalent flow-based definition follows from \Cref{thm:routing_LCVE} and is more convenient in this overview.} 
A length-constrained cut $C$, generally speaking, assigns to each edge an integral length increase, and $G - C$ is the graph $G$ applied with the length increase from cut $C$.
We informally say that $G$ is a \emph{length-constrained expander (LC-expander)} if a node-weighting $A$ whose support is the whole vertex set $V$ is expanding in $G$. An $(h,s)$-length $\phi$-expander decomposition for $A$ is a length-constrained cut $C$ such that $A$ is $(h,s)$-length $\phi$-expanding in $G-C$.


\begin{figure}[htbp]
    \centering
    \includegraphics[width = 0.6\textwidth]{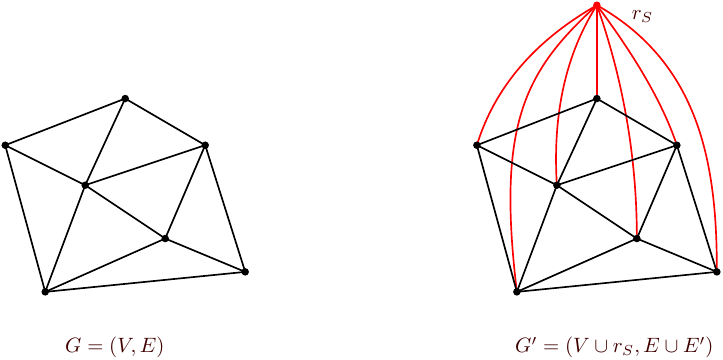}
    \caption{An LC-flow shortcut of a low-diameter LC-expander. }
    \label{fig:shortcut_example}
\end{figure}

\paragraph{Warm-up: Shortcutting LC-expanders.} 
Before explaining the obstacle,
we first show how to shortcut an LC-vertex expander as a warm-up. 
Suppose that a node-weighting $A$ is $(h,s)$-length $\phi$-vertex expanding in $G$. Say,  $A:=\mathbbm{1}_{V}$ (i.e., $A(v) =1$ for all $v\in V$).

Suppose further that $G$ has diameter at most $h$. In this case, our shortcut is simply a star $S$ connecting each original vertex $v$ to a Steiner vertex $r_S$ with an $(hs)$-length $A(v)$-capacity edge.
We can shortcut any feasible flow in $G$ with $2$ steps. An illustrative example is shown in \cref{fig:shortcut_example}. The length slack is $O(s)$ since an $h$-length original feasible flow is mapped forward to a $(2hs)$-length feasible flow in the star. The congestion slack is $O(\log n/\phi)$ because any feasible flow in the star induces an $h$-length $A$-respecting demand. Since $A$ is $(h,s)$-length $\phi$-expanding in $G$,
we route such demand in $G$ with congestion $O(\log n/\phi)$ and without length increasing. 

In general, the diameter can be large. Thus, we can construct a \emph{sparse neighborhood cover} to decompose the graph into clusters with diameter $h$, such that (1) for each vertex $v$, there is a cluster containing all vertices within distance $h/s$ from $v$, and (2) each vertex is inside $n^{O(1/s)}$ clusters. Then, we can construct a shortcut by adding an $(hs)$-edge-length star on each cluster. By a similar argument, we obtain a flow shortcut graph for $(h/s)$-length original flows with length slack $O(s^{2})$, congestion slack $O(n^{O(1/s)}\log n/\phi)$ and step $2$.

So far, when we build an LC-flow shortcut for an LC-expander, the vertex-capacitated setting presents no difficulties compared to the edge-capacitated setting, because the above simple approach works in both settings. However, the differences between the two settings arise when generalizing this approach to general graphs via expander hierarchies. 

\paragraph{Previous Approach: Shortcutting General Graphs via Boundary-Linkedness.} 
The key idea of \cite{Haeupler2024emulator} is to exploit a hierarchy of \emph{boundary-linked} LC-expander decomposition, defined as follows. Let $G$ be an edge-unit-capacity graph. Initialize the node-weight $A_0 = \deg_G$. For each level $0\le i \le d$, compute a cut $C_{i+1} \subseteq E$ of size $|C_{i+1}|\approx \phi |A_i|$ such that $A_i + \deg_{C_{i+1}}$ is $(h,s)$-length $\phi$-expanding in $G-C_{i+1}$ where $\deg_{C_{i+1}}(v)$ counts the number of $C_{i+1}$-edges incident to $v$.\footnote{In the actual construction, $C_{i+1}$ assigns fractional values to edges and is called a \emph{moving cut}, defined in \Cref{sec:basic_LCDE}. Here, we assume $C_{i+1}$ is a classic edge cut for simplicity.}
The cut $C_{i+1}$ is called the \emph{boundary-linked} LC-expander decomposition for $A_i$ because it gives a stronger expansion guarantee of $A_i + \deg_{C_{i+1}}$ instead of just $A_i$. Then, we set $A_{i+1} := \deg_{C_{i+1}}$ and continue to the next level $i+1$. By setting $\phi = 1/(n^{O({1/s})}n^\eps)$, we have that $d = O(1/\eps)$ and $A_{d+1} = 0$.

From the above construction, we conclude that,  for each $i$,  $A_i+A_{i+1}$ is $(h,s)$-length $\phi$-expanding in $G-C_{i+1}$. 
Therefore,  as we have seen in the warm-up, we can add stars on the support of $A_i+A_{i+1}$ so that any flows routing $h$-length $(A_{i}+A_{i+1})$-respecting demands in $G-C_{i+1}$ can be shortcut.



Now consider a feasible $h$-length flow in $G$. The boundary-linkedness suggests a natural \emph{bottom-up} shortcut scheme. For each flow path $P$, we can think of routing $P$'s \emph{head packet} and \emph{tail packet} (initially at $P$'s left and right endpoints, denoted by $u_{0}$ and $v_{0}$) to the same place via shortcuts. Take the head packet as an example. Start with $u_{0}\in\supp(A_{0}) = V$. At each level $0\leq i\leq d-1$, let $u_{i+1}$ be the left endpoint of the first $P\cap C_{i+1}$-edge behind $u_{i}$. By definition, $u_{i+1}\in \supp(A_{i+1})$ and $P$'s subpath between $u_{i}$ and $u_{i+1}$ is disjoint from $C_{i+1}$, so we can use star graphs at level $i$ to route the head packet from $u_{i}$ to $u_{i+1}$ within $2$ steps. In sum, each of the head and tail packets is routed from the bottom up until they reach $u_{d},v_{d}\in\supp(A_{d})$, and the top-level star graphs can route them together. The total number of steps is $O(d) = O(1/\epsilon)$\footnote{We note that the step bound in \cite{Haeupler2024emulator} is $O(1/\epsilon^{2})$ because they used powers of expander graphs instead of star graphs to avoid creating vertices outside $G$, which brought another $O(1/\epsilon)$ factor.}.

\paragraph{The Obstacle from Vertex Cuts.}
The overall strategy of the above approach is to shortcut flow paths from a vertex of $A_{i}$ to an endpoint of edges in $C_{i+1}$. This was possible since the boundary-linked expander decomposition guarantees that $A_{i}+\deg_{C_{i+1}}$ is expanding. 

In the vertex-capacitated graph, however, the cut $C_{i+1}\subseteq V$ is now a vertex set. To follow the same strategy, we have two natural options. We shortcut flow from a vertex of $A_{i}$ to either (1) a vertex in $C_{i+1}$, or (2) a neighbor of $C_{i+1}$. 

In the first case, the strategy requires that $A_{i}+C_{i+1}$ is expanding in $G-C_{i+1}$. This is trivially impossible because $C_{i+1}$ is not even in the graph $G-C_{i+1}$. In the second case, let $N(C_{i+1})$ denote the neighbors of $C_{i+1}$ that are not in $C_{i+1}$. The strategy requires $A_{i}+N(C_{i+1})$ is expanding in $G-C_{i+1}$. However, possibly $N(C_{i+1})$ is very big and has size $|N(C_{i+1})|=\Omega(n|C_{i+1}|)$. It is unlikely that expander decomposition exists to guarantee the expansion of such a large node-weighting. Even if it exists, we would set $A_{i+1}=N(C_{i+1})$ and, hence, we cannot guarantee $|A_{i+1}|\ll|A_{i}|$. So the number of levels of the hierarchy is unbounded. 

In either option, this overall strategy fails in the vertex-capacitated graphs. At a very high level, this is because edges have two endpoints while vertices may have an unbounded number of neighbors.



\paragraph{Our Approach: Top-Down Analysis without Boundary-linkedness.} We construct a similar hierarchy of LC-vertex expander decomposition \emph{without} boundary-linkedness as follows. Let $G=(V,E)$ be a vertex-unit-capacity graph. Initialize node-weighting $A_{0} = \mathbbm{1}_{V}$. At each level $0\leq i\leq d$, computes a cut $C_{i+1}\subseteq V$ such that $A_{i}$ is $(h,s)$-length $\phi$-vertex-expanding in $G-C_{i+1}$, and set $A_{i+1}:=\mathbbm{1}_{C_{i+1}}$. In particular, the top level $d$ has $C_{d+1} = \emptyset$. The LC-vertex-expander decomposition guarantees $|C_{i+1}|\approx \phi |A_{i}|$, so the number $d$ of levels is $O(1/\epsilon)$ by choosing proper $\phi$.

Next, we construct the shortcut as follows. For each $i$, by the expansion of $A_i$, we can add stars on the support of $A_i$ into our shortcut so that any flows routing $h$-length $A_i$-respecting demands in $G-C_{i+1}$ can be shortcut. 
To analyze the shortcut quality, we will no longer try to route from $A_{i}$ to $A_{i+1}$ as in the edge-capacitated setting, because we no longer have boundary-linkedness guarantee.

Our analysis is instead \emph{top-down}. At each level $i$, we shortcut the current flow path as much as possible, and then the prefix and suffix that have not yet been shortcut will be deferred to lower levels as subproblems. To be more concrete, say our initial goal is to shortcut a flow path $P$ in a feasible $h$-length original flow. At each level $0\leq i\leq d$, assume we will receive a subpath $P'$ of $P$ with length at most $h$ in $G-C_{i+1}$ (note that $P$ is a valid input to the top level $d$ because $C_{d+1}$ is empty). We will shortcut $P'$ using star graphs at levels up to $i$ as follows (see \Cref{fig:shortcut_example_2level} for an illustration when $i=1$).
\begin{figure}[htbp]
    \centering
    \includegraphics[width = 0.7\textwidth]{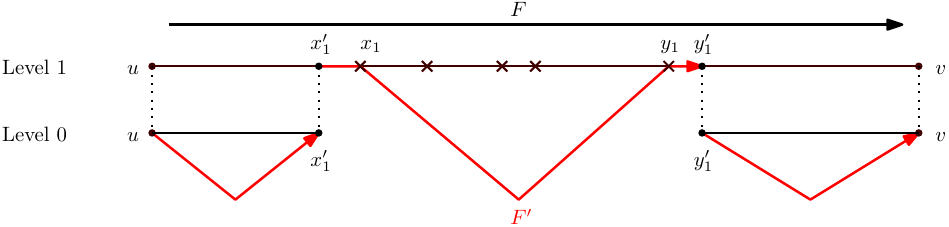}
    \caption{A toy example of forward mapping given we have 2 levels in total. Crossings represent cut vertices in $C_{1}$ along the witness path $P_{u,v}$.}
    \label{fig:shortcut_example_2level}
\end{figure}

\noindent{\underline{Step 1.}} Let $x_{i}$ and $y_{i}$ be the first and last $P'$-vertices in $\supp(A_{i})$ respectively. We can easily shortcut the subpath $P'[x_{i},y_{i}]$ (i.e. the subpath from $x_{i}$ to $y_{i}$) within $2$ steps using the star graphs at level $i$. 

\noindent{\underline{Step 2.}} Let $x'_{i}$ be the $P'$-vertex right before $x_{i}$ and let $y'_{i}$ be the $P'$-vertex right after $y_{i}$. We regard shortcutting the prefix $P'[u,x'_{i}]$ and the suffix $P'[y'_{i},v]$ as two subproblems at level $i-1$, where $u,v$ are endpoints of $P'$. Note that both $P'[u,x'_{i}]$ and $P'[y'_{i},v]$ has length at most $h$ in $G-C_{i}$ because they are disjoint from $C_{i}$ by definition.

\noindent{\underline{Step 3.}} After the recursion, we obtain shortcuts for both $P'[u,x'_{i}]$ and $P'[y'_{i},v]$. The shortcut for $P'$ is given by concatenating shortcuts for $P'[u,x'_{i}], P'[x_{i},y_{i}]$ and $P'[y'_{i},v]$ using two original edges $(x'_{i},x_{i})$ and $(y_{i},y'_{i})$.

It is not hard to see the final step bound is $2^{O(d)} = 2^{O(1/\epsilon)}$ since the recursion has $d$ levels and each level has two branches. We note that the actual argument is more complicated because the cuts $C_{i}$ are actually moving cuts which have fractional cut values, and there is no clear partition of $P'$ into $3$ parts. 


\clearpage

\section{Preliminaries}\label{sec:preliminray}

This section includes preliminaries for directed graphs and vertex-capacitated graphs in \Cref{sect:PreDigraphs} and \Cref{sect:PreVertexGraphs}. We note that in \Cref{sec:LCDE} we consider directed graphs, while in \Cref{sec:LCVE} and \Cref{sec:VertexShortcut} we focus on vertex-capacitated graphs.

We always use $n$ and $m$ to denote the number of vertices and edges of the original graph in the context (for example, the original graph of each main theorem is its input graph). Throughout the paper, all input graphs will have length and capacity functions, and we assume that all lengths and capacities are \emph{positive integers} upper bounded by $N = \poly(n)$. To simplify notation, sometimes we may hide $O(\log N)$ factors in $O(\log n)$ or $n^{\epsilon}$ (where $\epsilon$ is a constant). 


\subsection{Directed Graphs}
\label{sect:PreDigraphs}

Let $G=(V,E)$ be a directed graph with $n := |V|$ vertices and $m:= |E|$ edges. 
Let $\l_G : E \rightarrow \naturalnumbers^+$ denote the edge length function of $G$. A path from vertex $v$ to vertex $w$ is called a $(v,w)$-path. For any path $P$, $\l_G(P) = \sum_{e \in P} \l_G(e)$ and let $|P|$ denote the number of edges in $P$ (we also say $P$ has $|P|$ \emph{steps}). The \emph{distance} between vertices $v$ and $w$ is $\dist_G(v, w) = \min_{P: (v,w)\textrm{-path}} \l_G(P)$.
A \emph{ball} of radius $r$ around a vertex $v$ is a $\ball_G(v,r) = \{ w \mid \dist_G(v,w) \le r\}$.
We further define that $\dist_G(v,w) = \infty$ if vertex $w$ cannot be reached from vertex $v$.

Let $u_G : E \rightarrow \reals_{>0}$ denote the edge capacity function of $G$. 
Sometimes we use $u(e)$ to represent the capacity over edge $e$ if $G$ is explicitly mentioned.
We further define the directed degree of a vertex $v$ in $G$, denoted by $\deg^{in}_G(v)$ and $\deg^{out}_G(v)$, as $\sum_{(w,v) \in E}u_G(w,v)$ and $\sum_{(v,w) \in E}u_G(v,w)$ respectively.
Namely, they represent the capacity summation of incoming edges to $v$ and outgoing edges from $v$.
We use $\deg_G^{sum}(v) = \deg^{in}_G(v) + \deg^{out}_G(v)$ to represent the capacity of all edges incident to $v$.
We use $\deg_G^{min}(v) = \min\{\deg^{in}_G(v), \deg^{out}_G(v)\}$ to measure the minimum of in-degree and out-degree of a vertex $v$. 

\medskip\noindent{\textbf{Multicommodity Flows.}}
A \emph{(multicommodity) flow/routing} in $G$ is a function $F:{\cal P}\to \mathbb{R}_{\geq 0}$ (where ${\cal P}$ denotes the set of simple paths in $G$) that assigns each simple path $P$ in $G$ a flow value $F(P) \geq 0$.  We define $P$ to be a \emph{flow path} of $F$ if $F(P) > 0$. Further $P$ is a \emph{$(v,w)$-flow path} of $F$ if $P$ is both a $(v,w)$-path and a flow path of $F$. Let $\path(F)$ denote the set of all paths $P$ where $F(P) > 0$. The \emph{value} of $F$ is denoted by $\vvalue(F) = \sum_{P\in\path(F)}F(P)$.
We point out that, for each flow path $P\in\path(F)$, we also regard $P$ as a flow with only one flow path. The value of this flow $P$ is $\vvalue(P) = F(P)$ unless otherwise stated.

The \emph{congestion of $F$ on an edge $e$} is $\conge_F(e) = \frac{F(e)}{u_G(e)}$ where $F(e) = \sum_{P: P \ni e} F(P)$ denotes the total flow value of all paths passing through $e$. 
The \emph{congestion} of $F$ is $\conge(F) = \max_{e \in E} \conge_F(e)$. 
We define a flow as feasible if its congestion does not exceed one.
The \emph{step} of $F$ is the maximum step count across all flow paths in $F$, given by $\step(F) = \max_{P \in \path(F)} |P|$. Similarly, the \emph{length} of $F$ is defined as the maximum length of all flow paths of $F$, i.e., $\leng(F) = \max_{P \in \path(F)} \l(P)$. Sometimes we will refer to $\leng(F,G')$ as the length of $F$ in $G'$, where $G'$ is the same graph as $G$ except that $G'$ may have different edge lengths. 

\medskip\noindent{\textbf{Node-Weightings.}}
A \emph{node-weighting} $A:V\rightarrow\mathbb{R}_{\ge0}$ of $G$ assigns a non-negative value $A(v)$ to a vertex $v$. The \emph{size} of $A$ is denoted by $|A|=\sum_{v}A(v)$ and let $\supp(A) := \{v : A(v) > 0\}$. For two node-weightings $A,A'$ we define $\min(A,A')$
and $A+A'$ as pointwise operations, and we write $A \preceq A'$ if $A$ is pointwise at most $A'$.

We further define the volume of a vertex subset $S \subseteq V$ w.r.t a node-weighting $A$ to be $\vol_A(S) = \sum_{v \in S} A(v)$. Note that the degree function is also a node-weighting, 
and we let $\vol_{G}^{sum}(S) = \sum_{v \in S} \deg_{G}^{sum}(v)$.

\medskip\noindent{\textbf{Demands.}}
A \emph{demand} $D:V\times V\rightarrow\mathbb{R}_{\ge0}$ assigns
a non-negative value $D(v,w) \ge 0$ to each ordered pair of vertices in $V$ to specify the units of demand $D(v,w)$ vertex $v$ wants to send to vertex $w$. We note that it is always safe to assume $D(u,u) = 0$.
The size of a demand is written as $|D|$ and is defined as $\sum_{v,w} D(v,w)$. For each vertex $v$, we let $D(v,\cdot) = \sum_{w}D(v,w)$ and $D(\cdot,v) = \sum_{w}D(w,v)$ denote the total demand starting and ending at $v$ respectively. We say a demand $D$ is
\begin{itemize}
    \item \emph{h-length constrained}: if for each $v,w \in V$ with $D(v,w) > 0$, we have $\dist_G(v,w) \le h$;
    \item \emph{symmetric}: if for any pairs of $v,w \in V$, $D(v,w) = D(w,v)$;
    \item \emph{sub-demand for $D'$}: if for any pairs of $v,w \in V$, $D(v,w) \le D'(v,w)$;
    \item \emph{$A$-respecting for some node-weighting $A$}: if for any $v\in V$, $\max\{D(v,\cdot),D(\cdot,v)\}\leq A(v)$.
\end{itemize}

The \emph{demand routed by $F$} (or the \emph{corresponding demand of $F$}), denoted by $\Dem(F)$ or $D_{F}$, has $D_{F}(u,v) = \sum_{\text{$(u,v)$-flow path $P\in\path(F)$}} F(P)$. We say a demand $D$ can be routed with length $h$, congestion $\gamma$ and step $t$ if there is a flow $F$ routing $D$ with length $h$, congestion $\gamma$ and step $t$.

\subsection{Vertex-Capacitated Graphs}
\label{sect:PreVertexGraphs}

An undirected vertex-capacitated graph, denoted by $G=(V,E)$ has lengths and capacities on both edges and vertices. Formally, $G$ has length function $\ell_{G}:V\cup E\to \mathbb{N}^{+}$ and capacity function $u_{G}:V\cup E\to \mathbb{N}^{+}$. We will use $u_{V(G)}$ to denote the restriction of $u_{G}$ on $V$, i.e. the \emph{vertex} capacity function of $G$. Naturally, for any simple path $P$ in $G$, its length $\ell_{G}(P) = \sum_{\text{vertices}\ v\in P}\ell_{G}(v) + \sum_{\text{edges}\ e\in P}\ell_{G}(e)$, but its step $|P|$ is still the number of edges in $P$. For each vertex $v\in V$, its (capacitated) degree is $\deg_{G}(v) = u_{G}(v) + \sum_{e\in E\text{ incident to }v}u_{G}(e)$.



\medskip\noindent{\textbf{Vertex-Capacitated Flows.}}
The notation of flow naturally extends to the setting of vertex-capacitated graphs. The only difference is the definition of congestion. 
We note that the congestion of flow $F$ over a vertex $v$ is the $\text{cong}_F(v) = \frac{F(v)}{u_G(v)}$ where $F(v) = \sum_{P: P \ni v} F(P)$ denotes the total flow value of all flow paths going through $v$. 
Thus the general congestion of flow $F$ also considers the congestion over vertices, i.e.,
\[\conge_F = \max\{\max_{v \in V(G)} \conge_F(v), \max_{e \in E(G)} \conge_F(e)\}\]
We further note that the step of flow $F$ in vertex-capacitated graphs still depends on the number of edges in flow paths.


\medskip\noindent{\textbf{Neighborhood Covers.}}
We will use neighborhood covers in \Cref{sec:VertexShortcut}.
Given a graph $G$ with lengths, a \emph{clustering} $\mcS$ in $G$ is a collection of pairwise disjoint vertex sets $S_1, \cdots, S_{|\mcS|}$, called \emph{clusters}. A \emph{neighborhood cover} $\N$ with \emph{width} $\omega$ and \emph{covering radius} $h$ is a collection of $\omega$ many clusterings $\mcS_1, \cdots, \mcS_\omega$ such that for every node $v$ there exists a cluster $S \in \mcS_i$ such that $\ball(v,h) \subseteq S$, where $\ball(v,h) = \{u\in V(G)\mid \dist_{G}(u,v)\leq h\}$. We use $S \in \N$ to denote that $S$ is a cluster in some clustering of $\N$. The clustering $\N$ has \emph{diameter} $h_{\diam}$ if every cluster $S \in \N$ has diameter at most $h_{\diam}$ i.e., $\max_{u,v \in S} \dist_G(u,v) \le h_{\diam}$. We note that the shortest path between $u, v \in S$ may use vertices that are not in the cluster.

\begin{theorem}[\cite{peleg2000distributed}]\label{thm:nbhCover}
    Given a vertex-capacitated graph $G$ with a length parameter $h$ and an integer $k\geq 1$, there exists a  neighborhood cover $\N$ with covering radius $h$, diameter $h_{\diam} \le (2k-1)\cdot h$ and width $\omega = n^{O(1/k)}k$. 
\end{theorem}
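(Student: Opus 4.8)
The plan is to follow the standard ball-growing / sparse-cover construction of Awerbuch--Peleg, adapted to vertex-capacitated graphs (where "diameter" is measured using the length function $\ell_G$ on both vertices and edges, but where the construction itself only cares about distances, not capacities, so the vertex capacities play no role here). Fix the parameter $k \geq 1$. I would build the $\omega$ clusterings $\mcS_1, \dots, \mcS_\omega$ one at a time; within each clustering the clusters are produced by a region-growing procedure that guarantees two things simultaneously: each produced cluster has radius (hence diameter) $O(kh)$, and each vertex is ``fully covered'' (its entire $h$-ball lies inside some cluster) in a $1/k$-fraction of the clusterings, so that $\omega = n^{O(1/k)}\cdot k$ suffices.

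Concretely, for a single clustering I would process vertices in some order; for an as-yet-uncovered vertex $v$, grow a ball $\ball_G(v, r)$ starting from radius $r = h$ and repeatedly increasing $r$ by $h$ as long as the volume (number of vertices, say) of $\ball_G(v, r+h)$ exceeds $n^{1/k}$ times the volume of $\ball_G(v, r)$. Since volume can multiply by more than $n^{1/k}$ at most $k$ times before exhausting all $n$ vertices, this terminates with some $r \le (k-1)h$, at which point $|\ball_G(v, r+h)| \le n^{1/k}|\ball_G(v, r)|$. I would then add the cluster $S := \ball_G(v, r)$ to the current clustering, mark every vertex $u$ with $\ball_G(u, h) \subseteq \ball_G(v, r)$ as ``covered in this clustering,'' and — crucially — also remove from further consideration in this clustering every vertex in the slightly larger shell $\ball_G(v, r+h)$, so that clusters in the same clustering remain pairwise disjoint (a vertex within distance $h$ of $S$'s center region is close enough that including it elsewhere would violate disjointness, so we defer it to a later clustering). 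The ``payment'' argument is that the set of vertices deferred, namely $\ball_G(v,r+h)\setminus \ball_G(v,r)$, has size at most $(n^{1/k}-1)$ times the size of the committed cluster $\ball_G(v,r)$; charging deferrals to committed clusters shows that after $\omega = O(k n^{1/k}\log n) = n^{O(1/k)}k$ rounds every vertex has been committed (fully covered) in some clustering. Since each cluster is a ball of radius $\le (k-1)h$ around its center, its diameter — in the sense of $\max_{u,v\in S}\dist_G(u,v)$, where the shortest path is allowed to leave $S$ — is at most $2(k-1)h \le (2k-1)h$, and the covering radius is $h$ by construction; the width is the number of clusterings, $\omega = n^{O(1/k)}k$.

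The one point that needs care, and which I expect to be the main (though minor) obstacle, is the bookkeeping that simultaneously delivers all three properties: pairwise-disjointness within a clustering, the full-covering guarantee with the claimed width, and the radius bound — these pull in slightly different directions (a larger deferral shell helps disjointness and covering but one must still bound how much volume is lost per round). The clean way to handle it is the volume-doubling potential argument sketched above: each round either covers a constant fraction (in a multiplicative sense) of the remaining uncovered vertices or makes measurable progress, so $O(k\log n)$ rounds times the $n^{1/k}$ slack per round gives the bound. Everything else — that $\ell_G$-distances satisfy the triangle inequality, that balls are well-defined, that a single ball has the stated radius — is routine. I would also remark that the vertex capacities $u_G$ are irrelevant to this statement: the theorem is purely about the metric $\dist_G$ induced by $\ell_G$, so the proof is identical to the edge-length case with ``length of a path'' redefined to include vertex lengths.
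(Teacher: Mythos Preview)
The paper does not prove this theorem; it simply cites it as a known result from Peleg's book, so there is no paper proof to compare against. Your sketch is the standard Awerbuch--Peleg ball-growing construction and is correct in outline; in particular your observation that vertex capacities are irrelevant (only the metric $\dist_G$ matters) is exactly right.

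One small point worth tightening: your width bound $\omega = O(kn^{1/k}\log n)$ carries an extra $\log n$ factor that the standard analysis does not need, and the asserted equality $O(kn^{1/k}\log n)=n^{O(1/k)}k$ actually fails once $k$ is large (e.g.\ $k=\Theta(\log n)$, where $n^{1/k}=O(1)$ but $\log n$ is not). The clean accounting is that in each clustering every uncovered vertex is either newly covered or lies in some shell $\ball(v,r+h)\setminus\{\text{covered part}\}$ of size at most $(n^{1/k}-1)$ times the covered part, so a constant fraction (namely $\ge n^{-1/k}$) of the remaining uncovered vertices get covered each round; this gives $\omega=O(kn^{1/k})$ directly, matching the stated $n^{O(1/k)}k$. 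Your radius bookkeeping is also off by one step (the argument yields $r\le kh$, not $(k-1)h$), but this is harmless for the $(2k-1)h$ diameter bound once the base case is set up carefully.
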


\clearpage
\section{Length-Constrained Directed Expansion}\label{sec:LCDE}
In this section, we follow the theory of length-constrained expansion and extend it to the setting of directed graphs. 
We start with the generalization of notations from length-constrained expanders in \cref{sec:basic_LCDE}, which serves as the foundation for subsequent results.
Next, we characterize length-constrained expansion in directed graphs with routing in \cref{sec:routing_LCDE}, and show the existence of length-constrained directed expander decomposition in \cref{sec:LCDED}.

\subsection{Basic Concepts of Length-Constrained Directed Expansion}\label{sec:basic_LCDE}
The following definition of moving cuts and separation was introduced by Haeupler, Wajc and Zuzic in \cite{DBLP:conf/focs/HaeuplerWZ20}.
\begin{definition}[Length-Constrained Cut]\label{def:movingcut}
An $h$-length moving cut $C: E \mapsto \{0,\frac{1}{h},\frac{2}{h},\dots,1\}$ assigns to each edge $e$ a fractional cut value between zero and one which is a multiple of $\frac{1}{h}$. The \emph{size} of $C$ is defined
as $|C|=\sum_{e} u(e) \cdot C(e)$. The length increase associated with the $h$-length moving cut $C$ is denoted with $\l_{C,h}$ and defined as assigning an edge $e$ the length increase $\l_{C,h}(e) = h\cdot C(e)$. Any moving cut which
only assigns cut values equal to either $0$ or $1$ is called a pure moving cut. We define the degree of a moving cut over vertex $v$ to be $\deg_{C}(v) = \sum_{e\ni v}u_G(e)\cdot C(e)$.
\end{definition}

\begin{definition}[$h$-Length Separated Demand]
    For any demand $D$ and any $h$-length moving cut $C$, we define the amount of $h$-length separated demand as the sum of demands between vertices that are $h$-length separated by $C$. We denote this quantity with $\sep_h(C,D)$, i.e.,
    \begin{align*}
        \sep_{h}(C,D) = \sum_{u,v : \dist_{G-C}(u,v)>h} D(u,v).
    \end{align*}
\end{definition}

\begin{definition}[$h$-Length Sparsity of a Cut $C$ for Demand $D$]\label{dfn:CDSparse}
For any demand $D$ and any $h$-length moving cut $C$ with $\sep_{h}(C,D)>0$, the $h$-length sparsity of $C$ with respect to $D$ is the ratio of $C$'s size to how much demand it $h$-length separates i.e.,
\begin{align*}
    \spa_{h}(C,D) = \frac{|C|}{\sep_{h}(C,D)}.
\end{align*}
\end{definition}

Above we generalize the definition of length-constrained moving cut w.r.t arbitrary directed $h$-length demand. However, for the definition of a directed length-constrained expander, we restrict to symmetric $h$-length demands.

\begin{definition}[$(h,s)$-Length Sparsity of a Cut w.r.t.\ a Node-Weighting]\label{def:sparsity}
The $(h,s)$-length sparsity of any $h\cdot s$-length moving cut $C$ with respect to a node-weighting $A$ is defined as:
\begin{align*}
    \spa_{(h,s)}(C,A) = \min_{A\text{-respecting h-length symmetric demand}\ D} \spa_{h \cdot s} (C,D).
\end{align*}
\end{definition}
Intuitively, $(h\cdot s)$-length sparsity of a cut measures how much it $h\cdot s$-length separates $h$-length demand w.r.t its own size.
Furthermore, for a given node-weighting, we associate the sparsest cut w.r.t the node-weighting with its conductance.

\begin{definition}[$(h,s)$-Length Conductance of a Node-Weighting]
    The $(h,s)$-length conductance of a node-weighting $A$ in a graph $G$ is defined as the $(h,s)$-length sparsity of the sparsest $h\cdot s$-length moving cut $C$ with respect to $A$, i.e.,
    \begin{align*}
        \cond_{(h,s)}(A) = \min_{h \cdot s\text{-length moving cut } C} \spa_{(h,s)}(C,A).
    \end{align*}
\end{definition}

\begin{definition}[$(h,s)$-Length $\phi$-Expanding Node-Weightings]\label{def:LC directed expansion}
    We say a node-weighting $A$ is $(h,s)$-length $\phi$-expanding if the $(h,s)$-length conductance of $A$ in $G$ is at least $\phi$.
\end{definition}

To see the connection, in \Cref{sec:appendix_directedED}, we explain how our notion of length-constrained directed expansion generalizes the non-length-constrained version of directed expansion.
Lastly, we give the formal definition of length-constrained directed expander decompositions as follows:

\begin{definition}[Length-Constrained Directed Expander Decomposition]
    Given a graph $G = (V,E)$, a directed $(h,s)$-length $\phi$-expander decomposition for a node-weighting $A$ with length slack $s$ and cut slack $\kappa$ is an $h \cdot s$-length cut $C$ of size at most $\kappa \cdot \phi|A|$ such that $A$ is $(h,s)$-length $\phi$-expanding in $G - C$.
\end{definition}

\subsection{Routing Characterization of Length-Constrained Directed Expansion}\label{sec:routing_LCDE}
The definition of $\phi$-expanding characterizes the sparsity of moving cuts in directed graphs.
With the routing characterization, we certify the notation to be meaningful and show that sparsity is closely related to demand routing.
\begin{theorem}[Routing Characterization of Length-Constrained Directed Expanders]\label{thm:routing_LCDE}
Given a directed graph $G$ and node-weighting $A$, for any $h \ge 1$, $\phi < 1$ and $s \ge 1$ we have:
\begin{itemize}
    \item If $A$ is $(h,s)$-length $\phi$-expanding in $G$, then every $A$-respecting $h$-length symmetric demand can be routed in $G$ wth congestion at most $O(\frac{\log N}{\phi})$ and length at most $h \cdot s$.
    \item If $A$ is not $(h,s)$-length $\phi$-expanding in $G$, then some $A$-respecting $h$-length symmetric demand cannot be routed with congestion at most $\frac{1}{2\phi}$ and length at most $\frac{h\cdot s}{2}$.
\end{itemize}
\end{theorem}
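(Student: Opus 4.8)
The plan is to prove the two bullets independently. The second (``not expanding $\Rightarrow$ not routable'') is a short weak-duality counting argument and is responsible for the factor-$2$ slacks in its statement; the first (``expanding $\Rightarrow$ routable'') is an approximate length-constrained maxflow--mincut theorem and carries essentially all the work. I expect the main obstacle to be, inside the first bullet, a Leighton--Rao-style region-growing rounding that must output a valid moving cut -- fractional edge length-increases that are multiples of $\frac{1}{hs}$, with separation measured against the threshold $hs$ -- rather than a classical bipartition; this is the length-constrained refinement of Leighton--Rao, already carried out in the undirected setting in \cite{DBLP:conf/focs/HaeuplerWZ20,haeupler2022hopexpander}, and I will argue that nothing in it uses undirectedness (one works with directed distances $\dist_{G-C}$ and directed balls throughout).

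I dispatch the second bullet first. Assume $A$ is not $(h,s)$-length $\phi$-expanding, so $\cond_{(h,s)}(A)<\phi$. Unfolding the conductance definition and \Cref{def:sparsity}, there are an $hs$-length moving cut $C$ and an $A$-respecting $h$-length symmetric demand $D$ with $\spa_{hs}(C,D)<\phi$, i.e.\ $|C|<\phi\cdot\sep_{hs}(C,D)$ (in particular $\sep_{hs}(C,D)>0$). I claim this $D$ cannot be routed with congestion $\le\frac{1}{2\phi}$ and length $\le\frac{hs}{2}$. Suppose some flow $F$ did so. For every flow path $P$ of $F$ from $u$ to $v$ with $\dist_{G-C}(u,v)>hs$ we have $\ell_{G-C}(P)=\ell_G(P)+hs\sum_{e\in P}C(e)\ge\dist_{G-C}(u,v)>hs$, and since $\ell_G(P)\le\leng(F)\le\frac{hs}{2}$ this forces $\sum_{e\in P}C(e)>\frac{1}{2}$. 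Using $D_F=D$, the total $F$-value carried on flow paths between such separated pairs is exactly $\sep_{hs}(C,D)$, hence
\begin{align*}
  \tfrac{1}{2}\,\sep_{hs}(C,D)
  &\;\le\; \sum_{P\in\path(F)}F(P)\sum_{e\in P}C(e)
  \;=\; \sum_{e}C(e)\,F(e) \\
  &\;\le\; \conge(F)\sum_{e}u(e)\,C(e)
  \;=\; \conge(F)\cdot|C|
  \;\le\; \frac{|C|}{2\phi},
\end{align*}
where we used $F(e)\le\conge(F)\cdot u(e)$. Rearranging yields $\spa_{hs}(C,D)=\frac{|C|}{\sep_{hs}(C,D)}\ge\phi$, contradicting the choice of $C,D$.

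For the first bullet, fix an $A$-respecting $h$-length symmetric demand $D$. First observe that the expansion hypothesis immediately yields a cut statement for $D$ alone: for every $hs$-length moving cut $C$, since $D$ is itself an $A$-respecting $h$-length symmetric demand, \Cref{def:sparsity} gives $\spa_{hs}(C,D)\ge\spa_{(h,s)}(C,A)\ge\phi$. So it suffices to prove: if every $hs$-length moving cut $C$ satisfies $\spa_{hs}(C,D)\ge\phi$, then $D$ is routable with length $\le hs$ and congestion $O(\frac{\log N}{\phi})$. Consider the linear program that minimizes the congestion $\eta$ of a flow routing all of $D$ using only $(u,v)$-paths of $\ell_G$-length at most $hs$ (such paths exist for each demanded pair because $D$ is $h$-length and $s\ge1$). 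By LP duality -- which is insensitive to edge orientations and thus valid for directed $G$ -- its optimum equals
\[
  \eta^{*} \;=\; \max_{z\colon E\to\reals_{\ge0}}\; \frac{\sum_{u,v}D(u,v)\cdot\dist^{\le hs}_{z}(u,v)}{\sum_{e}u(e)\,z(e)},
\]
where $\dist^{\le hs}_{z}(u,v)$ denotes the minimum $z$-weight of a $(u,v)$-path of $\ell_G$-length $\le hs$. Fix an optimal $z^{*}$, normalized so $\sum_e u(e)z^{*}(e)=1$, and run a region-growing argument: grow balls in the combined metric induced by $\ell_G$ together with $z^{*}$ around a heavily demanded vertex, pick a radius where the boundary shell is light relative to the enclosed volume, and let $C_0$ be the $hs$-length moving cut obtained by suitably scaling $z^{*}$ on that shell and rounding its values up to multiples of $\frac{1}{hs}$ (capped at $1$); the usual Leighton--Rao bookkeeping then gives $\spa_{hs}(C_0,D)=O(\log N)/\eta^{*}$. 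Combining with $\spa_{hs}(C_0,D)\ge\phi$ yields $\eta^{*}\le O(\frac{\log N}{\phi})$, i.e.\ $D$ is routable with length $\le hs$ and congestion $O(\frac{\log N}{\phi})$, as desired. The delicate point -- where all the care is needed -- is this last step: checking that the region-growing shell is a bona fide moving cut that $hs$-length separates the intended demand pairs, and that the rounding to $\frac{1}{hs}$-multiples costs only a constant factor. This is precisely the length-constrained version of Leighton--Rao \cite{leighton1999multicommodity} done in \cite{DBLP:conf/focs/HaeuplerWZ20,haeupler2022hopexpander}, and, as in those works, the region-growing and duality arguments are unaffected by orientations.
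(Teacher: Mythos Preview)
Your second-bullet argument is correct and essentially identical to the paper's (\Cref{thm:non_expanding_bad_routing}).

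For the first bullet, your high-level plan (LP duality, then round the dual to a moving cut of sparsity $O(\log N)/\eta^{*}$) matches the paper, but the rounding step you sketch is not what the paper does and, as written, does not obviously work. The paper does \emph{not} use region growing. Instead it keeps the dual in the standard form with per-commodity variables $c_i$ (so the constraints read $\sum_{e\in p}\ell_e\ge c_i$ for every $hs$-length $(s_i,t_i)$-path $p$, together with $\sum_i d_ic_i\ge 1$), applies a one-line bucketing lemma (\Cref{lem:bucketing-lemma}) to the $c_i$'s to extract a subset $I$ with $c_{\min}:=\min_{i\in I}c_i\ge\frac{1}{\alpha\sum_{i\in I}d_i}$ where $\alpha=O(\log d_{\mathrm{ratio}})$, and then sets $C(e)$ to be $\min\{1,\,2\ell_e/c_{\min}\}$ rounded to a multiple of $\frac{1}{hs}$ \emph{uniformly over all edges}. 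The path constraint immediately gives $hs$-separation of every $i\in I$, and $|C|\le 2L/c_{\min}$ together with $\sep_{hs}(C,D)\ge\sum_{i\in I}d_i\ge\frac{1}{\alpha c_{\min}}$ yields $\spa_{hs}(C,D)\le 2\alpha L$. A short preprocessing step bounds $d_{\mathrm{ratio}}$ by $\poly(n,N)$, which is where the $\log N$ enters.

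Your region-growing description has a genuine gap: putting scaled $z^{*}$ values only on a ``shell'' increases lengths solely on paths crossing that shell, so it does not $hs$-separate a demand pair whose short paths avoid the shell; a moving cut must raise the length of \emph{every} $hs$-length path between the pair. If instead you meant ``use the shell radius only to determine a global scaling of $z^{*}$,'' that is essentially the bucketing step in disguise, but then the ball-growing language is misleading and the argument should be rewritten as a threshold selection on the $c_i$'s. Also, your appeal to \cite{DBLP:conf/focs/HaeuplerWZ20,haeupler2022hopexpander} for ``region growing'' is off: those papers use exactly the bucketing-and-global-scaling argument above, not Leighton--Rao ball growing.
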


The proof idea of \cref{thm:routing_LCDE} is similar to the undirected case as shown in \cite{haeupler2022hopexpander}, 
and for completeness, we restate and adapt the proof for the directed setting in \cref{sec:appendix_LC_routing}.

\subsection{Length-Constrained Directed Expander Decomposition: Existence}\label{sec:LCDED}

Now, we prove the existence of length-constrained directed expander decompositions.
The following theorem formally states the result:
\begin{restatable}{theorem}{thmDirectedEdgeED}\label{thm:directed_edge_ED}
    For any $G = (V,E)$, a node-weighting $A$, $h > 1$, $\alpha \ge 1$, $\phi < 1$ and a length slack parameter $s = O(\log n)$, there is a directed $(h, s)$-length $\phi$-expander decomposition for $A$ with cut slack $\kappa = O(n^{O(\frac{1}{s})}\log n)$.
\end{restatable}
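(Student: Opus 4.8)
The plan is to follow the standard "cut-or-certify" recursive decomposition scheme, adapted to the directed length-constrained setting, using the routing characterization (\Cref{thm:routing_LCDE}) as the bridge between cuts and flows. First I would set up the potential: starting from $G$ and node-weighting $A$, we maintain a current moving cut $C$ (initially the all-zero cut) and repeatedly ask whether $A$ is $(h,s')$-length $\phi'$-expanding in $G-C$ for suitable parameters $s' \le s$ and $\phi' \ge \phi$ chosen with appropriate slack. If it is, we are done. If it is not, by definition there exists an $h \cdot s'$-length moving cut $C'$ of sparsity $\spa_{(h,s')}(C',A) < \phi'$, i.e.\ $|C'| < \phi' \cdot \sep_{h s'}(C', D)$ for the witnessing $A$-respecting $h$-length symmetric demand $D$; equivalently, $C'$ separates (at $h s'$-length) a demand of size more than $|C'|/\phi'$. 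We add $C'$ into the running cut $C$ (taking the pointwise max, or summing and capping at $1$, and rescaling so lengths remain multiples of $1/(hs)$), and recurse.

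The second ingredient is the progress/termination argument. The key invariant is that each time we cut, a non-trivial fraction of the remaining "expanding potential" of $A$ is destroyed; the standard way to make this precise is via an entropy or $|A|\log|A|$-type potential, or equivalently by observing that the separated demand $D$ is $A$-respecting, so its size is at most $|A|$, and the portions of $A$ whose demand gets separated can be charged. To keep the total cut size bounded by $\kappa \phi |A|$ one argues that in each round the size added to $C$ is at most $\phi'$ times the demand it separates, and the total demand separated across all rounds (summed with the recursion) is $O(|A| \log n)$ — this is where the $\log n$ factor in $\kappa$ comes from. The depth of recursion is $O(\log n)$ because each round at least halves some measure, and the length slack $s$ must absorb the accumulated length increases across the $O(\log n)$ levels: each level contributes a multiplicative or additive factor to the length, so one sets $s = O(\log n)$ and at each level works with length budget roughly $hs/\log n$ per level, or uses the $n^{O(1/s)}$ slack to trade off. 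The factor $n^{O(1/s)}$ enters precisely because the neighborhood-cover / ball-growing step used to find a sparse cut (when none is certified) loses an $n^{O(1/s)}$ factor in the cut size, analogous to \Cref{thm:nbhCover}.

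I would structure the cut-finding subroutine itself as a ball-growing argument: if no sparse $hs'$-length cut exists, then from every vertex the $h$-balls (in $G-C$) overlap enough that the witnessing symmetric demand routes with small congestion; contrapositively, if some $A$-respecting $h$-length symmetric demand fails to route with congestion $\le 1/(2\phi)$ and length $\le hs/2$, the second bullet of \Cref{thm:routing_LCDE} hands us a sparse moving cut to use. So the decomposition algorithm is really: repeatedly invoke the (approximate) max-flow/min-cut given by \Cref{thm:routing_LCDE} on the current $G-C$ with the worst-case $A$-respecting symmetric demand; either it routes (done, $A$ is expanding with the $O(\log N/\phi)$ congestion guarantee, which we fold into the constant) or it returns a sparse cut to add. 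The directedness is handled by the symmetry restriction in \Cref{def:sparsity}: we only ever route symmetric demands, so the directed cut $C'$ we extract $h s'$-length-separates a symmetric demand, and the usual undirected-style charging on unordered pairs goes through.

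The main obstacle I expect is the bookkeeping of length slack across recursion levels while keeping the cut slack at only $n^{O(1/s)}\log n$: one must carefully nest the length budgets so that a cut found at the top level (which is an $hs$-length cut) does not, when combined with cuts from deeper levels, overshoot the total length increase the final cut $C$ is allowed to induce, and simultaneously ensure that "separated at length $hs'$ in a subproblem" implies "separated at length $hs$ in $G-C$" so that the sparsity charges compose. This is the part where the exact relationship between $s$, the recursion depth $O(\log n)$, and the per-level $n^{O(1/s)}$ loss has to be pinned down; everything else is a routine adaptation of the undirected LC-expander decomposition of \cite{haeupler2022hopexpander,haeupler2024new} to directed graphs via the symmetric-demand routing characterization.
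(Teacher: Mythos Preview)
Your proposal has a genuine gap at the charging step, and also introduces recursion machinery that is not needed and that the paper does not use.

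The paper's proof is a flat iteration, not a recursion: starting from $G$, repeatedly take any $(h\cdot s)$-length cut $C_i$ with $\spa_{(h,s)}(C_i,A)<\phi$ in the current graph $G-\sum_{j<i}C_j$, add it in, and stop when no such cut exists. There are no nested length budgets and no accumulation of length slack across levels; every cut is simply an $h\cdot s$-length cut in the current graph, and their sum (capped at $1$) is still an $h\cdot s$-length cut. So all of your concern about ``bookkeeping of length slack across recursion levels'' is a non-issue, and the routing characterization \Cref{thm:routing_LCDE} is not invoked at all --- the sparse cut exists by definition of non-expansion.

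The real content is the bound $\sum_i |C_i|\le \kappa\phi|A|$, and this is exactly where your argument breaks. You write that ``the total demand separated across all rounds \ldots\ is $O(|A|\log n)$,'' but the witnessing demand $D_i$ at round $i$ is an arbitrary $A$-respecting $h$-length symmetric demand, different at every round; there is no a priori reason the separated amounts should telescope or sum to anything bounded. The paper closes this gap with a specific device you are missing: it defines a single \emph{exponential demand} $D^{\alpha}_{h,A}$ (based on round-trip exponential distance weights $w^{\alpha}_h(u,v)=2^{-\alpha\cdot\rdist(u,v)/h}$) and proves (\Cref{lem:separationexpdemand}) that whenever \emph{any} $A$-respecting symmetric $h$-length demand is $\phi$-sparsely cut by $C$, the exponential demand is $2^{O(\alpha)}\phi$-sparsely cut by the same $C$. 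This lets one work with a single potential $P_i=\sum_u A(u)\ln\big(\sum_b w^{i}_h(u,b)\big)$, which starts at $\le |A|\ln n$, is nonnegative, and drops by $\Omega(2^{-O(\alpha)}|C_i|/\phi)$ at each step. The $n^{O(1/s)}$ in the cut slack then comes not from any ball-growing or neighborhood-cover step as you suggest, but from the factor $2^{O(\alpha)}$ together with the constraint $s>\Theta(\log n/\alpha)$ needed for the exponential-weight argument to go through.
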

The proof of \Cref{thm:directed_edge_ED} again follows the undirected-case proof in \cite{haeupler2022hopexpander}, and we append it
for completeness. The basic idea to prove the above main theorem is that we can continuously find a cut with $(h,s)$-sparsity less than $\phi$ in the graph if it is not an $(h,s)$-length $\phi$-expander for node-weighting $A$. We can apply the cut to the graph and repeat the same procedure. Finally the union of those cuts will render the graph as an $(h,s)$-length $\phi$-expander.

In order to argue the upper bound for the total cut size, we need to consider the summation of each individual cut size. However, the size of each cut depends on the sparsity associated with different demands, which adds complexity to the problem.
Thus, we first introduce a special base demand, called exponential demand, in \cref{sec:exp_demand} to relate all other demands in terms of sparsity. 
Using this, we apply a potential argument to prove the above main theorem in \cref{sec:proof_LCDED}.

In \Cref{sec:appendix_linkedness}, we will discuss \emph{boundary-linked} LC-directed expander decomposition (also called \emph{linked} LC-directed expander decomposition). Expander decompositions with boundary-linkedness have been shown to be very useful in the length-constrained undirected setting and the classic (i.e. non-length-constrained) setting. Hence we include this part which may lead to future applications, although it has no application in our work.

\subsubsection{Exponential Demand}\label{sec:exp_demand}
Exponential demand can be viewed as a worst-case demand because, for any sparse cut with a witnessing demand, it admits comparable sparsity w.r.t that same cut. 
Specifically, we would like to show the following lemma:
\begin{restatable}{lemma}{lemSepExpDemand}\label{lem:separationexpdemand}
    Given a directed graph $G = (V,E)$ and a node-weighting $A$, suppose that some $A$-respecting $h$-length symmetric demand $D$ has $h \cdot s$-length sparsity at most $\phi$ w.r.t. some $h\cdot s$-length moving cut $C$. Then the $\alpha$-exponential demand $\DahA$ has $\frac{h\cdot s}{2}$-length sparsity at most $2^{8\alpha+1}\phi$ w.r.t the same cut $C$.
\end{restatable}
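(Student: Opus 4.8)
The plan is to peel off the definitions, reduce to a purely combinatorial comparison of how much demand the cut $C$ separates for $D$ versus for $\DahA$, and then prove that comparison by re-routing a $2^{-O(\alpha)}$-scaled copy of $D$ ``into'' the exponential demand.

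\textbf{Reduction.} By the hypothesis $\spa_{h\cdot s}(C,D)\le\phi$, i.e. $|C|\le\phi\cdot\sep_{h\cdot s}(C,D)$, and in particular $\sep_{h\cdot s}(C,D)>0$. Since $\spa_{h\cdot s/2}(C,\DahA)=|C|/\sep_{h\cdot s/2}(C,\DahA)$, the lemma follows once we establish the \emph{separation-domination} bound $\sep_{h\cdot s/2}(C,\DahA)\ \ge\ 2^{-(8\alpha+1)}\cdot\sep_{h\cdot s}(C,D)$, because substituting $|C|\le\phi\cdot\sep_{h\cdot s}(C,D)$ then gives $\spa_{h\cdot s/2}(C,\DahA)\le 2^{8\alpha+1}\phi$ (and the quantity is well-defined since $\sep_{h\cdot s/2}(C,\DahA)>0$).

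\textbf{Separation domination.} I would build an auxiliary flow on the complete graph on $V$ with edge capacities $\DahA(u',v')$ that carries, for every pair $(u,v)$ with $D(u,v)>0$, an amount $2^{-O(\alpha)}D(u,v)$ from $u$ to $v$ along \emph{two-hop} paths $u\to w\to v$, respecting these $\DahA$-capacities. The existence of such a flow is the technical heart: for $(u,v)$ with $D(u,v)>0$, symmetry and the $h$-length property of $D$ give $\dist_G(u,v)\le h$ and $\dist_G(v,u)\le h$, while $A$-respecting gives $D(u,v)\le\min\{A(u),A(v)\}$; one spreads $2^{-O(\alpha)}D(u,v)$ over intermediate $w$ proportionally to $\DahA(u,w)\cdot\DahA(w,v)$, and the defining exponential-decay/normalization structure of $\DahA$ — each vertex concentrates a $2^{-O(\alpha)}$ share of its node-weight on its $h$-ball, and the demand weight decays by at most a $2^{O(\alpha)}$ factor per extra $h$ of distance — guarantees that intermediate vertices near $u$ and near $v$ carry enough mass to keep congestion $\le 1$ after the $2^{-O(\alpha)}$ scaling. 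Granting this flow, the bound follows: if in addition $\dist_{G-C}(u,v)>h\cdot s$ (the pairs counted by $\sep_{h\cdot s}(C,D)$), then the triangle inequality in $G-C$ forces every two-hop path $u\to w\to v$ to contain a hop, $(u,w)$ or $(w,v)$, whose $(G-C)$-distance exceeds $h\cdot s/2$; hence all $2^{-O(\alpha)}D(u,v)$ units routed for this pair land on $\DahA$-pairs that are $(h\cdot s/2)$-separated by $C$. Summing over all such pairs and using that the auxiliary flow has $\DahA$-congestion $\le 1$ (so the total flow placed on any $\DahA$-pair $(u',v')$ is at most $\DahA(u',v')$), we obtain $\sep_{h\cdot s/2}(C,\DahA)\ge 2^{-O(\alpha)}\sep_{h\cdot s}(C,D)$, and tracking constants through the concentration estimate pins the factor at $2^{8\alpha+1}$ (the ``$+1$'' being the price of halving the length scale in the two-hop split).

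\textbf{Main obstacle and adaptation to the directed setting.} The crux is the construction of the two-hop auxiliary flow with $\DahA$-congestion $1$ after only a $2^{-O(\alpha)}$ rescaling — this is exactly the sense in which the exponential demand is ``worst case'', and it is where the $2^{8\alpha}$ is generated, via a geometric series in the exponential weights over the $O(1)$ distance scales that matter. Adapting the argument of \cite{haeupler2022hopexpander} to directed graphs requires only care with the asymmetry of distances — one works with out-balls of $u$ and in-balls of $v$ separately, and symmetry of $D$ is precisely what ensures that vertices reachable from $u$ and reaching $v$ are simultaneously abundant — and with $C$ being a fractional moving cut rather than a classical one, so that ``$(h\cdot s)$-separated'' is read through the length increase $\ell_{C,h\cdot s}$; neither alters the structure of the proof, which is why we merely restate the adapted version.
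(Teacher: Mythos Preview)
Your high-level plan matches the paper's proof exactly: reduce to the separation inequality $\sep_{hs/2}(C,\DahA)\ge 2^{-(8\alpha+1)}\sep_{hs}(C,D)$, build an intermediate two-hop demand $D'$ that is pointwise dominated by $2\DahA$, and use the triangle inequality in $G-C$ to show that every hop $(u,b)$ or $(b,v)$ of the rerouting of an $(hs)$-separated pair $(u,v)$ contributes to $\sep_{hs/2}(C,D')$.

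Two technical points where your specifics diverge from the paper are worth flagging. First, the paper does \emph{not} spread proportionally to the product $\DahA(u,w)\cdot\DahA(w,v)$; it assigns to the hop $(u,b)$ and to $(b,v)$ the weight $\widehat D(u,v)\cdot\min\{\DahA(u,b)/A(u),\,\DahA(v,b)/A(v)\}$. The $\min$ is what makes the domination $D'\le 2\DahA$ immediate from $A$-respecting, and the lower bound on total mass rerouted then reduces to the mixing-factor overlap estimate $\sum_{b}\min\{M^\alpha_h(u,b),M^\alpha_h(v,b)\}\ge 2^{-8\alpha}$ (\Cref{lem:mixing_factor}), which is where the constant $2^{8\alpha}$ originates. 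A product-based spread does not give the congestion bound so directly. Second, the directed adaptation is not via separate out-balls of $u$ and in-balls of $v$ as you suggest: the exponential weight $\wah$ is defined from the \emph{round-trip} distance $\rdist(u,v)=\dist(u,v)+\dist(v,u)$, so $\DahA$ is symmetric by construction, and the symmetry of $D$ is used only to guarantee $\rdist(u,v)\le 2h$ for every demand pair, which is the hypothesis of \Cref{lem:mixing_factor}.
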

We will develop the definition of the exponential demand (i.e. \Cref{def:ExponentialDemand}) in the following part and start with a specific weight function base on distance.
\paragraph{Exponential Distance Weight}
For a directed graph $G = (V,E)$, and a length-bound $h$, we define \emph{$h$-length $\alpha$-exponential distance weight} of a vertex $u$ w.r.t.\ vertex $v$ as 
\begin{equation*}
\wah(u,v):= \left\{\begin{array}{ll}
                        1 & u=v\\
           2^{-\alpha\cdot\rdist(u,v)/h} & \rdist(u,v)\le \frac{2h\log_2 n}{\alpha} \\
                        0 & \text{otw.} \end{array}\right.
\end{equation*}
where $1 \le \alpha \le \log n$ and $\rdist(u,v) = \dist(u,v) + \dist(v,u)$ is the round-trip distance between vertex pairs. It immediately follows that the round-trip distance is symmetric, i.e., $\rdist(v,u) = \rdist(u,v)$.

The following lemma serves as an introduction to the property of exponential distance weight.
\begin{lemma}\label{lem:exponential-basics}
For any graph $G$, length-bound $h$, vertices $u,v,w\in V$ the following hold for the $h$-length $\alpha$-exponential distance weights:
\begin{enumerate}
    \item $\sum_{a \in V} \wah(u,a) \le n$.
    \item $\sum_{a \in V} \wah(u,a) \ge 1$.\label{pro:lower}
    \item $\wah(u,v) \ge 2^{-\alpha\cdot\rdist(u,v)/h} - 1/n^2$.\label{pro:single_lower} 
    \item  $2^{-\alpha\cdot\rdist(w,v)/h} \cdot \wah(u,v) - 1/n^2  \le \wah(u,w) \le 2^{\alpha\cdot\rdist(w,v)/h} \cdot (\wah(u,v) + 1/n^2)$\label{pro:closebound}
\end{enumerate}
\end{lemma}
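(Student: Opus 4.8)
The plan is to dispatch the four claims essentially independently, establishing the third before the fourth since the fourth relies on it. Two elementary facts will be used repeatedly. First, $0 \le \wah(u,a) \le 1$ for every pair $u,a$: this is immediate from the three-case definition, since the value is either $1$ (when $a=u$), or $2^{-\alpha\rdist(u,a)/h} \in (0,1]$ (when $\rdist(u,a)$ is below the threshold), or $0$. Second, the round-trip distance obeys the triangle inequality $\rdist(u,w) \le \rdist(u,v) + \rdist(v,w)$, obtained by summing $\dist(u,w) \le \dist(u,v) + \dist(v,w)$ and $\dist(w,u) \le \dist(w,v) + \dist(v,u)$; in particular $|\rdist(u,w) - \rdist(u,v)| \le \rdist(w,v)$. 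Given these, the first two claims are immediate: $\sum_{a \in V} \wah(u,a)$ is a sum of $n$ terms each at most $1$, so it is at most $n$; and the term $a = u$ alone contributes $\wah(u,u) = 1$ while every other term is nonnegative, so the sum is at least $1$.

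For the third claim I would split on which branch of the definition of $\wah(u,v)$ applies. If $u = v$ (so $\rdist(u,v) = 0$) or $\rdist(u,v) \le 2h\log_2 n/\alpha$, then $\wah(u,v) \ge 2^{-\alpha\rdist(u,v)/h}$ holds directly and the additive slack $-1/n^2$ is not even needed. In the remaining case $\rdist(u,v) > 2h\log_2 n/\alpha$ we have $\wah(u,v) = 0$, so it suffices to show $2^{-\alpha\rdist(u,v)/h} \le 1/n^2$, which holds because $\alpha\rdist(u,v)/h > 2\log_2 n$ and hence $2^{-\alpha\rdist(u,v)/h} < 2^{-2\log_2 n} = 1/n^2$.

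The fourth claim is the only part with any bookkeeping, and it would follow by combining the round-trip triangle inequality with the third claim. For the upper bound $\wah(u,w) \le 2^{\alpha\rdist(w,v)/h}\,(\wah(u,v) + 1/n^2)$: if $\wah(u,w) = 0$ it is trivial; otherwise $\wah(u,w) \le 2^{-\alpha\rdist(u,w)/h}$ (with equality when $u \ne w$, and $\wah(u,w) = 1 = 2^0$ when $u = w$), then $\rdist(u,w) \ge \rdist(u,v) - \rdist(w,v)$ yields $2^{-\alpha\rdist(u,w)/h} \le 2^{\alpha\rdist(w,v)/h}\cdot 2^{-\alpha\rdist(u,v)/h}$, and the third claim (rearranged to $2^{-\alpha\rdist(u,v)/h} \le \wah(u,v) + 1/n^2$) closes the chain. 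For the lower bound $2^{-\alpha\rdist(w,v)/h}\,\wah(u,v) - 1/n^2 \le \wah(u,w)$: if $\wah(u,v) = 0$ the left side is $-1/n^2 \le 0 \le \wah(u,w)$; otherwise $\wah(u,v) \le 2^{-\alpha\rdist(u,v)/h}$ (equality in the two nonzero branches), so $2^{-\alpha\rdist(w,v)/h}\,\wah(u,v) \le 2^{-\alpha(\rdist(w,v) + \rdist(u,v))/h} \le 2^{-\alpha\rdist(u,w)/h}$ by the triangle inequality, and the third claim gives $2^{-\alpha\rdist(u,w)/h} - 1/n^2 \le \wah(u,w)$.

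I do not anticipate a genuine obstacle here: the only subtlety is the interaction between the truncation threshold and the additive $1/n^2$ error terms, which is precisely what the third claim is engineered to absorb, so that in the fourth claim one may treat $\wah$ as if it were the untruncated weight $2^{-\alpha\rdist(\cdot,\cdot)/h}$ up to that error. Every remaining step is a one-line case check or a direct use of the two elementary facts above.
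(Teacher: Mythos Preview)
Your proposal is correct and follows essentially the same approach as the paper: parts 1--3 are handled by the same one-line observations, and for part 4 both you and the paper combine the round-trip triangle inequality with part 3, with the paper deriving the lower bound by swapping $v$ and $w$ in the upper bound and rearranging, which is exactly your direct chain written in a different order.
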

\begin{proof}
We provide the proofs as an itemized list corresponding to the statements in the lemma.
\begin{enumerate}
    \item Every vertex in $v$ receives a weight $\wah(u,a)$ of at most one and $|V| = n$.
    \item Follows because $\wah(u,u)=1$.
    \item If $\rdist(u,v) < \frac{2h\log_2 n}{\alpha}$, then it is true for $\wah(u,v) = 2^{-\alpha\cdot\rdist(u,v)/h}$ and $\frac{1}{n^2} > 0$; If $\rdist(u,v) \ge \frac{2h\log_2 n}{\alpha}$, then $2^{-\alpha\cdot\rdist(u,v)/h} \le \frac{1}{n^2}$ and $\wah(u,v) = 0$, which concludes the property.
    \item  By triangle inequality we have $ \rdist(u,w) + \rdist(w,v) = \dist(u,w) + \dist(w,u) + \dist(w,v) + \dist(v,w) \ge \dist(u,v) + \dist(v,u) = \rdist(u,v)$. This further gives $\rdist(u,w) \ge \rdist(u,v) - \rdist(w,v)$, and thus
    \begin{align*}
        2^{-\alpha\cdot\rdist(u,w)/h} & \le 2^{\alpha\cdot\rdist(w,v)/h} \cdot 2^{-\alpha\cdot\rdist(u,v)/h}
    \end{align*}
    We note that $\wah(u,w) \le 2^{-\alpha\cdot\rdist(u,w)/h}$, and by Property~\ref{pro:single_lower}, we upper bound $2^{-\alpha\cdot\rdist(u,v)/h}$ by $\wah(u,v) + 1/n^2$ to get
    \begin{align*}
        \wah(u,w) \le 2^{\alpha\cdot\rdist(w,v)/h} \cdot (\wah(u,v) + 1/n^2)
    \end{align*}
    Since the choice of vertices is symmetric, by swapping vertices $w$ and $v$, we can similarly have that
    \begin{align*}
         \wah(u,v) & \le 2^{\alpha\cdot\rdist(v,w)/h} \cdot (\wah(u,w) + 1/n^2)\\
         2^{-\alpha\cdot\rdist(v,w)/h} \cdot \wah(u,v) & \le \wah(u,w) + 1/n^2\\
         2^{-\alpha\cdot\rdist(v,w)/h}  \cdot \wah(u,v) - 1/n^2 & \le \wah(u,w) 
    \end{align*}
    And note that $\rdist(v,w) = \rdist(w,v)$, which concludes the proof.
\end{enumerate}
\end{proof}

The intuition behind the distance weight is that we assign more weight to closer vertices. 
We let
\begin{align*}
    \wah(u) := \sum_{b \in V} \wah(u,b)
\end{align*}
denote the  \emph{$h$-length $\alpha$-exponential weight} of $u$. 
It can be interpreted as a normalization factor of the exponential weight over vertex $u$, which determines the fraction of demand that $u$ is supposed to send out to any other vertex. 
\paragraph{Mixing Factor}
To explicitly denote this ratio, we define $M^\alpha_{h}(u,v)$ to be the \emph{$h$-length $\alpha$-mixing factor} from vertices $u$ to $v$ as follows:
\begin{align*}
    M^\alpha_{h}(u,v) = \frac{\wah(u,v)}{\wah(u)}
\end{align*}
It immediately follows that $0 \le M^\alpha_{h}(u,v) \le 1$. 

This mixing factor instructs a unit demand between every pair of vertices according to the distance. 
We are interested in the overlap of the mixing factor, in particular, for close enough vertex pairs $u$ and $v$, we count the summation of $\min\{M^\alpha_{h}(u,b), M^\alpha_{h}(v,b)\}$ over every vertex $b \in V$. It turns out that there exists a lower bound for the overlap with the following lemma:

\begin{lemma}\label{lem:mixing_factor}
    Given a directed graph $G = (V,E)$, $1 \le \alpha \le \log n$, for any pair of vertices $u,v$ where $\rdist(u,v) \le 2h$, we have
    \begin{align*}
        \sum_{b\in V} \min\{M^\alpha_{h}(u,b), M^\alpha_{h}(v,b)\} & \ge 2^{-8\alpha}
    \end{align*}
\end{lemma}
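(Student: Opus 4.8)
The plan is to reduce the statement to a computation involving the exponential weights $\wah(u,b)$ and $\wah(v,b)$. First I would observe that it suffices to lower bound $\sum_{b}\min\{\wah(u,b),\wah(v,b)\}$ and then divide by an upper bound on $\max\{\wah(u),\wah(v)\}$. Since $\wah(u)\le n$ by \Cref{lem:exponential-basics}(1), and since $M^\alpha_h(u,b)=\wah(u,b)/\wah(u)$, we have
\begin{align*}
\sum_{b\in V}\min\{M^\alpha_h(u,b),M^\alpha_h(v,b)\} \ge \frac{1}{n}\sum_{b\in V}\min\{\wah(u,b),\wah(v,b)\}.
\end{align*}
So the goal becomes: show $\sum_{b}\min\{\wah(u,b),\wah(v,b)\}\ge n\cdot 2^{-8\alpha}$, i.e. a bound of order $n\cdot 2^{-O(\alpha)}$. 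This is where the hypothesis $\rdist(u,v)\le 2h$ enters: it makes the two weight profiles comparable up to a multiplicative factor $2^{\pm 2\alpha}$.

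Next I would use \Cref{lem:exponential-basics}(4) with $w:=v$ (applied to the weight centered at $u$): for every $b$,
\begin{align*}
\wah(u,b) \ge 2^{-\alpha\cdot\rdist(u,v)/h}\cdot\wah(v,b) - 1/n^2 \ge 2^{-2\alpha}\wah(v,b) - 1/n^2,
\end{align*}
using $\rdist(u,v)\le 2h$. Hence $\min\{\wah(u,b),\wah(v,b)\}\ge 2^{-2\alpha}\wah(v,b) - 1/n^2$ for every $b$ (since $\wah(v,b)\ge 2^{-2\alpha}\wah(v,b)$ too). Summing over $b\in V$ and using $\sum_b\wah(v,b)=\wah(v)\ge 1$ (\Cref{lem:exponential-basics}(2)) and there are $n$ terms of $1/n^2$:
\begin{align*}
\sum_{b\in V}\min\{\wah(u,b),\wah(v,b)\} \ge 2^{-2\alpha}\wah(v) - n\cdot\frac{1}{n^2} \ge 2^{-2\alpha} - \frac{1}{n}.
\end{align*}
The issue is that $\wah(v)$ could be as small as $1$, so this only gives roughly $2^{-2\alpha}$, and dividing by $n$ (the crude bound $\wah(u)\le n$) would give $2^{-2\alpha}/n$, which is far too weak — it is not of the form $2^{-8\alpha}$ with no $n$ dependence. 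So the crude normalization $\wah(u)\le n$ is lossy; the real content of the lemma is that the normalization factor $\wah(u)$ and the ``mass near $u$'' must be compared more carefully, and the truncation radius $\frac{2h\log_2 n}{\alpha}$ is chosen precisely to make this work.

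The fix I anticipate: instead of bounding $\wah(u)\le n$, show directly that a constant fraction (in the $2^{-O(\alpha)}$ sense) of the total weight $\wah(v)$ sits on vertices $b$ that are also within the relevant radius of $u$, and that on those vertices the two weights are within a factor $2^{2\alpha}$. Concretely: let $M=\wah(v)=\sum_b\wah(v,b)$. Since $M^\alpha_h(v,b)=\wah(v,b)/M$, we want $\sum_b\min\{\wah(u,b)/\wah(u),\wah(v,b)/M\}$. Using $\wah(u,b)\ge 2^{-2\alpha}\wah(v,b)-1/n^2$ and also (by symmetry of the hypothesis, swapping $u,v$) a matching bound, one gets $\wah(u)=\sum_b\wah(u,b)\le 2^{2\alpha}M + n\cdot(1/n^2)\le 2^{2\alpha}M + 1/n \le 2^{2\alpha+1}M$ (the last step needs $M\ge 1/n$, which holds since $M\ge 1$). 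Then for each $b$,
\begin{align*}
\min\left\{\frac{\wah(u,b)}{\wah(u)},\frac{\wah(v,b)}{M}\right\}
\ge \min\left\{\frac{2^{-2\alpha}\wah(v,b)-1/n^2}{2^{2\alpha+1}M},\ \frac{\wah(v,b)}{M}\right\}
\ge \frac{2^{-2\alpha}\wah(v,b)-1/n^2}{2^{2\alpha+1}M},
\end{align*}
and summing over $b$ gives
\begin{align*}
\sum_{b\in V}\min\{M^\alpha_h(u,b),M^\alpha_h(v,b)\}
\ge \frac{2^{-2\alpha}M - 1/n}{2^{2\alpha+1}M}
\ge \frac{2^{-2\alpha}M - 2^{-2\alpha}M/2}{2^{2\alpha+1}M}
= \frac{2^{-2\alpha}}{2^{2\alpha+2}} = 2^{-4\alpha-2} \ge 2^{-8\alpha},
\end{align*}
where I used $1/n \le M/2$ (valid as $M\ge 1$ and $n\ge 2$) and $\alpha\ge 1$. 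I would double-check the constants — the target $2^{-8\alpha}$ is generous, so there is slack, and the exact bookkeeping of the $1/n^2$ error terms is the one place to be careful. The main obstacle, as flagged, is resisting the temptation to normalize by the trivial $\wah(u)\le n$ bound; one must instead compare $\wah(u)$ to $\wah(v)$ directly via \Cref{lem:exponential-basics}(4), which is exactly what the $\rdist(u,v)\le 2h$ hypothesis enables.
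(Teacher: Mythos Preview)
Your overall strategy matches the paper's: compare $\wah(u,b)$ to $\wah(v,b)$ and $\wah(u)$ to $\wah(v)$ using the multiplicative shift $2^{\pm 2\alpha}$ that comes from $\rdist(u,v)\le 2h$. The derivation up through
\[
\sum_{b\in V}\min\{M^\alpha_h(u,b),M^\alpha_h(v,b)\}\ \ge\ \frac{2^{-2\alpha}M-1/n}{2^{2\alpha+1}M}
\]
is correct. However, the final step is wrong: to replace $1/n$ by $2^{-2\alpha}M/2$ you would need $1/n\le 2^{-2\alpha}M/2$, not merely $1/n\le M/2$ as you claim. The hypothesis only gives $\alpha\le\log_2 n$ and $M\ge 1$, so if $\alpha=\log_2 n$ then $2^{-2\alpha}M/2$ can be as small as $1/(2n^2)\ll 1/n$; in fact the expression $\frac{2^{-2\alpha}M-1/n}{2^{2\alpha+1}M}$ can itself be negative (take $M=1$, $\alpha=\log_2 n$). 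No choice of a looser target exponent repairs this: you always end up needing $n\gtrsim 2^{2\alpha}$, so your argument as written covers only roughly $\alpha\le\frac{1}{2}\log_2 n$.

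The source of the problem is the additive $1/n^2$ error in \Cref{lem:exponential-basics}(4), which after summing over $n$ vertices becomes $1/n$ and swamps the signal $2^{-2\alpha}M$ when $\alpha$ is large. The paper avoids this by restricting the sum to $b\in B_u\cap B_v$, where $\wah(\cdot,b)=2^{-\alpha\,\rdist(\cdot,b)/h}$ exactly with no additive error, giving the clean bound $\wah(v,b)\ge 2^{-2\alpha}\wah(u,b)$. It then separately shows that $\sum_{b\in B_u\setminus B_v}\wah(u,b)\le 2^{2\alpha}\sum_{b\in B_u\cap B_v}\wah(u,b)$: any $b\in B_u\setminus B_v$ must have $\rdist(u,b)\ge\frac{2h\log_2 n}{\alpha}-2h$, hence $\wah(u,b)\le 2^{2\alpha}/n^2$, while $u\in B_u\cap B_v$ contributes $\wah(u,u)=1$ to the right side. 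This extra decomposition is what keeps the argument uniform over the full range $1\le\alpha\le\log_2 n$.
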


\begin{proof}
    \begin{align*}
        \sum_{b\in V} \min\{M^\alpha_{h}(u,b), M^\alpha_{h}(v,b)\} = \sum_{b\in V} \min\{\frac{\wah(u,b)}{\wah(u)}, \frac{\wah(v,b)}{\wah(v)}\}
    \end{align*}
    We note that only vertex $b$ where $\wah(u,b) > 0$ and $\wah(v,b) > 0$ will contribute to the above summation otherwise the minimum will take the value 0. Then for simplicity of notation, we let
    \begin{align*}
        B_u = \{w \in V \mid \rdist(u,w) \le \frac{2h\log_2 n}{\alpha}\}
    \end{align*}
    denote the ball centered on vertex $u$ with the radius of $\frac{2h\log_2 n}{\alpha}$ in round-trip distance. For any vertex $v \in V\setminus B_u$, we have $\wah(u,v) = 0$.
    Further we can rewrite the summation as 
    \begin{align}
        \sum_{b\in V} \min\{\frac{\wah(u,b)}{\wah(u)}, \frac{\wah(v,b)}{\wah(v)}\} & = \sum_{b\in B_u \cap B_v} \min\{\frac{\wah(u,b)}{\wah(u)}, \frac{\wah(v,b)}{\wah(v)}\}\\
        & \ge \sum_{b\in B_u \cap B_v} \frac{\min\{\wah(u,b), \wah(v,b)\}}{\max\{\wah(u),\wah(v)\}} \label{ineq: min_minmax}
    \end{align}
    For inequality~(\ref{ineq: min_minmax}), we use the fact that $\min\{\frac{a}{b}, \frac{c}{d}\} \ge \frac{\min\{a, c\}}{\max\{b, d\}}$.
    To further bound the above summation, we first build the relationship between weights over vertices $u$ and $v$.
    For the numerator, for vertices $u,v$ and $b \in B_u \cap B_v$, and by triangle inequality that $\rdist(b,u) \ge \rdist(b,v) - \rdist(u,v)$, we have
    \[\wah(b,u) \le 2^{-\alpha\cdot\rdist(b,u)/h} \le 2^{\alpha\cdot\rdist(u,v)/h} \cdot 2^{-\alpha\cdot\rdist(b,v)/h}\]
    Note that we have $\rdist_G(u,v) \le 2h$ and $2^{-\alpha\cdot\rdist(b,v)/h} = \wah(v,b)$ since $b \in B_v$, we actually tighten the inequality and get
    \begin{align*}
        \wah(u,b) & = \wah(b,u) \le 2^{2\alpha} \cdot \wah(v,b)
    \end{align*}
    We rewrite it as
    \begin{align*}
        \wah(v,b) \ge 2^{-2\alpha} \cdot \wah(u,b)
    \end{align*}
    For the denominator, from the same property and the symmetry between vertices $u$ and $v$, we have
    \begin{align*}
        \wah(v) & = \sum_{b \in B_u} \wah(v,b) + \sum_{b \in V\setminus B_u} \wah(v,b)\\
        & \le \sum_{b \in B_u}  2^{2\alpha} \cdot \wah(u,b) + \sum_{b \in V \setminus B_u} 2^{2\alpha} \cdot (\wah(u,b) + \frac{1}{n^2})\\
        & = \sum_{b \in V}  2^{2\alpha} \cdot \wah(u,b) + \sum_{b \in V \setminus B_u} 2^{2\alpha} \cdot \frac{1}{n^2}\\
        & \le 2^{2\alpha} (\wah(u) + \frac{1}{n})\\
        & \le 2^{2\alpha+1}\wah(u)
    \end{align*}
    where we use the fact that $\frac{1}{n} < \wah(u)$.


    Now combine two inequalities for the numerator and the denominator respectively we have
    \begin{align*}
        \sum_{b\in B_u \cap B_v} \frac{\min\{\wah(u,b), \wah(v,b)\}}{\max\{\wah(u),\wah(v)\}} & \ge \sum_{b\in B_u \cap B_v} \frac{ \min\{\wah(u,b), 2^{-2\alpha}\cdot \wah(u,b)\}}{\max\{\wah(u),2^{2\alpha+1} \wah(u)\}}\\
        & = \frac{1}{2^{4\alpha+1}} \cdot \frac{\sum_{b\in B_u \cap B_v} \wah(u,b)}{\wah(u)}\\
        & = \frac{1}{2^{4\alpha+1}} \cdot \frac{\sum_{b\in B_u \cap B_v} \wah(u,b)}{\sum_{b\in B_u} \wah(u,b)}\\
        & = \frac{1}{2^{4\alpha+1}} \cdot \frac{\sum_{b\in B_u \cap B_v} \wah(u,b)}{\sum_{b\in B_u \cap B_v} \wah(u,b) + \sum_{b\in B_u \setminus B_v} \wah(u,b)} 
    \end{align*}
    Note that in the second equation, we rewrite the denominator since only vertices in the $B_u$ contribute to the $\wahA(u)$. We divide it into two parts in the last equation and provide a bound between them in what follows.
    
    For a vertex $b \in B_u \setminus B_v$, we know that it should be far away from vertex $v$, but close enough to vertex $u$. However, under the restriction that $\rdist(u,v) \le 2h$, $b$ has to lie near the boundary of $B_u$, namely $\rdist(u,b) \ge \frac{2h\log_2 n}{\alpha} - 2h$, and thus $\wah(u,b) \le 2^{2\alpha} \cdot \frac{1}{n^2}$. Consequently,
    \begin{align}
        \sum_{b\in B_u \setminus B_v} \wah(u,b) & \le  \sum_{b\in B_u \setminus B_v} 2^{2\alpha} \cdot \frac{1}{n^2}\\
        & \le 2^{2\alpha} \cdot \frac{1}{n} \label{ineq: upperbound_N}\\
        & \le 2^{2\alpha} \cdot \wah(u,u) \label{ineq: lowerbound_N}\\
        & \le 2^{2\alpha} \cdot \sum_{b\in B_u \cap B_v} \wah(u,b) \label{ineq:BuvIntersect}
    \end{align}
    In the inequality~(\ref{ineq: upperbound_N}), we use the assumption that 
    $|V| \leq n$.
    Then in the next inequality~(\ref{ineq: lowerbound_N}), 
    we have $\frac{1}{n} \leq 1 = \wah(u,u)$ The inequality~(\ref{ineq:BuvIntersect}) is because $u \in B_{v}$
    (since $\alpha\leq \log_{2}n$).
    This further concludes that
    \begin{align*}
        \frac{\sum_{b\in B_u \cap B_v} \wah(u,b)}{\sum_{b\in B_u \cap B_v} \wah(u,b) + \sum_{b\in B_u \setminus B_v} \wah(u,b)} \ge \frac{1}{2^{2\alpha} + 1}
    \end{align*}

    And finally, we are able to show that
    \begin{align*}
        \sum_{b\in V} \min\{M^\alpha_{h}(u,b), M^\alpha_{h}(v,b)\} \ge \frac{1}{2^{4\alpha+1}} \cdot \frac{1}{2^{2\alpha} + 1} \ge 2^{-8\alpha}
    \end{align*}
\end{proof}

The above lemma shows that we can have a mixing factor with favorable overlap in the graph $G$. The mixing factor can be viewed as an instructor of how we send out the demand in the graph. 
It further helps define the exponential demand more easily as follows:

\begin{definition}[$\alpha$-Exponential Demand w.r.t $h$-length]
\label{def:ExponentialDemand}
   In a directed graph $G = (V, E)$, the $\alpha$-exponential demand $\DahA$ w.r.t $h$-length and a node-weighting $A$ is the demand defined as:
   \begin{align*}
       \DahA(u,v) & = A(u)\cdot M^\alpha_{h}(u,v) + A(v)\cdot M^\alpha_{h}(v,u) 
   \end{align*}
\end{definition}
For simplicity, we sometimes hide the parameter $\alpha$ and refer to the demand as the exponential demand.
\begin{remark}
    We note that for vertex pair $u, v \in V$ that is far away from each, i.e. $\rdist(u,v) > 2h \log n / \alpha$, $\DahA(u,v)$ is zero.
    Since $2h\log n/ \alpha > h$, the $\alpha$-exponential demand is actually not an $h$-length demand.
    Furthermore, suppose there are vertices $u$ and $v$ such that $A(u) \ll A(v)$. From the symmetric construction, exponential demand might send more value from $u$ than $A(u)$, thereby failing to respect node-weighting $A$.
    In conclusion, the $\alpha$-exponential demand is neither guaranteed to be an $h$-length demand nor an $A$-respecting demand, but we will see in the following that it works like a worst-case demand and connects with other $h$-length demands in terms of sparsity.
\end{remark}

\begin{corollary}\label{cor:sym_expD}
    The $\alpha$-exponential demand $\DahA$ is symmetric,
    \begin{align*}
        \DahA(u,v) = \DahA(v,u)
    \end{align*}
\end{corollary}


As mentioned in the beginning, the reason we are interested in such an exponential demand is that it evenly mixed in the entire graph such that it can be regarded as a worst-case demand.
We are now ready for the proof of the main lemma in this section.

\lemSepExpDemand*

\begin{proof}
    We first note that only separated demand pairs in $D$ are involved in the sparsity w.r.t cut $C$, thus we restrict to a sub-demand $\widehat{D}$ of $D$ where $\widehat{D}(u,v) = D(u,v)$ if $\dist_{G-C}(u,v) > h\cdot s$, otherwise $\widehat{D}(u,v) = 0$. It naturally follows that
    \begin{align*}
        \sep_{h\cdot s}(C,D) = \sep_{h\cdot s}(C,\widehat{D}) = |\widehat{D}| \ge \frac{|C|}{\phi}.
    \end{align*}

    Further, we construct an intermediate demand $D'$ to relate the exponential demand $\DahA$ to the fully separated demand $\widehat{D}$. This helps to build a connection between their sparsity w.r.t $C$.
    Intuitively, we decompose and reroute every demand pair in $\widehat{D}$ according to the configuration and size of exponential demand $\DahA$.
    Specifically, for any demand $\widehat{D}(u,v)$ in $\widehat{D}$, for every vertex $b \in V$, we add $\widehat{D}(u,v) \cdot \min\{ \frac{\DahA(u,b)}{A(u)}, \frac{\DahA(v,b)}{A(v)}\}$ to $D'(u,b)$ and $D'(b,v)$.
    From such a construction, the value of $D'(u,v)$ comes from non-zero demand $\widehat{D}(u,w)$ and $\widehat{D}(w,v)$, and in total, we have that
    \begin{align*}
        D'(u,v) = & \sum_{w:\widehat{D}(u,w)>0} \widehat{D}(u,w) \cdot \min\{\frac{\DahA(u,v)}{A(u)}, \frac{\DahA(w,v)}{A(w)}\}\\
        + & \sum_{w:\widehat{D}(w,v)>0} \widehat{D}(w,v) \cdot \min\{\frac{\DahA(w,u)}{A(w)}, \frac{\DahA(v,u)}{A(v)}\}
    \end{align*}
    In other words, demand $D'$ depends on the fully separated demand $\widehat{D}$ and the exponential demand.
    By the property that $\widehat{D}$ is an $A$-respecting demand, it turns out that demand $D'$ is a sub-demand for $2\cdot \DahA$.
    To see this, we note that $\min\{a,b\} \le a$ and $\min\{a,b\} \le b$ for any $a,b \in \reals$, then we have,
    \begin{align}
        D'(u,v) & \le \sum_{w:\widehat{D}(u,w)>0} \widehat{D}(u,w) \cdot \frac{\DahA(u,v)}{A(u)} + \sum_{w:\widehat{D}(w,v)>0} \widehat{D}(w,v) \cdot \frac{\DahA(v,u)}{A(v)}\\
        & = \sum_{w:\widehat{D}(u,w)>0} \frac{\widehat{D}(u,w)}{A(u)} \cdot \DahA(u,v) + \sum_{w:\widehat{D}(w,v)>0} \frac{\widehat{D}(w,v)}{A(v)} \cdot \DahA(u,v) \label{eq:sym_D}\\
        & \le 2\cdot \DahA(u,v)\label{ineq:sub2D}
    \end{align}
    We use symmetry in equation~(\ref{eq:sym_D}) where $\DahA(v,u) = \DahA(u,v)$. And for the last inequality~(\ref{ineq:sub2D}), we use the fact that $\sum_{w:\widehat{D}(u,w)>0} \widehat{D}(u,w) \le A(u)$ since $\widehat{D}$ is $A$-respecting. Since the inequality applies to every vertex pair $u, v \in V$, it confirms that $D'$ is a sub-demand for $2\cdot \DahA$. This directly gives that $\sep_{h\cdot s/2}(C, \DahA) \ge \frac{1}{2}\sep_{h\cdot s/2}(C, D')$. 

    Then we show that the $\sep_{h\cdot s/2}(C, D')$ is at least some fraction (dependent on $\alpha$) of the amount of the fully separated demand $\widehat{D}$ w.r.t the same cut $C$. For each demand $\widehat{D}(u,v)$ that contributes to the $\sep_{h\cdot s}(C, \widehat{D})$, we have $\dist_{G-C}(u,v) > h\cdot s$. As a result, for any other vertex $b$, we have either $\dist_{G-C}(u,b) > \frac{h\cdot s}{2}$ or $\dist_{G-C}(b,v) > \frac{h\cdot s}{2}$. Then either from $D'(u,b)$ or from $D'(b,v)$, the amount of $\widehat{D}(u,v) \cdot \min\{\frac{\DahA(u,b)}{A(u)}, \frac{\DahA(v,b)}{A(v)}\}$ is contributed to $\sep_{h\cdot s/2}(C, D')$. If we sum up all vertices $b$ from the vertex set, in total
    \begin{align}
        \sum_{b \in V} \widehat{D}(u,v) \cdot \min\{\frac{\DahA(u,b)}{A(u)}, \frac{\DahA(v,b)}{A(v)}\} & \ge \widehat{D}(u,v) \cdot \sum_{b \in V}\min\{M^\alpha_{h}(u,b), M^\alpha_{h}(v,b)\} \\ 
        & \ge 2^{-8\alpha}\widehat{D}(u,v)\label{eq:mixing_factor}
    \end{align}
    is contributed to the overall separated amount of $D'$ for a single pair $(u,v)$. 
    The reason that we can apply \cref{lem:mixing_factor} to get inequality~(\ref{eq:mixing_factor}) is that $D$ is a symmetric $h$-length demand, and thus $\dist(u,v) \le h$ and $\dist(v,u) \le h$, which means $\rdist(u, v) \le 2h$.
    After summing up all pairs, we have
    \begin{align*}
        \sep_{h\cdot s/2}(C, D') \ge \sum_{u,v \in V} 2^{-8\alpha}\widehat{D}(u,v) = 2^{-8\alpha}|\widehat{D}| = 2^{-8\alpha}\sep_{h\cdot s}(C, \widehat{D})
    \end{align*}
    And finally we can conclude that 
    \begin{align*}
        \sep_{h\cdot s/2}(C, \DahA) \ge \frac{1}{2} \sep_{h\cdot s/2}(C, D') \ge 2^{-8\alpha-1}\sep_{h\cdot s}(C, \widehat{D}) = 2^{-8\alpha-1}\sep_{h\cdot s}(C, D).
    \end{align*}
    This gives the lemma.
\end{proof}
The above lemma helps to relate every demand associated with a sparse cut to the exponential demand. In other words, whenever we are faced with a sparse cut, we can stick with the exponential demand for analysis by only losing a factor of $\exp(\alpha)$.

\subsubsection{Existential Proof of the Decomposition}\label{sec:proof_LCDED}
Finally we will show the existence of length-constrained expander decompositions. 
For the existence of the length-constrained expander decomposition for a graph $G$ w.r.t some node-weighting $A$, we can find sparse cuts iteratively from the graph. Namely, if the graph $G$ is not an expander, it is guaranteed to admit a sparse moving cut $C$. We can apply this cut to the graph and get a new graph $G' = G - C$. 
This can be done iteratively until the updated graph is already expander, or in other words, there does not exist any sparse moving cut.
This gives a sequence of moving cuts, and we can combine them as a single moving cut to show the existence of expander decompositions.

We first formally describe the sequence of moving cuts as follows:

\begin{definition}[Sequence of Moving Cuts]\label{dfn:seq_moving_cuts}
Given a directed graph $G = (V, E)$, and node-weighting $A$, let $(C_1, C_2, \dots,C_n)$ be a sequence of $h \cdot s$ moving cuts, let $G - \sum_{j < i} C_j$ denote the graph that is applied with cuts from $C_1$ to $C_{i-1}$. 
We define $(C_1, C_2, \dots,C_n)$ as a sequence of $\phi$-sparse moving cuts
if and only if the $(h,s)$-length sparsity of $C_i$ w.r.t $A$ in $G - \sum_{j < i} C_j$ is at most $\phi$.
\end{definition}

It would be less interesting if those moving cuts have very large size, thus it is important to bound the overall size of those moving cuts.
\begin{lemma}\label{lem:existential_cut_sequence}
Let $C_1, \ldots, C_k$ be an sequence of $\phi$-sparse $h\cdot s$-length
cuts for some node-weighting $A$ in the graph $G$ where $h > 1$, $\phi < 1$, $1 \le \alpha \le \log n$ and
$s > \frac{4\log_2 n}{\alpha}$, then $\sum_{i} |C_i| \leq (2^{8\alpha+2}\phi\ln n)\cdot |A|$.
\end{lemma}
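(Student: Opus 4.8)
The plan is to set up a potential function $\Phi_i := \wah(A)$ measured in the graph $G_i := G - \sum_{j<i} C_j$, where $\wah(A) := \sum_{u\in V} A(u)\cdot \wah(u)$ is the total $h\cdot s/2$-length $\alpha$-exponential weight of the node-weighting $A$. Here the exponential distance weights are computed with respect to the \emph{current} distances $\dist_{G_i}$, so the potential can only decrease as we add cuts (cutting only increases distances, which only decreases each $\wah(u,v)$). I would first record the trivial bounds $|A| \le \Phi_1 \le n|A|$, coming from Lemma~\ref{lem:exponential-basics}(1)--(2) applied vertex-wise and weighted by $A$; these give the $\ln n$ factor in the end.

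The heart of the argument is to show that applying one sparse cut $C_i$ drops the potential by a definite multiplicative amount relative to the cost incurred, namely $\Phi_i - \Phi_{i+1} \ge c\cdot |C_i|$ for a constant $c = \Omega(2^{-8\alpha})$. To do this I would invoke Lemma~\ref{lem:separationexpdemand}: since $C_i$ is $\phi$-sparse with respect to \emph{some} $A$-respecting $h$-length symmetric demand $D$ in $G_i$, the $\alpha$-exponential demand $\DahA$ (computed in $G_i$ with length bound $h\cdot s/2$, so that the relevant scale is $(h\cdot s/2)$) has $\frac{h\cdot s}{4}$-length sparsity at most $2^{8\alpha+1}\phi$ with respect to $C_i$ in $G_i$ — i.e.\ $\sep_{h\cdot s/4}(C_i, \DahA) \ge \frac{|C_i|}{2^{8\alpha+1}\phi}$. (One has to be slightly careful about which length scale Lemma~\ref{lem:separationexpdemand} is being applied at; I would apply the construction at scale $h\cdot s$ in $G_i$ so the conclusion lands at $h\cdot s/2$, matching the scale at which $\Phi$ is defined, possibly adjusting the constant in the statement of the lemma or re-deriving it at the needed scale. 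The condition $s > 4\log_2 n/\alpha$ is exactly what is needed so that $\rdist(u,v)\le 2\cdot(h\cdot s/2) = h\cdot s$ stays inside the support radius $2(h\cdot s/2)\log_2 n/\alpha$ of the exponential weights, and more importantly so that the $h$-length pairs of $D$ have $\rdist \le 2h \le h\cdot s/2$ — this is the hypothesis of Lemma~\ref{lem:mixing_factor}.) Then I claim each unit of $\DahA$-demand that is $(h\cdot s/4)$-length separated by $C_i$ corresponds to a genuine loss in the potential: if $\DahA(u,v)$ contributes to $\sep_{h\cdot s/4}(C_i,\DahA)$, then $\dist_{G_{i+1}}(u,v) + \dist_{G_{i+1}}(v,u) = \rdist_{G_{i+1}}(u,v)$ has increased past $h\cdot s/4$ (in at least one direction), so $\wah(u,v)$ and $\wah(v,u)$ in $G_{i+1}$ are at least a factor $2^{-\alpha/4}$ smaller — no wait, they drop to at most $2^{-\alpha(h\cdot s/4)/(h\cdot s/2)} = 2^{-\alpha/2}$ times their old value, a constant-factor drop. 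Summing over all separated pairs, using $\DahA(u,v) = A(u)M^\alpha(u,v)+A(v)M^\alpha(v,u)$ and the definition $\Phi = \sum_u A(u)\wah(u) = \sum_{u,v} A(u)\wah(u,v)$, gives $\Phi_i - \Phi_{i+1} \ge (1-2^{-\alpha/2})\cdot \frac{1}{2}\sep_{h\cdot s/4}(C_i,\DahA)\cdot(\text{const}) \ge \Omega(2^{-8\alpha})|C_i|$.

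Finally, telescoping: $\sum_i |C_i| \le O(2^{8\alpha})\sum_i (\Phi_i - \Phi_{i+1}) \le O(2^{8\alpha})\Phi_1 \le O(2^{8\alpha})\,n|A|$. That is off by a factor of $n$ versus the claimed $2^{8\alpha+2}(\ln n)|A|$, so the standard fix is the usual "bucketing / geometric" refinement: either observe that once $\Phi_i$ has dropped below $|A|$ it is $0$ (so the process terminates), and split the run of cuts into $O(\log n)$ phases indexed by $\Phi_i \in [2^{j}|A|, 2^{j+1}|A|)$, within each of which a \emph{relative} decrease argument $\Phi_{i+1} \le (1 - \Omega(2^{-8\alpha}|C_i|/\Phi_i))\Phi_i$ bounds the total cut size in that phase by $O(2^{8\alpha})\cdot 2^{j}|A|$ but more sharply the multiplicative drop means the cuts in one phase sum to $O(2^{8\alpha}\cdot\text{(phase length in }\log\text{)})$; summing the geometric-type series over $j$ from $0$ to $\log_2 n$ collapses the stray $n$ into a $\ln n$.

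\textbf{Main obstacle.} The delicate point is not the telescoping but matching length scales and getting the $\ln n$ (rather than $n$) bound. Concretely: (i) Lemma~\ref{lem:separationexpdemand} is stated with a fixed halving of the length parameter, and I need the exponential demand and the potential to live at a \emph{consistent} scale so that "separated by $C_i$" translates into "potential dropped"; I expect to re-run the mixing-factor computation (Lemma~\ref{lem:mixing_factor}) at scale $h\cdot s$ with the hypothesis $s > 4\log_2 n/\alpha$ doing the work of keeping $\rdist(u,v)\le 2h$ within the non-vanishing regime. (ii) The potential argument naturally yields $\ln(\Phi_1/|A|) = O(\log n)$ as the number of "effective" halvings, and the standard way to turn the additive-telescoping $O(2^{8\alpha})n|A|$ into $O(2^{8\alpha}\log n)|A|$ is to prove the stronger statement $\Phi_{i+1}/\Phi_i \le \exp(-\Omega(2^{-8\alpha})|C_i|/\Phi_i)$ and integrate, i.e.\ $\sum_i |C_i| = \sum_i |C_i| \le \frac{O(2^{8\alpha})}{1}\sum_i \Phi_i\ln(\Phi_i/\Phi_{i+1})/\!\!\ldots$ — concretely bounding $|C_i| \le O(2^{8\alpha})(\Phi_i - \Phi_{i+1}) \le O(2^{8\alpha})\Phi_i\ln(\Phi_i/\Phi_{i+1})$ and then $\sum_i \ln(\Phi_i/\Phi_{i+1}) = \ln(\Phi_1/\Phi_{k+1})$, combined with $\Phi_i \le n|A|$ throughout, gives $\sum_i|C_i| \le O(2^{8\alpha})\cdot n|A| \cdot$ — hmm, this still needs the per-step relative drop to be used, not the absolute one. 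The cleanest route I would actually take: define $t$ so that the process stops when $\Phi < |A|$ (which must happen since then $\DahA$ can't be separated), bound the number of indices where $\Phi_i \ge 2\Phi_{i+1}$ by $\log_2 n$, and on the remaining "slow" steps use $|C_i| \le 2^{8\alpha+1}\phi\cdot\sep \le 2^{8\alpha+1}\phi\cdot(\Phi_i - \Phi_{i+1})\cdot O(1)$ and the fact that the sum of $(\Phi_i-\Phi_{i+1})$ over a maximal slow run is at most $\Phi$ at the start of that run — which is why we get $2^{8\alpha+2}\phi\cdot(\text{number of runs})\cdot|A| = 2^{8\alpha+2}\phi\ln n\cdot|A|$. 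Getting the constants and the run-counting exactly to land on $2^{8\alpha+2}\phi\ln n$ is the fiddly part.
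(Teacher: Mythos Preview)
Your overall plan—a potential argument built on the exponential distance weights, with the per-step drop controlled via Lemma~\ref{lem:separationexpdemand}—is exactly the paper's. But two specific technical choices are wrong, and they are precisely why you end up with a stray factor of $n$ that your bucketing attempts cannot remove.

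\textbf{Use the logarithmic potential.} The paper takes $P_i(u)=A(u)\ln w_i(u)$ (with $w_i = w^\alpha_h$ computed in $G_i$), not your linear $\Phi_i=\sum_u A(u)w_i(u)$. The point is that $\sum_u P_1(u)\le |A|\ln n$ (since each $w_1(u)\le n$), so the $\ln n$ is already the size of the initial potential and falls out of a single telescoping sum; no phasing or run-counting is needed. Concretely, from $-\ln(1-x)\ge x$ one gets $P_i(u)-P_{i+1}(u)\ge A(u)\cdot\frac{w_i(u)-w_{i+1}(u)}{w_i(u)}$, and the denominator $w_i(u)$ is exactly the normalization that converts $A(u)\,w_i(u,v)$ into the mixing-factor term $A(u)M^\alpha_h(u,v)$ appearing in the exponential demand. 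With the linear potential you still obtain a valid lower bound on the drop, but the initial potential is $n|A|$, and no rearrangement recovers the lost factor: your claim that ``once $\Phi_i<|A|$ it is $0$'' is false since $w_i(u)\ge 1$ forces $\Phi_i\ge|A|$ throughout, and both the geometric-phase and the relative-drop ideas you sketch collapse back to the same $n|A|$ bound when carried out.

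\textbf{Take the exponential weight at scale $h$, not $hs/2$, so that separated pairs drop to zero.} Lemma~\ref{lem:separationexpdemand} already delivers $\sep_{hs/2}(C_i,\DahA)\ge 2^{-8\alpha-1}|C_i|/\phi$ with $\DahA$ at scale $h$. For any pair contributing to this separation, $\dist_{G_{i+1}}(u,v)>hs/2$, hence $\rdist_{G_{i+1}}(u,v)>hs/2>\tfrac{2h\log_2 n}{\alpha}$ by the hypothesis $s>\tfrac{4\log_2 n}{\alpha}$, and therefore $w_{i+1}(u,v)=0$ \emph{exactly}. That is the actual role of the assumption on $s$; it is not about keeping the original $h$-length pairs inside the support, as you suggested. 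With the weight vanishing, one gets directly $\sum_u (P_i(u)-P_{i+1}(u))\ge\sum_{\text{separated ordered }(u,v)} A(u)\tfrac{w_i(u,v)}{w_i(u)}=\tfrac12\sep_{hs/2}(C_i,D_i)\ge 2^{-8\alpha-2}|C_i|/\phi$, and telescoping against $\sum_u P_1(u)\le|A|\ln n$ gives the stated bound in one line. At your chosen scale $hs/2$ with separation at $hs/4$, the support radius is $(hs)\log_2 n/\alpha \gg hs/4$, so the weight does not vanish, and the ``constant-factor drop'' you compute neither follows (you only bound the new weight absolutely, not relative to the old one) nor suffices.
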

\begin{proof}
    Let $G_1$ denote the initial graph $G$, $G_i = G - \sum_{j < i} C_j$.
    Inspired by the \cref{lem:separationexpdemand}, we will introduce the exponential demand for each graph $G_i$ w.r.t the same node-weighting $A$. To avoid clutter, let $w_{i}$ denote the exponential distance weight $w^{\alpha}_{h}$ with respect to the graph $G_i$. 
    We further use $D_{i}$ to denote the corresponding exponential demand $D^i_{h, A}$ w.r.t graph $G_i$.
    
    In the graph $G_{i}$, by \cref{lem:separationexpdemand} and the fact that $\spars_{(h,s)}(C,A)\leq \phi$ (meaning there exists a symmetric $h$-length $A$-respecting demand $D^{*}_{i}$ such that $\spars_{hs}(C,D^{*}_{i})\leq \phi$), we have
    \begin{align*}
        \sep^i_{h\cdot s/2}(C_i, D_{i}) \ge 2^{-8\alpha-1} \cdot \frac{|C_i|}{\phi}.
    \end{align*}

    Further, we define a potential function $P_i: V \to \reals$ w.r.t the graph $G_i$. It assigns a value to each vertex $u$ with the amount of $P_i(u) =  A(u)\ln(w_{i}(u))$. We note the fact that $w_{i}(u) \ge w_{i}(u,u) \ge 1$, which guarantees that $P_i(u) \ge 0$ for all $i$ and vertices $u$. 

    
    Start with graph $G_i$, each vertex $u$ will have potential $P_i(u)$. After applying cut $C_i$ to the graph $G_i$, we first get the resulting graph $G_{i+1}$ with same node-weighting $A$. 
    we have $P_{i+1}(u) = A(u)\ln(w_{i+1}(u))$. Since we only increase the length of some edges in $G_i$, the exponential weight can only decrease between any vertex pairs. Consequently, there is always a decrease from $P_i(u)$ to $P_{i+1}(u)$, and we have
    \begin{align}
        P_i(u) - P_{i+1}(u) & = A(u)\cdot (\ln(w_{i}(u)) - \ln(w_{i+1}(u)))\\
         & = A(u) \cdot (- \ln(1 - (1 - \frac{w_{i+1}(u)}{w_{i}(u)})) )\\
        & \ge A(u) \cdot (1 - \frac{w_{i+1}(u)}{w_{i}(u)}) \label{ineq: log_ineq}\\
        & \ge A(u) \cdot ( \frac{w_{i}(u) - w_{i+1}(u)}{w_{i}(u)})
    \end{align}
    We use the fact that $-\ln (1 - x) \ge x$ when $0 \le x < 1$ for inequality~(\ref{ineq: log_ineq}).
    
    In the graph $G_{i}$, the distance of each demand pair that contributes to $\sep^i_{h\cdot s/2}(C_i,D_{i})$ will be at least $\frac{h\cdot s}{2} > \frac{2h\log_2 n}{\alpha}$. In other words, let $(u,v)$ be a demand pair that contributes $D_{i}(u,v)$ to the separation, we have that $\dist_{G_{i+1}}(u,v) > \frac{2h\log_2 n}{\alpha}$ and $\rdist_{G_{i+1}}(u,v) = \rdist_{G_{i+1}}(v,u) > \frac{2h\log_2 n}{\alpha}$, so $w^{i+1}_{h}(u,v) = w^{i+1}_{h}(v,u) = 0$. This allows us to further lower bound the potential reduction as follows.
    
    \begin{align*}
    \sum_{u\in V}(P_{i}(u) - P_{i+1}(v)) &\geq \sum_{u\in V}A(u) \cdot  \frac{w_{i}(u) - w_{i+1}(u)}{w_{i}(u)}\\
    &= \sum_{\text{ordered } (u,v)\in V\times V}A(u)\cdot\frac{w_{i}(u,v) - w_{i+1}(u,v)}{w_{i}(u)}\\
    &\geq \sum_{\substack{\text{ordered }(u,v)\in V\times V\text{ s.t. }\\\rdist_{G_{i+1}}(u,v)>hs/2}} A(u)\cdot\frac{w_{i}(u,v)}{w_{i}(u)}\\
    &= \frac{1}{2}\sep^{i}_{h\cdot s/2}(C_{i},D_{i}),
    \end{align*}
    where the last equality is by $D_{i}(u,v) = A(u)\cdot \frac{w^{i}_{h}(u,v)}{w^{i}_{h}(u)} + A(v)\cdot \frac{w^{i}_{h}(v,u)}{w^{i}_{h}(v)}$ for each $u,v\in V$.
    As a result, the overall potential reduction is at least 
    \begin{align*}
        \sum_{u \in V} P_i(u) - P_{i+1}(u) \ge \frac{1}{2} \cdot \sep^i_{h\cdot s/2}(C_i, D_i) \ge 2^{-8\alpha-2} \cdot \frac{|C_i|}{\phi}
    \end{align*}
    Finally, we can come to a conclusion over the summation of size of all cuts.
    \begin{align*}
        \sum_{i}|C_i| & \le 2^{8\alpha+2}\phi \cdot \sum_{i}\sum_{u\in V} (P_{i}(u) - P_{i+1}(u)) \\
        &\le 2^{8\alpha+2}\phi \cdot \sum_{u\in V} P_{1}(u)\\
        & \le 2^{8\alpha+2}\phi \cdot \sum_{u\in V} A(u)\ln(w_{1}(u))\\
        &\le 2^{8\alpha+2}\phi \cdot |A| \cdot \ln n.
    \end{align*}
    For the last inequality, we use that $\ln(w_{1}(u)) \le \ln n$. This concludes the proof.
\end{proof}

The upper bound over the size of the sequence of moving cuts directly implies the existence of length-constrained directed expander decompositions.
\thmDirectedEdgeED*
\begin{proof}
    From graph $G$, if node-weighting $A$ is already $(h,s)$-length $\phi$-expanding in $G$, then we are done, because the empty cut is a valid expander decomposition. Otherwise, there exists an $h\cdot s$-length cut $C$ with $h\cdot s$ sparsity strictly smaller than $\phi$. We take an arbitrary cut satisfying the above condition and denote it as $C_1$. It is further applied to graph $G$ to get $G_2 = G - C_1$. 
    With one further step, if we assume that $A$ is still not $(h,s)$-length $\phi$-expanding in $G_2$, we can find an $h\cdot s$-length cut $C_2$ similarly as above. We update $G_3 = G_2 - C_2 = G - (C_1 + C_2)$ where $C_1 + C_2$ represents that we union two cuts together by summing the cut value on each edge. W.l.o.g we can assume that $C_1 + C_2$ is still an $h\cdot s$-length cut since it is meaningless to make the length increase of an edge larger than $h\cdot s$ when we are considering the $h\cdot s$-length sparsity.

    We repeat the above procedure until we reach some integer $k$ where 
    $A$ is $(h,s)$-length $\phi$-expanding in $G_{k+1} = G - \sum_{j \le k} C_j$.
    Let $C_{\le k}$ denote the union of all such cuts, we can assume it is an $h\cdot s$-length cut as discussed above. 
    Then by definition $C_{\le k}$ is a valid expander decomposition for $G$ and $A$, and \cref{lem:existential_cut_sequence} guarantees that $|C_{\le k}| = \sum_{j\le k}|C_j| \le 2^{8\alpha+2}\phi \cdot |A| \cdot \ln n$ as long as $s > \frac{4\log_2 n}{\alpha}$.
    This gives that $\kappa \le 2^{8\alpha+2}\cdot\ln n = O(n^{O(\frac{1}{s})}\log n)$.
\end{proof}
\clearpage
\section{Length-Constrained Vertex Expansion}
\label{sec:LCVE}
In this section we extend the theory of length-constrained expander decomposition to vertex-capacitated graphs. See \Cref{sect:PreVertexGraphs} for preliminaries of vertex-capacitated graphs. 

The basic concepts of length-constrained vertex expansion are analogous to those of length-constrained directed expansion in \Cref{sec:basic_LCDE}. The major difference is that now a moving cut $C$ can assign cut values to both vertices and edges. See \cref{sec:basic_LCVE} for formal description of the basic concepts.



The main results of this section is the existence of length-constrained expander decomposition for vertex-capacitated graphs (\Cref{thm:vertex_ED}) and the routing characterization of length-constrained vertex expanders (\Cref{thm:routing_LCVE}).

\begin{restatable}[Existential $(h,s)$-length Expander Decomposition for Vetex-Capacitated Graphs]{theorem}{thmvertexED}\label{thm:vertex_ED}
    For any vertex-capacitated graph $\Gvc = (\Vvc, \Evc)$, node-weighting $\Avc$, $h > 1$, $\phi < 1$ and a length slack parameter $s = O(\log n)$, there is an $(h, s)$-length $\phi$-expander decomposition for $A$ with cut slack $\kappa = O(n^{O(\frac{1}{s})}\log n)$.
\end{restatable}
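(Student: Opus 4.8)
The plan is to mirror the proof of the directed edge-capacitated case (\Cref{thm:directed_edge_ED}) almost verbatim, replacing edge cuts by moving cuts that act on both vertices and edges. First I would set up the analogues of the objects from \Cref{sec:exp_demand} in the vertex-capacitated setting: for a vertex-capacitated graph $G$, a length bound $h$ and a parameter $1\le\alpha\le\log n$, define the $h$-length $\alpha$-exponential distance weight $w_h^\alpha(u,v)$ exactly as before but using the (symmetric, since $G$ is undirected) distance $\dist_G(u,v)$ in place of the round-trip distance, and then the exponential weight $w_h^\alpha(u)=\sum_b w_h^\alpha(u,b)$, the mixing factor $M_h^\alpha(u,v)=w_h^\alpha(u,v)/w_h^\alpha(u)$, and the $\alpha$-exponential demand $D_{h,A}^\alpha(u,v)=A(u)M_h^\alpha(u,v)+A(v)M_h^\alpha(v,u)$. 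The basic properties (\Cref{lem:exponential-basics}), the mixing-factor overlap bound (\Cref{lem:mixing_factor}) and symmetry (\Cref{cor:sym_expD}) carry over with identical proofs, since all that was used was the triangle inequality and that distances only increase when a moving cut is applied; here the moving cut may increase both edge and vertex lengths, so $\dist_{G-C}(u,v)\ge\dist_G(u,v)$ still holds.

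Next I would prove the vertex-capacitated analogue of \Cref{lem:separationexpdemand}: if some $A$-respecting $h$-length symmetric demand $D$ has $(h\cdot s)$-length sparsity at most $\phi$ with respect to a moving cut $C$ in $G$, then $D_{h,A}^\alpha$ has $\tfrac{h\cdot s}{2}$-length sparsity at most $2^{8\alpha+1}\phi$ with respect to the same $C$. The argument is literally the one given for the directed case: restrict to the fully $(h\cdot s)$-separated sub-demand $\widehat D$, build the intermediate demand $D'$ by rerouting each pair $(u,v)$ through every $b$ with weight $\widehat D(u,v)\min\{D_{h,A}^\alpha(u,b)/A(u),\,D_{h,A}^\alpha(v,b)/A(v)\}$, observe $D'\preceq 2D_{h,A}^\alpha$ because $\widehat D$ is $A$-respecting, use that for each separated pair and each $b$ at least one of $\dist_{G-C}(u,b),\dist_{G-C}(b,v)$ exceeds $(h\cdot s)/2$, and apply \Cref{lem:mixing_factor} (valid since $\dist_G(u,v)\le h$, hence the "round-trip" radius is $\le 2h$). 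With this in hand, the potential argument of \Cref{lem:existential_cut_sequence} goes through unchanged: define $P_i(u)=A(u)\ln w_i^\alpha(u)$ for the graph $G_i=G-\sum_{j<i}C_j$, use $-\ln(1-x)\ge x$, note that a pair separated to distance $>(h\cdot s)/2>(2h\log_2 n)/\alpha$ in $G_{i+1}$ contributes zero weight afterward, and conclude $\sum_i|C_i|\le 2^{8\alpha+2}\phi\ln n\cdot|A|$ whenever $s>4\log_2 n/\alpha$.

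Finally, the decomposition itself is obtained by the same greedy peeling as in the proof of \Cref{thm:directed_edge_ED}: while $A$ is not $(h,s)$-length $\phi$-expanding in the current graph, pick a moving cut of $(h,s)$-length sparsity $<\phi$, add it in (truncating cut values at $1$, which only helps), and repeat; this terminates and the union $C_{\le k}$ is an $h\cdot s$-length moving cut with $|C_{\le k}|\le 2^{8\alpha+2}\phi\ln n\cdot|A|$, so choosing $\alpha=\Theta(\log n/s)$ (legitimate since $s=O(\log n)$ forces $\alpha\ge\Omega(1)$ and we need $\alpha\le\log n$) gives cut slack $\kappa=2^{8\alpha+2}\ln n=O(n^{O(1/s)}\log n)$. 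The only genuinely new bookkeeping — and hence the main obstacle, though a mild one — is checking that every statement proved for moving cuts on edges remains true when the moving cut also assigns fractional length increases to vertices: in particular that $\sep_h$, $\spa_h$, $\cond_{(h,s)}$ and the exponential weights are all defined with respect to $\dist_{G-C}$ which now accounts for both edge- and vertex-length increases, and that "truncating a summed moving cut to be $h\cdot s$-length" is still harmless. None of this requires new ideas, which is consistent with the paper's remark that this section "does not require technical novelty"; I would therefore state the vertex-capacitated versions of \Cref{lem:separationexpdemand} and \Cref{lem:existential_cut_sequence} and remark that their proofs are identical to the directed ones with round-trip distance replaced by ordinary distance and with moving cuts acting on $V\cup E$.
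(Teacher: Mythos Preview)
Your approach is correct but takes a genuinely different route from the paper. The paper does \emph{not} redo the exponential-demand and potential arguments in the vertex-capacitated setting; instead it builds a reduction (\Cref{sec:reduction}) that turns the vertex-capacitated graph $\Gvc$ into a directed edge-capacitated graph $\Gec$ by splitting each vertex $v$ into $\vin,\vmid,\vout$ with inner edges carrying $v$'s length and capacity, proves the equivalence of $(h,s)$-length $\phi$-expansion between $\Avc$ in $\Gvc$ and $\Aec$ in $\Gec$ up to a factor $3$ (\Cref{thm:mutual_expanding}), and then runs the directed decomposition of \Cref{thm:directed_edge_ED} on $\Gec$, normalizing each sparse cut found (\Cref{lem:uniform_cut}) so that it translates back to a vertex moving cut in $\Gvc$. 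Your direct adaptation is arguably cleaner for this single theorem---all the directed machinery relies only on the triangle inequality and monotonicity of distances under cuts, so replacing $\rdist$ by $2\dist$ and letting moving cuts act on $V\cup E$ is indeed harmless---but the paper's reduction buys reusability: the same \Cref{thm:mutual_expanding} is invoked again to transfer the routing characterization (\Cref{thm:routing_LCVE}) from the directed case, so the paper pays once for the reduction and gets both results, whereas your route would have to repeat the direct adaptation for the flow-cut gap argument as well.
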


\begin{restatable}[Routing Characterization of Length-Constrained Vertex Expanders]{theorem}{thmroutingLCVE}\label{thm:routing_LCVE}
    Given a vertex-capacitated graph $\Gvc$ and node-weighting $\Avc$, for any $h \ge 1$, $\phi < 1$ and $s \ge 1$ we have:
\begin{itemize}
    \item If $\Avc$ is $(h,s)$-length $\phi$-expanding in $\Gvc$, then every $h$-length $\Avc$-respecting demand can be routed in $\Gvc$ wth congestion at most $O(\frac{\log N}{\phi})$ and dilation at most $h \cdot s$.
    \item If $\Avc$ is not $(h,s)$-length $\phi$-expanding in $\Gvc$, then some $h$-length $\Avc$-respecting demand cannot be routed with congestion at most $\frac{1}{6\phi}$ and dilation at most $\frac{h\cdot s}{2}$.
\end{itemize}
\end{restatable}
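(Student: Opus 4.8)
The plan is to mirror the directed proof of \Cref{thm:routing_LCDE}, which in turn adapts the undirected argument of \cite{haeupler2022hopexpander}, but now tracking vertex capacities rather than edge capacities. The two directions are handled separately.

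For the \emph{easy direction} (the second bullet), I would argue by contrapositive: assuming that \emph{every} $h$-length $\Avc$-respecting demand routes with congestion $\le \tfrac{1}{6\phi}$ and dilation $\le \tfrac{h s}{2}$, I want to show $\Avc$ is $(h,s)$-length $\phi$-expanding, i.e.\ every $(hs)$-length moving cut $C$ has $(h,s)$-length sparsity at least $\phi$. Fix such a $C$ and an arbitrary $\Avc$-respecting $h$-length symmetric demand $D$; I must lower-bound $\spa_{hs}(C,D) = |C|/\sep_{hs}(C,D)$ by $\phi$. Take the flow $F$ routing $D$ with congestion $\le \tfrac{1}{6\phi}$ and length $\le \tfrac{hs}{2}$ in $G$ (unmodified lengths). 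Any flow path carrying separated demand (between a pair at $\dist_{G-C}>hs$) has length $\le hs/2$ in $G$, hence must cross enough cut weight: its total length in $G-C$ exceeds $hs$, so its accumulated $\ell_{C,hs}$-increase exceeds $hs/2$, i.e.\ the path "uses up" moving-cut weight at least $\tfrac{1}{2}$ (after dividing by $hs$, summing $C$-values of vertices/edges it traverses is $\ge \tfrac12$). Summing over all separated demand and charging against the capacities gives $\sep_{hs}(C,D) \le O(\tfrac{1}{\phi})\,|C|$, so $\spa_{hs}(C,D) \ge \Omega(\phi)$; choosing constants (this is where the $6$ in $\tfrac{1}{6\phi}$ comes from, versus $2$ in the directed case, to absorb the fact that a path can accumulate cut weight at both its endpoints' vertex-capacities) yields $\ge \phi$. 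Since $D$ was an arbitrary witness demand, $\spa_{(h,s)}(C,\Avc)\ge\phi$, and since $C$ was arbitrary, $\cond_{(h,s)}(\Avc)\ge\phi$.

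For the \emph{hard direction} (the first bullet), I would use LP duality / the multiplicative-weights framework exactly as in \cite{haeupler2022hopexpander}. Fix an $h$-length $\Avc$-respecting demand $D$ (by splitting into symmetric parts we may assume $D$ symmetric, losing only a factor $2$ in congestion). Consider the fractional "maximum-concurrent-flow with length bound $hs$" LP for routing $D$; its optimum value $\theta^\star$ (the largest fraction of $D$ routable with length $\le hs$ and congestion $\le 1$) has an LP dual that is precisely a length assignment $\ell'\ge \ell_G$ — equivalently a moving cut scaled appropriately — whose "cost" relative to the demand it $hs$-separates is $1/\theta^\star$. If $\theta^\star$ were small, rounding this dual length assignment to an $(hs)$-length moving cut $C$ (multiples of $1/(hs)$, clipped at $1$) would exhibit $\spa_{hs}(C, D) < \phi$, and hence $\spa_{(h,s)}(C,\Avc) < \phi$, contradicting that $\Avc$ is $(h,s)$-length $\phi$-expanding. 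The quantitative version: $\Avc$ being $(h,s)$-length $\phi$-expanding forces $\theta^\star \ge \Omega(\phi/\log N)$, so scaling up by $O(\log N/\phi)$ routes all of $D$ with congestion $O(\log N/\phi)$ and dilation $hs$. The $\log N$ loss is the standard integrality/rounding gap from the moving-cut discretization together with the number of distinct length scales $O(\log N)$.

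The step I expect to be the main obstacle is getting the \emph{vertex-capacitated} LP duality and the discretization of the dual into an $(hs)$-length moving cut \emph{on both vertices and edges} to line up cleanly with \Cref{def:movingcut}'s vertex-and-edge cut values — in particular verifying that the dual optimal solution can be taken supported on a consistent set of vertex/edge length increases, and that clipping fractional dual weights to multiples of $1/(hs)$ only loses a constant factor in sparsity (this is where the extra slack in the constants, $\tfrac{1}{6\phi}$ and $\tfrac{hs}{2}$, is spent). Everything else — symmetrization, the path-crossing/charging argument, and the $O(\log N/\phi)$ bookkeeping — transfers from the directed proof in \Cref{sec:appendix_LC_routing} essentially verbatim, replacing $\deg$/edge-capacity accounting with vertex-capacity accounting.
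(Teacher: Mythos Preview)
Your approach is valid but differs from the paper's. The paper does \emph{not} redo the LP-duality argument in the vertex-capacitated setting; instead it goes through the reduction of \Cref{sec:reduction} to directed edge-capacitated graphs. Given $\Gvc$ it builds $\Gec$, invokes \Cref{thm:mutual_expanding} to transfer (non-)expansion between $\Avc$ in $\Gvc$ and $\Aec$ in $\Gec$, applies the already-proved directed routing characterization \Cref{thm:routing_LCDE} to $\Gec$, and then maps flows back and forth along the $\umid\!\to\!\uout\!\to\!\vin\!\to\ldots$ correspondence of \Cref{fig:gvc_gec_path}. Your direct route avoids building this infrastructure and would in fact yield the sharper constant $\tfrac{1}{2\phi}$ in the second bullet: the path-charging step you sketch goes through verbatim with vertex terms included, since $|C|=\sum_e u(e)C(e)+\sum_v u(v)C(v)$ already accounts for vertex contributions. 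The paper's constant $\tfrac{1}{6\phi}$ is \emph{not} due to endpoint double-counting as you conjecture, but is simply the factor-$3$ loss in \Cref{thm:mutual_expanding} (if $\Avc$ is not $\phi$-expanding then $\Aec$ is not $3\phi$-expanding, and $\tfrac{1}{2\cdot 3\phi}=\tfrac{1}{6\phi}$). The trade-off: the paper's reduction is black-box and reuses \Cref{thm:routing_LCDE} wholesale, at the cost of a worse constant; your approach is more self-contained and tighter, at the cost of reverifying the LP/rounding machinery with vertex variables (which, as you correctly anticipate, is routine). Note also that symmetrization is unnecessary here: in the undirected vertex-capacitated setting the sparsity in \Cref{sec:basic_LCVE} ranges over \emph{all} $A$-respecting $h$-length demands, not just symmetric ones.
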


In \cref{sec:reduction}, we introduce a key reduction that transforms vertex-capacitated graphs into directed edge-capacitated graphs, demonstrating their equivalence. 
This equivalence is crucial for the proofs of \Cref{thm:vertex_ED} and \Cref{thm:routing_LCVE} in \cref{sec:proof_LCVED} and \cref{sec:routing_LCVE}.

\subsection{Basic Concepts of Length-Constrained Vertex Expansion}\label{sec:basic_LCVE}
We start with defining concepts related to length-constrained vertex expanders.

\paragraph{Moving Cuts for Vertex-Capacitated Graphs.} 
A notable distinction for vertex-capacitated graphs is that moving cuts can be applied to vertices by exerting a similar length increase on them. 

An $h$-length moving cut $C: V \cup E \rightarrow \{0, \frac{1}{h}, \ldots, 1\}$ on a vertex-capacitated graph $G$ assigns to each edge $e$ and each vertex $v$ a fractional cut value between zero and one which is a multiple of $\frac{1}{h}$. The size of $C$ is defined as $|C| = \sum_{e}u_G(e)\cdot C(e) + \sum_{v}u_G(v)\cdot C(v)$. 
The length increase associated with the $h$-length moving cut $C$ is denoted with $\l_{C,h}$. 
Generalizing from the length increase over edges, 
the moving cut $C$ assigns a vertex $v$ length increase $\l_{C,h}(v) = h \cdot C(v)$. 
We similarly define the degree of the vertex moving cut over a vertex $v$ to be $\deg_{C}(v) = u_G(v)\cdot C(v) + \sum_{e: e \ni v} u_G(e) \cdot C(e)$. 

By applying the cut $C$ to a vertex-capacitated graph $G$, the resulting graph is $G- C$ where the length of each vertex and edge increases accordingly. 
To distinguish between two types of moving cuts, we will clarify the type of graph where the moving cut is applied.
\begin{remark}
   From the setting of vertex capacity, we note that the undirected edge-capacitated graph is a special case for the vertex-capacitated graph.
   We can reduce an arbitrary undirected edge-capacitated graph to a vertex-capacitated graph by setting the length of vertices to some small constant and allowing arbitrarily large capacity over vertices. 
   Then it is too expensive to have a fractional cut over any vertices. 
   Thus if we generalize previous results to vertex-capacitated graphs, we actually build up a more general framework for length-constrained expanders and expander decompositions.
\end{remark}

\paragraph{$(h,s)$-Length Sparsity for Vertex-Capacitated Graphs.} The definition of sparsity of moving cuts, and the conductance of the node-weighting similarly generalize from the edge cut cases. It may be useful for the reader to recall the definition in \cref{sec:basic_LCDE}.


We remark that for undirected vertex-capacitate graphs, the $(h,s)$-length sparsity is no longer restricted to symmetric demands.
\[\spa_{(h, s)}(C, A) = \min_{A\text{-respecting $h$-length demand}\ D} \spa_{h \cdot s} (C,D).\]
As mentioned in \cref{sec:basic_LCDE}, we say that $A$ is $(h,s)$-length $\phi$-expanding in $G$ if $\cond_{(h,s)}(A) \ge \phi$. The definition of length-constrained expander decompositions for vertex-capacitated graphs follows.
\begin{definition}[Length-Constrained Vertex Expander Decompostion]
    Given a vertex-capacitated graph $G$,
    an $(h,s)$-length $\phi$-expander decomposition for a node-weighting $A$ with length slack $s$ and cut slack $\kappa$ is an $h\cdot s$-length cut $C$ of size at most $\kappa\cdot\phi |A|$ such that $A$ is $(h,s)$-length $\phi$-expanding in $G-C$.
\end{definition}

\subsection{Reduction to Directed Edge-Capacitated Graphs}\label{sec:reduction}
\begin{figure}[htbp]
    \centering
    \includegraphics[width=0.9\linewidth]{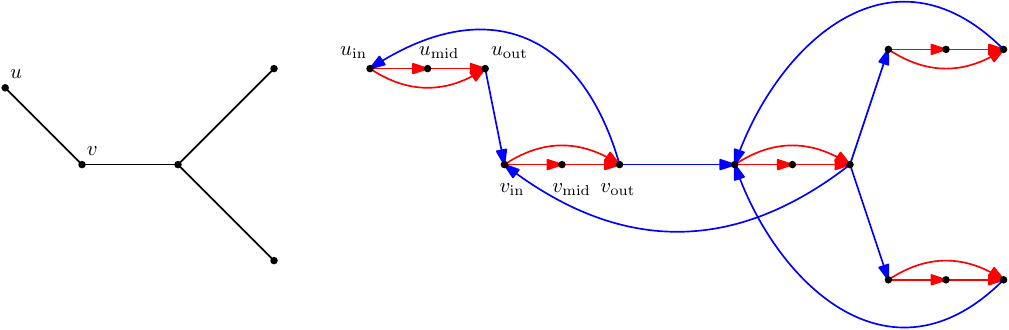}
    \caption{(Left) A vertex-capacitated graph. (Right) The corresponding directed edge-capacitated graph.}
    \label{fig:v_cap_e_cap}
\end{figure}
In \cref{sec:LCDE}, we presented several results concerning directed edge-capacitated graphs, including the existence of expander decompositions.
It further turns out that we can reduce an undirected vertex-capacitated graph to a directed edge-capacitated graph to apply those results.
This is an essential step to prove previous expansion results for vertex-capacitated graphs, and we start with a detailed introduction to the reduction.

Namely, for any vertex-capacitated graph $\Gvc = (\Vvc, \Evc)$ with a node-weighting $\Avc$, one can construct a directed edge-capacitated graph $\Gec = (\Veca, \Eec)$ with a corresponding node-weighting $\Aec$ through the following reduction:
\begin{enumerate}
    \item Let $\Veca = \varnothing$, $\Eec = \varnothing$ and $\Aec(v) = 0$ for all vertices at the beginning.
    \item For each vertex $v \in \Vvc$, add three vertices $\vin$, $\vmid$ and $\vout$ to $\Veca$; Add $(\vin, \vmid)$, $(\vmid, \vout)$ and $(\vin, \vout)$ to $\Eec$. The length of each edge is set as $\l_{\Gvc}(v)$, and the capacity of each edge is set as $u_{\Gvc}(v)$. We call such edges inner edges.
    \item For each edge $e = (u,v) \in \Evc$, add $(\uout, \vin)$ and $(\vout, \uin)$ to $\Eec$. The length of them is set as $l_{\Gvc}(e)$ and the capacity is set as $u_{\Gvc}(e)$. We call such edges outer edges.
    \item For each vertex $v \in \supp(\Avc)$, set $\Aec(\vmid) = \Avc(v)$.
\end{enumerate}

An example is shown in \cref{fig:v_cap_e_cap}. In the right part, inner edges and outer edges are colored red and blue respectively. 
In what follows, we use $\Gvc$ to denote an arbitrary vertex-capacitated graph, and $\Gec$ always refers to the edge-capacitated graph constructed from $\Gvc$ as described above. 
For simplicity, we may occasionally omit to mention $\Gvc$, but whenever $\Gec$ is referenced, it is understood to be based on a specific vertex-capacitated graph.

Also in the undirected setting, the direction of demands does not make a difference. 
Thus for simplicity, when we talk about the routing, we can stick to a symmetric demand $D$ in both types of graphs.
Namely, for any demand $D$ in $\Gvc$, we can have an equivalent symmetric demand $D'$ by balancing the demand in both directions.
And we further use the same notation $D'$ in $\Gec$ where $D'(\umid, \vmid) = D'(\vmid, \umid) = D'(u, v)$ whenever $D'(u, v) > 0$ for any pair of vertices $u, v \in \Vvc$.

\paragraph{Reduction from Vertex-Capacitated to Directed Edge-Capacitated Graphs.}
As mentioned before, the motivation to construct a directed edge-capacitated graph $\Gec$ that shares a similar structure as the vertex-capacitated graph $\Gvc$ is that we can transform problems into the directed and edge-capacitated setting which we have explored in \cref{sec:LCDE}.
To apply our previous results to the vertex-capacitated case, it is essential to elucidate the connection between the two types of graphs. Specifically, we need to show the equivalence of expansion between two types of graphs in our construction as follows:
\begin{restatable}{theorem}{thmMutualExpanding}\label{thm:mutual_expanding}
    Let $\Gvc = (\Vvc, \Eec)$ be a vertex-capacitated graph with node-weighting $\Avc$, and $\Gec = (\Veca, \Eec)$ be the corresponding edge-capacitated graph with node-weigthing $\Aec$.
    
    \begin{enumerate}
        \item If $\Avc$ is $(h,s)$-length $\phi$-expanding in $\Gvc$, then $\Aec$ is $(h,s)$-length $\phi$-expanding in $\Gec$.
        \item If $\Aec$ is $(h,s)$-length $\phi$-expanding in $\Gec$, then $\Avc$ is $(h,s)$-length $\frac{\phi}{3}$-expanding in $\Gvc$.
    \end{enumerate}
\end{restatable}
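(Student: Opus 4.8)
The plan is to prove both implications by contraposition, using the elementary unfolding of the definitions: a node-weighting $A$ fails to be $(h,s)$-length $\phi$-expanding in a graph $H$ exactly when there is an $h\cdot s$-length moving cut $C$ on $H$ together with an $h$-length $A$-respecting demand $D$ (required symmetric in the directed case) with $\sep_{h\cdot s}(C,D) > |C|/\phi$, i.e.\ $\spa_{h\cdot s}(C,D) < \phi$. So for Part 1 I would start from such a witnessing pair $(\Cec,\Dec)$ for ``$\Aec$ not $\phi$-expanding in $\Gec$'' and manufacture a witnessing pair $(\Cvc,\Dvc)$ for ``$\Avc$ not $\phi$-expanding in $\Gvc$''; for Part 2 I would go the other way, turning a witnessing pair for ``$\Avc$ not $\tfrac{\phi}{3}$-expanding in $\Gvc$'' into a witnessing pair for ``$\Aec$ not $\phi$-expanding in $\Gec$''.

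The demand transfer is the easy half. Since $\Aec$ is supported only on the middle copies $\vmid$ of vertices in $\supp(\Avc)$, any $\Aec$-respecting demand is supported on pairs of such middle copies, so it corresponds to a demand $\Dvc(u,v) := \Dec(\vmid',\vmid)$ on $\Gvc$ (writing $u=v'$ loosely) and vice versa. The one fact that makes this clean is $\dist_{\Gec}(\umid,\vmid) = \dist_{\Gvc}(u,v)$: for any $\Gvc$-path $u=y_0,e_1,y_1,\dots,e_\ell,y_\ell=v$ the canonical lift $\umid\to\uout\to(y_1)_{\mathrm{in}}\to(y_1)_{\mathrm{out}}\to\cdots\to\vin\to\vmid$ has the same length, and conversely any $\Gec$-path projects onto a $\Gvc$-walk of no greater length (in the directed gadget, traversing $v$'s copies costs at least one inner edge, i.e.\ at least $\l_{\Gvc}(v)$). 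Hence $\Dvc$ is $h$-length iff $\Dec$ is, and $A$-respectingness transfers by the support identities. Symmetry, needed on the $\Gec$ side, is either inherited (Part 1) or imposed for free in $\Gvc$ (Part 2) by replacing $\Dvc$ with $\tfrac12(\Dvc(u,v)+\Dvc(v,u))$, which stays $h$-length and $\Avc$-respecting and, since $\Gvc-\Cvc$ is undirected, has the same $\sep_{h\cdot s}$ value.

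The cut transfer together with the distance inequality is the technical core. From a $\Gvc$-cut $\Cvc$ I would build $\Cec$ by copying $\Cvc(v)$ onto all three inner edges of $v$ and $\Cvc(e)$ onto both outer edges of $e$; since the three inner edges of $v$ each have capacity $u_{\Gvc}(v)$, this multiplies the size by at most $3$, which is precisely where the $\phi/3$ of Part 2 comes from. Conversely, from a $\Gec$-cut I would set $\Cvc(v):=\max$ of $\Cec$ over the three inner edges of $v$ and $\Cvc(e):=\max$ over the two outer edges of $e$; here $|\Cvc|\le|\Cec|$ because $\max\le\mathrm{sum}$ and the capacities match up. In both cases the result is again a valid $h\cdot s$-length moving cut (a max or copy of multiples of $1/(h\cdot s)$ in $[0,1]$). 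The key claim is that after applying the transferred cut, $\dist_{\Gec-\Cec}(\umid,\vmid)$ is at least (Part 1) or at most (Part 2) $\dist_{\Gvc-\Cvc}(u,v)$; I would prove this by the same lift/project argument, now checking edge-by-edge that an inner edge of $v$ has $\Gec-\Cec$-length $\l_{\Gvc}(v)+h\cdot s\,\Cec(\cdot)$ that is $\le$ (Part 1) or $=$ (Part 2) the $\Gvc-\Cvc$-length $\l_{\Gvc}(v)+h\cdot s\,\Cvc(v)$ of $v$, and likewise outer edges versus the corresponding $\Gvc$-edges. Granting this, the set of $h\cdot s$-separated pairs only grows under the transfer, so $\sep_{h\cdot s}$ of the transferred pair is at least that of the original; combined with the size bound this keeps $\spa_{h\cdot s}$ below $\phi$, and taking the min over demands and over cuts yields $\cond_{(h,s)}<\phi$ in the target graph, i.e.\ the desired non-expansion.

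I expect the main obstacle to be the bookkeeping in the projection argument: one must rule out that a $\Gec$-path ``cheats'' through a vertex gadget (it cannot, since from $\vin$ one reaches $\vout$ only via the edge $(\vin,\vout)$ or via $\vin\to\vmid\to\vout$, and $\vmid$ has a unique incoming and a unique outgoing inner edge), and symmetrically that the canonical lift of a $\Gvc$-path is a legal directed walk in $\Gec$ (each consecutive pair of copies is joined by exactly the inner or outer edge being used). Once the distance inequality is pinned down, the factor-$3$ size accounting and the final assembly $\spa_{h\cdot s}(\text{transferred pair})\le 3\,\spa_{h\cdot s}(\text{original pair})$ (or $\le$, in Part 1) are short calculations.
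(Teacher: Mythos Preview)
Your approach is essentially the same as the paper's: both argue by contraposition and transfer a sparse cut together with its witnessing demand between $\Gvc$ and $\Gec$, using the distance equality $\dist_{\Gec}(\umid,\vmid)=\dist_{\Gvc}(u,v)$ (the paper's \Cref{lem:eq-distance}) and the max/copy cut constructions you describe (the paper's \Cref{lem:mutual_sparse_cut}); the factor~$3$ arises in exactly the place you identify.

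One small slip to fix: the directions in your ``key claim'' are swapped. For Part~1 you need $\dist_{\Gec-\Cec}(\umid,\vmid)\le\dist_{\Gvc-\Cvc}(u,v)$ (so that pairs separated in $\Gec-\Cec$ remain separated in $\Gvc-\Cvc$), and for Part~2 you need $\ge$ (in fact $=$, by your copy construction). Your edge-by-edge comparison already gives the correct inequality---inner-edge length in $\Gec-\Cec$ is $\le$ vertex length in $\Gvc-\Cvc$ when $\Cvc$ is the max, so the canonical lift of any $\Gvc-\Cvc$ path is no longer in $\Gec-\Cec$, yielding $\dist_{\Gec-\Cec}\le\dist_{\Gvc-\Cvc}$---and your conclusion that ``the set of $h\cdot s$-separated pairs only grows under the transfer'' is the right one; it is just the intermediate statement of the distance inequality that has ``at least'' and ``at most'' interchanged.
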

We develop the theorem gradually with some necessary lemmas concerning the equivalence of sparse cut in both types of graphs.

\begin{figure}[htbp]
    \centering
    \includegraphics[]{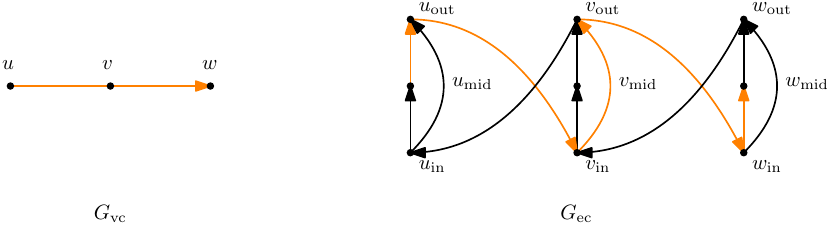}
    \caption{Suppose there is a path $\{u, v, w\}$ in the vertex-capacitated graph $\Gvc$(Left), then we can follow it and create a path $\{\umid, \uout, \vin, \vout, w_{\IN}, w_{\MID}\}$ in $\Gec$(Right). We note that both paths have the same length.}
    \label{fig:gvc_gec_path}
\end{figure}

We first note an important property of our construction is that the equivalence of pairwise distance is kept.
\begin{lemma}\label{lem:eq-distance}
    Given a vertex-capacitated graph $\Gvc = (\Vvc, \Evc)$ and the corresponding edge-capacitated graph $\Gec = (\Veca, \Eec)$, for any $u, v \in \Vvc$, $\dist_{\Gvc}(u, v) = \dist_{\Gec}(\umid, \vmid) = \dist_{\Gec}(\vmid, \umid)$.
\end{lemma}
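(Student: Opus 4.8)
The plan is to prove the equality $\dist_{\Gec}(\umid,\vmid)=\dist_{\Gvc}(u,v)$ by two matching inequalities, and to deduce $\dist_{\Gec}(\vmid,\umid)=\dist_{\Gec}(\umid,\vmid)$ from a symmetry of the construction. The case $u=v$ is trivial (both distances are $0$), so assume $u\ne v$. For the symmetry, I would observe that $\Gec$ is built from an \emph{undirected} graph, so the relabeling $x_{\IN}\mapsto x_{\OUT}$, $x_{\OUT}\mapsto x_{\IN}$, $x_{\MID}\mapsto x_{\MID}$ is a length-preserving isomorphism from $\Gec$ with all edge directions reversed onto $\Gec$ itself: inner edges map to inner edges, and the reverse of the outer edge $\uout\to\vin$ associated with $\{u,v\}\in\Evc$ becomes $\vout\to\uin$, again an outer edge of $\{u,v\}$. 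Since this relabeling fixes $\umid$ and $\vmid$, it yields $\dist_{\Gec}(\umid,\vmid)=\dist_{\Gec}(\vmid,\umid)$.

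For $\dist_{\Gec}(\umid,\vmid)\le\dist_{\Gvc}(u,v)$, I take a shortest $(u,v)$-path $P=(x_0{=}u,x_1,\dots,x_k{=}v)$ in $\Gvc$ with edges $e_i=\{x_{i-1},x_i\}$, and lift it to the $\umid$–$\vmid$ walk in $\Gec$
\[
(x_0)_{\MID}\to (x_0)_{\OUT}\to (x_1)_{\IN}\to (x_1)_{\OUT}\to\cdots\to (x_{k-1})_{\OUT}\to (x_k)_{\IN}\to (x_k)_{\MID},
\]
which uses the inner edge $(x_0)_{\MID}\to(x_0)_{\OUT}$, the inner edges $(x_i)_{\IN}\to(x_i)_{\OUT}$ for $1\le i\le k-1$, the inner edge $(x_k)_{\IN}\to(x_k)_{\MID}$, and the outer edges $(x_{i-1})_{\OUT}\to(x_i)_{\IN}$. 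Since every inner edge of $x$ has length $\ell_{\Gvc}(x)$ and the outer edges of $e_i$ have length $\ell_{\Gvc}(e_i)$, this walk has length $\sum_{i=0}^{k}\ell_{\Gvc}(x_i)+\sum_{i=1}^{k}\ell_{\Gvc}(e_i)=\ell_{\Gvc}(P)=\dist_{\Gvc}(u,v)$.

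For the reverse inequality, let $Q$ be a shortest $\umid$–$\vmid$ path in $\Gec$; as all edge lengths are positive integers, $Q$ is simple. The structure of $\Gec$ forces a canonical shape on $Q$: the only out-edge of $\umid$ is $\umid\to\uout$, so $Q$ begins with it; the only in-edge of $\vmid$ is $\vin\to\vmid$, so $Q$ ends with it; every out-edge of a vertex $x_{\OUT}$ is an outer edge; and the only gadget-internal out-neighbors of $x_{\IN}$ are $x_{\MID}$ and $x_{\OUT}$. Hence $Q$ alternates between (a) crossing a gadget from its $\IN$-vertex to its $\OUT$-vertex — either directly or through $x_{\MID}$, so using inner edges of total length $\ge\ell_{\Gvc}(x)$ — and (b) traversing an outer edge; the first gadget is entered at $\umid$ rather than $\uin$ and the last is exited at $\vmid$ rather than $\vout$, each contributing exactly one inner edge, of length $\ell_{\Gvc}(u)$ resp.\ $\ell_{\Gvc}(v)$. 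Reading off the gadgets $Q$ visits, in order, yields a sequence $u=w_0,w_1,\dots,w_m=v$ with $\{w_{j-1},w_j\}\in\Evc$ for every $j$; since $Q$ is simple it visits each $(w_j)_{\IN}$ (and each $(w_j)_{\OUT}$) at most once, so it can never re-enter a gadget, the $w_j$ are distinct, and $w_0\cdots w_m$ is a genuine $(u,v)$-path in $\Gvc$. Partitioning the edges of $Q$ into the $m$ outer edges (total length $\sum_{j=1}^m\ell_{\Gvc}(\{w_{j-1},w_j\})$) and, for each $j$, the inner edges used inside the $w_j$-gadget (total length $\ge\ell_{\Gvc}(w_j)$, these blocks being pairwise disjoint since the gadgets are distinct), I obtain
\[
\ell_{\Gec}(Q)\ \ge\ \sum_{j=0}^{m}\ell_{\Gvc}(w_j)+\sum_{j=1}^{m}\ell_{\Gvc}(\{w_{j-1},w_j\})\ =\ \ell_{\Gvc}(w_0\cdots w_m)\ \ge\ \dist_{\Gvc}(u,v).
\]
Combining the two inequalities with the symmetry above finishes the proof. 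The step I expect to be the crux is this lower bound: carefully establishing that a shortest path in $\Gec$ must traverse each visited gadget from $\IN$ to $\OUT$ (possibly via $\MID$) and can never re-enter a gadget, so that every vertex length $\ell_{\Gvc}(w_j)$ of the induced $\Gvc$-path is genuinely paid for by a disjoint block of inner edges of $Q$; the symmetry, the lift, and the final arithmetic are then routine.
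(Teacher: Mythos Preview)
Your proposal is correct and follows essentially the same approach as the paper's proof: establish both inequalities by lifting a shortest $\Gvc$-path to a $\umid$--$\vmid$ walk and projecting a shortest $\Gec$-path back to a $\Gvc$-path, then appeal to the symmetry of the construction. The paper's argument is considerably terser (it simply says one can ``follow the trajectory'' in each direction and that a shortest $\Gec$-path goes directly $a_{\IN}\to a_{\OUT}$), whereas you spell out the gadget-traversal structure and the simplicity argument in full; this extra care is fine and does not change the strategy.
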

\begin{proof}
    For any shortest path between $u$ and $v$ in $\Gvc$, it is possible to give a path of the same length in $\Gec$ between $\umid$ and $\vmid$ by following the trajectory.
    For the other direction, a similar argument also applies for the shortest path between $\umid$ and $\vmid$ because it always goes directly from some vertex $a_{\mathrm{in}}$ to $a_{\mathrm{out}}$ to achieve shorter length.
    An example is shown in \cref{fig:gvc_gec_path}.
    We further note that the directed distance between $\umid$ and $\vmid$ is symmetric due to the construction, and this gives the lemma.
\end{proof}

As mentioned above, we then discuss the equivalence of moving cuts between vertex-capacitated graphs and directed edge-capacitated graphs.
We note that it would be convenient to normalize the cut in graph $\Gec$ for simplicity.
Actually, in $\Gec$, for any $h$-length moving cut $C$ with a fully separated demand $D$ such that $\spa_h(C, D) = \frac{|C|}{|D|} \le \phi$, 
we can construct a normalized cut $C'$ as follows:
\begin{enumerate}
    \item For any group of vertices $\vin$, $\vmid$ and $\vout$, $C'$ cut the same length over inner edges between them by taking the maximum cut value over those edges from $C$, i.e.,
    \begin{align*}
        C'((\vin, \vmid)) = C'((\vmid, \vout)) = C'((\vin, \vout)) \\
        = \max\{C((\vin, \vmid)), C((\vmid, \vout)), C((\vin, \vout))\}
    \end{align*}
    \item For any pair of outer edges $(u_{\OUT}, v_{\IN})$ and $(v_{\OUT}, u_{\IN})$, $C'$ also cuts the same length over them by taking the larger cut value over two edges from $C$, i.e.,
    \[ C'((u_{\OUT}, v_{\IN})) = C'((v_{\OUT}, u_{\IN})) = \max\{ C((u_{\OUT}, v_{\IN})), C((v_{\OUT}, u_{\IN}))\}\]
\end{enumerate}
Cut $C'$ is normalized in the sense that the cut value over the same group edges is the same, and thus it easily induces a cut in the base graph $\Gvc$. What is more, we do not lose more than a constant factor in terms of sparsity.

\begin{lemma}\label{lem:uniform_cut}
    Given the constructed graph $\Gec$ and the $h$-length moving cut $C$ with a fully separated demand $D$ such that $\spa_h(C, D) = \frac{|C|}{|D|} \le \phi$, the normalized cut $C'$ has sparsity $\spa_h(C', D) \le 3\spa_h(C, D) \le 3\phi$.
\end{lemma}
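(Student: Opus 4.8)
The whole point of the normalization is that it only \emph{raises} cut values, so it cannot hurt separation, while a group-by-group accounting shows that the size grows by at most a constant factor. First I would observe that every edge of $\Gec$ lies in exactly one of two kinds of groups coming from the construction in \Cref{sec:reduction}: the \emph{inner triple} $\{(\vin,\vmid),(\vmid,\vout),(\vin,\vout)\}$ of some vertex $v\in\Vvc$, or the \emph{outer pair} $\{(\uout,\vin),(\vout,\uin)\}$ of some edge $e=(u,v)\in\Evc$. By definition $C'$ assigns every edge of a group the maximum of the $C$-values inside that group, so $C'(e)\ge C(e)$ for every edge $e$, and since this maximum of values in $\{0,\tfrac1h,\dots,1\}$ again lies in $\{0,\tfrac1h,\dots,1\}$, the object $C'$ is a legitimate $h$-length moving cut.

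\emph{Separation is preserved.} Because $C'\ge C$ edgewise, the induced length increase satisfies $\l_{C',h}=h\cdot C'\ge h\cdot C=\l_{C,h}$ on every edge, hence $\dist_{\Gec-C'}(x,y)\ge\dist_{\Gec-C}(x,y)$ for all $x,y$. Therefore every pair that is $h$-length separated by $C$ remains $h$-length separated by $C'$, which gives $\sep_h(C',D)\ge\sep_h(C,D)=|D|$; the opposite bound $\sep_h(C',D)\le|D|$ is trivial, so $\sep_h(C',D)=|D|$.

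\emph{Size grows by at most a factor $3$.} I would bound $|C'|$ by summing over the pairwise disjoint groups. For the inner triple of a vertex $v$, all three edges have capacity $u_{\Gvc}(v)$, so the group contributes $3\,u_{\Gvc}(v)\,M_v$ to $|C'|$, where $M_v$ is the largest of the three $C$-values, whereas it contributes $u_{\Gvc}(v)$ times the \emph{sum} of those three $C$-values to $|C|$, which is at least $u_{\Gvc}(v)\,M_v$; so this group's share of $|C'|$ is at most $3$ times its share of $|C|$. For the outer pair of an edge $e$, both edges have capacity $u_{\Gvc}(e)$, and the same reasoning gives a factor $2\le 3$. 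Summing over all groups yields $|C'|\le 3|C|$. Combining the two parts, $\spa_h(C',D)=|C'|/\sep_h(C',D)=|C'|/|D|\le 3|C|/|D|=3\,\spa_h(C,D)\le 3\phi$.

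\textbf{Main obstacle.} Frankly there is no real obstacle here: the argument is just monotonicity of distances under a larger cut together with a counting bound. The only points worth double-checking are that the inner triples and outer pairs really do partition $E(\Gec)$ — so that the group-wise sums recover $|C|$ and $|C'|$ exactly — and that $C'$ stays within the allowed value set $\{0,\tfrac1h,\dots,1\}$; both are immediate from the reduction. It is also worth remarking that $C'$ now induces a cut on the base graph $\Gvc$ (assigning to $v$ the common inner-triple value and to $e$ the common outer-pair value), which is precisely why this normalization is useful for the subsequent lemmas.
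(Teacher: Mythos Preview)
Your proof is correct and follows essentially the same approach as the paper: monotonicity of distances under the larger cut gives $\sep_h(C',D)\ge\sep_h(C,D)$, and the group-wise accounting gives $|C'|\le 3|C|$. The paper's own proof is in fact a two-line sketch of exactly these two observations, so your write-up simply fills in the details (the partition of $\Eec$ into inner triples and outer pairs, the equal capacities within each group, and the check that $C'$ is a valid $h$-length moving cut) that the paper leaves implicit.
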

\begin{proof}
    
    We first note that from our construction of $C'$, the size of $C'$ only triples at most, which means $|C'| \le 3|C|$. 
    Further, since we only increase the cut length, the separation value w.r.t $D$ cannot be less and $\sep_h(C', D) \ge \sep_h(C, D)$.
    This over all shows that
    \[\spa_h(C', D) = \frac{|C'|}{\sep_h(C', D)} \le \frac{3|C|}{\sep_h(C, D)} = 3\spa_h(C, D)  \le 3\phi\]
\end{proof}

With the help of cut normalization, we further establish the equivalence of $(h,s)$-length sparsity between two types of graphs. 
In general, if a sparse cut exists in the edge-capacitated graph, we can construct a sparse normalized cut and transfer it to the vertex-capacitated graph while preserving the cut's sparsity.
A similar argument also applies to the other direction.

\begin{lemma}\label{lem:mutual_sparse_cut}
    Let $\Gvc = (\Vvc, \Eec)$ be a vertex-capacitated graph with node-weighting $\Avc$, and $\Gec = (\Veca, \Eec)$ be the corresponding edge-capacitated graph with node-weigthing $\Aec$.
    \begin{enumerate}
        \item\label{item:mutual_sparse_cut1} If there is an $h\cdot s$-length moving cut $C$ in vertex-capacitated $\Gvc$ such that $\spa_{(h,s)}(C, \Avc) < \phi$, then there exists an $h\cdot s$-length moving cut $C'$ in $\Gec$ such that $\spa_{(h,s)}(C', \Aec) < 3\phi$.
        \item\label{item:mutual_sparse_cut2} If there is an $h\cdot s$-length moving cut $C'$ in edge-capacitated $\Gec$ such that $\spa_{(h,s)}(C', \Aec) < \phi$, then there exists an $h\cdot s$-length moving cut $C$ in $\Gvc$ such that $\spa_{(h,s)}(C, \Avc) < \phi$.
    \end{enumerate}
\end{lemma}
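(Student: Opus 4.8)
The plan is to handle the two directions separately, each time turning a sparse moving cut on one graph into a sparse moving cut on the other, using the distance-preservation property from \Cref{lem:eq-distance} and the normalization step from \Cref{lem:uniform_cut}.

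\paragraph{Direction \ref{item:mutual_sparse_cut1} (from $\Gvc$ to $\Gec$).} Given an $h\cdot s$-length moving cut $C$ on $\Gvc$ with $\spa_{(h,s)}(C,\Avc)<\phi$, unfold the definition of $(h,s)$-length sparsity: there is an $\Avc$-respecting $h$-length demand $D$ on $\Vvc$ with $\spa_{h\cdot s}(C,D)<\phi$, i.e.\ $|C|/\sep_{h\cdot s}(C,D)<\phi$. First I would build the corresponding cut $C'$ on $\Gec$: for each vertex $v$ with cut value $C(v)$, assign the three inner edges $(\vin,\vmid),(\vmid,\vout),(\vin,\vout)$ cut value $C(v)$, and for each edge $e=(u,v)$ with cut value $C(e)$ assign both outer edges $(\uout,\vin),(\vout,\uin)$ cut value $C(e)$. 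Then $|C'|\le 3|C|$ directly from the construction (each vertex contributes to three inner edges of the same capacity $u_{\Gvc}(v)$, each edge to two outer edges of capacity $u_{\Gvc}(e)$; the factor is at most $3$). Next, transfer $D$ to the symmetric demand $D'$ on $\Gec$ supported on $\{(\umid,\vmid)\}$ as described before the lemma; $D'$ is $\Aec$-respecting since $\Aec(\vmid)=\Avc(v)$, and $D'$ is $h$-length in $\Gec$ by \Cref{lem:eq-distance}. The key step is to check that any $D$-pair $h\cdot s$-length separated by $C$ in $\Gvc$ gives a $D'$-pair $h\cdot s$-length separated by $C'$ in $\Gec$: if $\dist_{\Gvc-C}(u,v)>h\cdot s$, then because the length increases of $C'$ on the path gadgets of $\Gec$ exactly mimic those of $C$ on $\Gvc$ (and the cheapest $\vin\to\vout$ route still pays the inner-edge increase, since all three inner edges got the same cut value), we get $\dist_{\Gec-C'}(\umid,\vmid)>h\cdot s$. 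Hence $\sep_{h\cdot s}(C',D')\ge \sep_{h\cdot s}(C,D)$, and therefore $\spa_{h\cdot s}(C',D')\le 3|C|/\sep_{h\cdot s}(C,D)<3\phi$, which by the definition of $(h,s)$-length sparsity on $\Gec$ gives $\spa_{(h,s)}(C',\Aec)<3\phi$.

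\paragraph{Direction \ref{item:mutual_sparse_cut2} (from $\Gec$ to $\Gvc$).} Given an $h\cdot s$-length moving cut $C'$ on $\Gec$ with $\spa_{(h,s)}(C',\Aec)<\phi$, unfold the definition: there is an $\Aec$-respecting $h$-length demand $D'$ on $\Gec$ with $\spa_{h\cdot s}(C',D')<\phi$. Since $\Aec$ is supported only on the mid-vertices, $D'$ is supported on pairs $(\umid,\vmid)$, and by \Cref{lem:eq-distance} it corresponds to an $\Avc$-respecting $h$-length demand $D$ on $\Gvc$ with $|D|=|D'|$. Here I would apply \Cref{lem:uniform_cut}: restrict attention to the fully separated sub-demand $\widehat{D'}$ of $D'$ (the pairs $(\umid,\vmid)$ with $\dist_{\Gec-C'}(\umid,\vmid)>h\cdot s$), so $\spa_{h\cdot s}(C',\widehat{D'})=|C'|/|\widehat{D'}|<\phi$, then normalize $C'$ to $C''$ with $\spa_{h\cdot s}(C'',\widehat{D'})\le 3\spa_{h\cdot s}(C',\widehat{D'})<3\phi$. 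Wait --- that only gives $3\phi$, not $\phi$; so instead I would be more careful and exploit that the normalized cut's sparsity \emph{with respect to the best demand} can be controlled, or I would simply define the cut $C$ on $\Gvc$ directly from the normalized $C''$: set $C(v)$ to the common inner-edge value of the $v$-gadget and $C(e)$ to the common outer-edge value of the $e$-gadget. Because $C''$ is normalized, this is well-defined; because distances are preserved (a path in $\Gvc-C$ lifts to a path in $\Gec-C''$ of the same length, and conversely), every $D$-pair that stays within $h\cdot s$ under $C$ corresponds to a $D'$-pair within $h\cdot s$ under $C''$, so $\sep_{h\cdot s}(C,D)\ge \sep_{h\cdot s}(C'',D')$; and $|C|\le |C''|$ (in fact $|C|=|C''|/3$ since each $\Gvc$-unit is counted three/two times in $C''$ but once in $C$) --- one must track the capacities carefully here, but the upshot is $\spa_{h\cdot s}(C,D)\le \spa_{h\cdot s}(C',D')<\phi$, giving $\spa_{(h,s)}(C,\Avc)<\phi$.

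\paragraph{Main obstacle.} The delicate point is direction \ref{item:mutual_sparse_cut2}: getting the clean $\phi$ (rather than $3\phi$) requires that the size blow-up from normalization is exactly compensated when we push the normalized cut back down to $\Gvc$. The accounting I expect to need is: a vertex $v$ with $C(v)=c$ carries inner-edge weight $3\,u_{\Gvc}(v)\,c$ in $\Gec$, and an edge $e$ with $C(e)=c$ carries outer-edge weight $2\,u_{\Gvc}(e)\,c$; so $|C''|$ is \emph{already} inflated by these factors relative to $|C|=\sum_v u_{\Gvc}(v)C(v)+\sum_e u_{\Gvc}(e)C(e)$, and since separation can only have improved, dividing gives sparsity no worse in $\Gvc$ than in $\Gec$. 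Making this capacity bookkeeping precise --- and checking that the shortest-path structure really is preserved in both directions after the cut (in particular that the direct $(\vin,\vout)$ inner edge is never ``cheaper than it should be'' once cut) --- is the part of the argument that needs the most care; everything else is a routine unfolding of definitions plus \Cref{lem:eq-distance} and \Cref{lem:uniform_cut}.
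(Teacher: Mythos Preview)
Your direction~\ref{item:mutual_sparse_cut1} is correct and matches the paper. The gap is in direction~\ref{item:mutual_sparse_cut2}: the route through the normalized cut $C''$ does not give $|C|\le|C'|$. If the three inner-edge cut values at $v$ in $C'$ are $c_1,c_2,c_3$, then after normalization each inner edge carries $\max\{c_1,c_2,c_3\}$ and so does $C(v)$. Hence the $v$-contribution to $|C''|$ is $3\,u_{\Gvc}(v)\cdot\max$ while to $|C|$ it is $u_{\Gvc}(v)\cdot\max$; for outer edges the ratio is $2$. So in general only $|C''|/3\le|C|\le|C''|/2$, and combining with $|C''|\le 3|C'|$ yields at best $|C|\le\tfrac{3}{2}|C'|$, i.e.\ sparsity below $\tfrac{3}{2}\phi$, not $\phi$. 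Your claim ``$|C|=|C''|/3$'' fails whenever any cut mass sits on outer edges, and the compensation you sketch in the ``main obstacle'' paragraph does not close this gap.

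The fix (this is exactly the paper's argument) is to bypass $|C''|$ and compare $|C|$ to $|C'|$ directly: since $C(v)=\max\{c_1,c_2,c_3\}\le c_1+c_2+c_3$ and all three inner edges share the same capacity $u_{\Gvc}(v)$, the $v$-contribution to $|C|$ is at most the $v$-gadget's total contribution to $|C'|$; the same $\max\le\text{sum}$ bound works for each pair of outer edges. This gives $|C|\le|C'|$, and together with $\sep_{h\cdot s}(C,D)\ge\sep_{h\cdot s}(C',D')$ (normalization only raises lengths, hence only raises separation) you get the clean $\phi$. Normalization is still needed so that $C$ is well-defined and distances transfer between $\Gvc-C$ and $\Gec-C''$, but the factor-$3$ sparsity blow-up from \Cref{lem:uniform_cut} is never invoked.
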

\begin{proof}
We will prove \Cref{item:mutual_sparse_cut1} and \Cref{item:mutual_sparse_cut2} separately.

\noindent{\textbf{Proof of \Cref{item:mutual_sparse_cut1}.}}    
    Since $\spa_{(h,s)}(c, \Avc) < \phi$, let 
        \[D^* = \argmin_{\Avc\text{-respecting } h\text{-length demand } D} \spa_{h\cdot s} (C, D)\]
        then we have $\spa_{h\cdot s}(C, D^*) < \phi$. W.l.o.g, we can assume $D^*$ is symmetric and fully separated. 
        As mentioned in the reduction, this immediately gives an $\Aec$-respecting symmetric $h$-length demand $D^*$ in $\Gec$ as well.
        To see that $D^*$ is also $h$-length in $\Gec$, we note that from \cref{lem:eq-distance}, we have
        \[\dist_{\Gec}(u_{\MID}, v_{\MID}) = \dist_{\Gec}(v_{\MID}, u_{\MID}) =\dist_{\Gvc}(u , v)\]

        Then we construct a sparse cut $C'$ based on $C$, and show that the sparsity of $C'$ w.r.t $D^*$ is also dependent on $\phi$. For any vertex $u \in \Vvc$ where $C(u) > 0$, we set
        \[C'((u_{\IN}, u_{\MID})) = C'((u_{\MID}, u_{\OUT})) = C'((u_{\IN}, u_{\OUT})) = C(u)\]
        For any $e = (u,v) \in \Evc$ where $C(e) > 0$, we set
        \[C'((u_{\OUT}, v_{\IN})) = C'((v_{\OUT}, u_{\IN})) = C(e)\]
        From the construction, we have that $|C'| \le 3|C|$. Then we show that $D^*$ is also fully separated by $C'$. We know that for any vertex pairs $u, v \in \Vvc$ where $\D^*(u, v) > 0$, the distance between them is bounded by $h$ in $\Gvc$ but at least $h\cdot s$ after applying cut in $\Gvc - C$, i.e.,
        $\dist_{\Gvc}(u ,v) \le h$ and $\dist_{\Gvc - C}(u, v) > h\cdot s$. 
        In the edge-capacitated graph $\Gec$, the key observation is that the equivalence of distance is still kept.
        We still have $\dist_{\Gvc - C}(u, v) = \dist_{\Gec - C'}(u_{\MID}, v_{\MID})$, because we set the length increase over edges in $\Gec$ same as the corresponding vertices or edges in the original graph $\Gvc$.
        Namely, if we construct an edge-capacitated graph based on $\Gvc-C$, we get exactly $\Gec - C'$.
        Again from \cref{lem:eq-distance}, this implies that $\dist_{\Gec}(u_{\MID}, v_{\MID}) \le h$ and $\dist_{\Gec - C'}(u_{\MID}, v_{\MID}) > h\cdot s$, which means that $D^*$ is also fully separated by $C'$ in $\Gec$. Thus we have $\sep_{h\cdot s}(C, D^*) = \sep_{h\cdot s}(C', D^*)$. 
        We can conclude that
        \begin{align*}
            \spa_{(h, s)}(C', \Aec) & = \min_{\Aec\text{-respecting } h\text{-length symmetric demand } D} \spa_{h\cdot s}(C', D) \\
            & \le \spa_{h\cdot s}(C', D^*)\\
            & = \frac{|C'|}{\sep_{h\cdot s}(C', D^*)}\\
            & \le \frac{3|C|}{\sep_{h\cdot s}(C, D^*)}\\
            & \le 3\phi
        \end{align*}

        \noindent{\textbf{Proof of \Cref{item:mutual_sparse_cut2}.}}
        Since $\spa_{(h,s)}(C', \Aec) < \phi$, let 
        \[D^* = \argmin_{\Aec\text{-respecting } h\text{-length symmetric demand } D} \spa_{h\cdot s} (C', D)\]
        
        From \cref{lem:uniform_cut}, we know there exists a normalized $h\cdot s$-length cut $C_{\mathrm{n}}$ from $C'$. 
        We note that the separation from $C_{\mathrm{n}}$ is at least as large as that from cut $C'$, i.e. \[\sep_{h\cdot s}(C_{\mathrm{n}}, D^*) \ge \sep_{h\cdot s}(C', D^*)\]  
        
        Similarly, we construct a $h\cdot s$-length moving cut $C$ in the vertex-capacitated graph $\Gvc$. Namely, for any vertex $u \in \Vvc$, we set $C(u) = C_{\mathrm{n}}((u_{\IN}, u_{\MID}))$. 
        For any edge $e = (u, v) \in \Evc$, we set $C(e) = C_{\mathrm{n}}((u_{\OUT}, v_{\IN}))$.
        From the property of $C_{\mathrm{n}}$ that the length increase is the same among the same group of inner edges and outer edges, we immediately have $|C| \le \frac{1}{2} |C_{\mathrm{n}}|$. However, the construction is equivalent to that we assign the maximum value of the cut length among the same group of inner edges or the pair of outer edges from $C'$ to the corresponding vertex or edge in $\Gvc$ respectively. This actually shows that $|C| \le |C'|$, which is a tighter bound.
        From a similar argument as above, we can show that the demand $D^*$ is $\Avc$-respecting and $h$-length in $\Gvc$ as well.
        From the construction of $C$ and \cref{lem:eq-distance}, we have $\sep_{h\cdot s} (C, D^*) = \sep_{h\cdot s}(C_{\mathrm{n}}, D^*) \ge \sep_{h\cdot s}(C', D^*)$.
        We further have 
        \begin{align*}
            \spa_{(h, s)}(C, \Avc) & = \min_{\Avc\text{-respecting } h\text{-length demand } D} \spa_{h\cdot s}(C, D) \\
            & \le \spa_{h\cdot s}(C, D^*)\\
            & = \frac{|C|}{\sep_{h\cdot s}(C, D^*)}\\
            & \le \frac{|C'|}{\sep_{h\cdot s}(C', D^*)}\\
            & \le \phi\\
        \end{align*}

\end{proof}

We summarize the equivalence of $(h,s)$-length sparsity with the main theorem in this section.

\thmMutualExpanding*
\begin{proof}
    Given $\Avc$ is $(h,s)$-length $\phi$-expanding in $\Gvc$,
    suppose $\Aec$ is not $(h,s)$-length $\phi$-expanding in $\Gec$, then there exists a cut $C'$ in $\Gec$ such that $\spa_{(h,s)}(C',\Aec) < \phi$.
    From \cref{lem:mutual_sparse_cut}, we can find a cut $C$ in $\Gvc$ where $\spa_{(h,s)}(C,\Avc) < \phi$. This contradicts that $\Avc$ is $(h,s)$-length $\phi$-expanding in $\Gvc$.

    A similar argument applies to the other direction.
\end{proof}

\subsection{Existential Proof of the Decomposition}\label{sec:proof_LCVED}
In this section, we present one of the key results, namely the existence of an expander decomposition for undirected graphs with vertex capacity. 
Following the reduction introduced before, we apply \cref{thm:directed_edge_ED} to the edge-capacitated graph $\Gec$ to obtain the expander decomposition and then translate the cut into the vertex-capacitated graph. 
We still need to demonstrate that the translated cut forms a valid expander decomposition for $\Gvc$, and the resulting cut has a favorable size.

\thmvertexED*
\begin{proof}
    Given the vertex-capacitated graph $\Gvc$ and the node-weighting $\Avc$, we construct the corresponding edge-capacitated graph $\Gec$ and node-weighting $\Aec$ as defined before.
    Whenever $\Aec$ is not $(h,s)$-length $\phi$-expanding in $\Gec$, we can find a sparse cut in $\Gec$. We can always normalize such a cut in $\Gec$ as shown in \cref{lem:uniform_cut}, and apply it to the graph $\Gec$. This process can be done iteratively until the resulting graph does not admit a sparse cut.  
    This follows the idea from \cref{lem:existential_cut_sequence} and \cref{thm:directed_edge_ED}. 
    We note that we do not apply the normalization to the union of all sparse cuts because the normalization will change the conductance of the node-weighting in the graph applied with the cut.

    Specifically, from graph $\Gec^{(1)} = \Gec$, if $\Aec$ is already $(h,s)$-length $3\phi$-expanding in $\Gec$, then we are done. Otherwise, we take an arbitrary $h\cdot s$-length cut $C'$ with $h\cdot s$ sparsity strictly smaller than $3\phi$ w.r.t the node-weighting $\Aec$, and from \cref{lem:uniform_cut}, we can construct a normalized cut $\Cec^{(1)}$. We note that the normalization only increases the length, thus for any $\Aec$-respecting demand $D$, we have $\sep_{h \cdot s}(\Cec^{(1)}, D) \ge \sep_{h \cdot s}(C', D)$. Considering the size of the normalized cut can increase by up to threefold, this gives that $\spa_{h\cdot s}(\Cec^{(1)}, \Aec) \le 9\phi$.
    We then apply $\Cec^{(1)}$ to graph $\Gec$ to update $\Gec^{(2)} = \Gec - \Cec^{(1)}$.

    We repeat the above procedure, until we reach to some integer $k$ where $\Aec$ is $(h,s)$-length $3\phi$-expanding in $\Gec^{(k+1)} = \Gec - \sum_{j \le k}\Cec^{(j)}$. We take the union of all these cuts and set a threshold of $h\cdot s$ over the length increase of every edge and vertex. We name the resulting $h\cdot s$-length cut $\Cec$, and by definition, $\Cec$ is a valid expander decomposition for the edge-capacitated graph $\Gec$ and $\Aec$.
    From a similar analysis from \cref{lem:existential_cut_sequence}, 
    we can demonstrate that an upper bound still exists on the summation of all these cut sizes with the potential function technique.
    We have $|\Cec| \le O(2^{8\alpha}\log n)\cdot \phi |\Aec|$. What is more important is that $\Cec$ is normalized.

    From $\Cec$, we can construct a corresponding vertex cut $\Cvc$. Namely, for vertex $u \in V_1$, we assign $\Cvc(u) = \Cec((u_{\IN},u_{\MID}))$ which is the same among the same group of inner edges. For edge $e = (u, v) \in \Evc$, we assign $\Cvc((u,v)) = \Cec((u_{\OUT}, v_{\IN}))$ which is the same between the pair of outer edges.
    We apply $\Cvc$ to $\Gvc$, and we get the resulting graph $\Gvc - \Cvc$. We observe that the edge-capacitated graph $\Gec - \Cec$ exactly corresponds to the graph $\Gvc - \Cvc$. We further note that the node-weighting $\Avc$ corresponds to $\Aec$ due to our construction at the beginning. 
    Since we have shown $\Aec$ to be $(h,s)$-length $3\phi$-expanding in $\Gec - \Cec$,
    from \cref{thm:mutual_expanding}, we have that $\Avc$ is $\phi$-expanding in $\Gvc - \Cvc$. From that $|\Cvc| \le \frac{1}{2}|\Cec|$ and $|\Avc| = |\Aec|$, we also have $|\Cvc| \le O(2^{8\alpha}\log n)\cdot \phi |\Avc|$. This concludes the theorem similarily as \cref{thm:directed_edge_ED}.
\end{proof}

\subsection{Routing Characterization of Length-Constrained Vertex Expansion}\label{sec:routing_LCVE}
Furthermore, due to the structural similarity introduced by the reduction, an efficient routing for a given demand in one type of graph naturally translates into a corresponding favorable routing in the other. 
This reduction then facilitates the transfer of routing characterizations from length-constrained directed expansion to vertex-capacitated graphs as follows:

\thmroutingLCVE*

\begin{proof}
    We first construct a corresponding edge-capacitated graph $\Gec$ with node-weighting $\Aec$ as describe before.
    Since $\Avc$ is $(h,s)$-length $\phi$-expanding in $G$, from \cref{thm:mutual_expanding}, we have that $\Aec$ is also $(h,s)$-length $\phi$-expanding in $\Gec$. From \cref{thm:routing_LCDE}, we have that every $h$-length $\Aec$-respecting symmetric demand can be routed in $\Gec$ with congestion at most $O(\frac{\log N}{\phi})$ and dilation at most $h \cdot s$. 
    We will exploit the routing in $\Gec$ to route demands in the vertex-capacitated graph.

    Namely, for any $h$-length $\Avc$-respecting demand $D$, 
    we can assume it is symmetric.
    As mentioned before, $D$ is also an $h$-length $\Aec$-respecting symmetric demand in the edge-capacitated graph. 
    Then there exists a flow $\Fec$ in $\Gec$ that witnesses the routing with desired congestion and dilation. 
    Consider any $(\umid, \vmid)$-flow path $\Pec$ of $\Fec$, we can construct a corresponding $(u, v)$-flow path $\Pvc$ by following the $\Pec$ as shown in \cref{fig:gvc_gec_path}.
    The observation is that for any edge in $\Gec$ with $\uout$ as the head vertex, it ends at some vertex $\vin$. And from the vertex $\vin$, any path must arrive at $\vout$ in either one or two steps. 
    Thus we can record groups of vertices that are passed by $\Pec$, and create the path $\Pvc$ by traversing vertices in $\Gvc$ corresponding to groups of vertices in $\Gec$ one by one.
    In general, we construct a corresponding flow $\Fvc$ path in $\Gvc$ by following the flow path of $\Fec$.
    From the length and capacity property of the reduction,
    flow $\Fvc$ in $\Gvc$ indeed routes $D$ with at most the same length and up to three times congestion compared to the original congestion, which is still $O(\frac{\log N}{\phi})$.

    If $\Avc$ is not $(h,s)$-length $\phi$-expanding in $\Gvc$, from \cref{thm:mutual_expanding}, we have that $\Aec$ cannot be $(h,s)$-length $3\phi$-expanding in $\Gec$. 
    Then there exists an $h$-length $\Aec$-respecting demand $D$ such that it cannot be routed with congestion at most $\frac{1}{6\phi}$ and length at most $\frac{h\cdot s}{2}$.

    Similarly, we have an $h$-length $\Avc$-respecting demand $D$ in $\Gvc$.
    Suppose $D$ can be routed in $\Gvc$ with congestion at most $\frac{1}{6\phi}$ and length at most $\frac{h \cdot s}{2}$, and the routing is witnessed by a flow $\Fvc$.
    We construct a flow $\Fec$ in $\Gec$ by similarily following the trajectory of each flow path in $\Fvc$.
    The resulting flow $\Fec$ witnesses a routing for $D$ with at most the same congestion and the same length. But this gives a contradiction. Thus we also find a $h$-length $\Avc$-respecting demand $D$ that cannot be routed in $\Gvc$ with congestion at most $\frac{1}{6} \phi$ and length at most $\frac{h \cdot s}{2}$.
\end{proof}

\clearpage

\section{Length-Constrained Vertex-Capacitated Flow Shortcuts}\label{sec:VertexShortcut}

In this section, we show the existence of LC-flow shortcuts in vertex-capacitated graphs. The \Cref{def:HLengthShortcut} below generalizes \Cref{def:LCFlowShortcut} of LC-flow shortcuts in the sense that an additional length parameter $h$ is given and the forward mapping only holds for demands routable in $G$ with congestion $1$ and length at most $h$.

\begin{definition}
[Length-Constrained Flow Shortcut]
\label{def:HLengthShortcut}
Given a graph $G=(V,E)$, we say an edge set $E'$ (possibly with endpoints outside $V$) is an $t$-step $h$-LC-flow shortcut of $G$ with length slack $\lambda$ and congestion slack $\kappa$ if

\begin{itemize}

\item (Forward Mapping) for every demand $D$ routable in $G$ with congestion $1$ and length $h'\leq h$, $D$ is routable in $G\cup E'$ with congestion 1, length $\lambda h'$, and maximum step $t$, and

\item (Backward Mapping) for every demand $D$ on $V(G)$ routable in $G\cup E'$ with congestion 1 and length $h'$, $D$ is routable in $G$ with congestion $\kappa$ and length $h'$.

\end{itemize}

\end{definition}

We note that in \cite{Haeupler2024emulator} there is an analogous but weaker definition of $h$-LC-flow shortcut, in which the forward mapping only guarantee that $D$ is routable in $G\cup E'$ with length $\lambda h$ instead of $\lambda h'$ (and the same congestion and step). That is, the length slack in \Cref{def:HLengthShortcut} is \emph{competitive} in the sense that it upper bounds the ratio between the lengths of the shortcut flow and the original flow. Hence, by choosing a sufficiently large $h$, the total length of vertices and edges in $G$, an $h$-LC-flow shortcut is automatically an LC-flow shortcut.

\begin{theorem}\label{thm:vertex_emulator}
    Given a vertex-capacitated graph $G$ with parameters $\epsilon = \Omega(\frac{1}{\log n})$, there exists a $t$-step LC-flow shortcut $E'$ with length slack $O(1/\epsilon^{3})$, congestion slack $O(n^{O(\epsilon)}\log^{3}n/\epsilon^{2})$, $t = 2^{O(\frac{1}{\epsilon})}$ and size $|E'|\leq O(n^{1+O(\epsilon)}\log n/\epsilon^{2})$.
\end{theorem}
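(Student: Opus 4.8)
The overall strategy is the top-down recursive scheme described in the techniques section, instantiated on a hierarchy of LC-vertex expander decompositions. First I would reduce to the $h$-LC version: by \Cref{def:HLengthShortcut} and the remark following it, it suffices to build, for a suitable global length parameter $h$ (the total length of all vertices and edges), an $h$-LC-flow shortcut with the claimed parameters, since a competitive-length-slack $h$-LC-flow shortcut with $h$ this large is automatically an LC-flow shortcut. I would also reduce to unit vertex capacities by the standard trick of splitting a vertex of capacity $c$ into $c$ parallel unit-capacity copies (this only blows up $n$ by a $\poly(n)$ factor since capacities are $\poly(n)$, which is absorbed into the $n^{O(\epsilon)}$ and $\log n$ slacks); alternatively handle capacities directly by using capacitated stars. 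Set $s = \Theta(1/\epsilon)$ so that $n^{O(1/s)} = n^{O(\epsilon)}$, and $\phi = 1/(n^{O(1/s)} n^{\epsilon}) = 1/n^{O(\epsilon)}$, so that by \Cref{thm:vertex_ED} the cut slack is $\kappa_{ED} = n^{O(\epsilon)}$.

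Next I would build the hierarchy: set $A_0 = \mathbbm{1}_V$; for $i = 0, 1, \dots, d$, invoke \Cref{thm:vertex_ED} to get an $(h,s)$-length $\phi$-expander decomposition $C_{i+1}$ of $A_i$ in $G - \sum_{j\le i} C_j$ (working inside the cut-down graph at each level), and set $A_{i+1} = \mathbbm{1}_{\supp(C_{i+1})}$, i.e. the indicator of the vertices on which $C_{i+1}$ has positive cut value. Since $|C_{i+1}| \le \kappa_{ED}\,\phi\,|A_i|$ and each cut vertex contributes at least $1/(hs)$ to $|C_{i+1}|$ (as $C_{i+1}$ takes values that are multiples of $1/(hs)$ and capacities are $\ge 1$), we get $|A_{i+1}| = |\supp(C_{i+1})| \le hs\,\kappa_{ED}\,\phi\,|A_i|$. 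Choosing $\phi$ slightly smaller to absorb the $hs$ factor (or folding $hs=\poly(\log n)/\epsilon$ into $n^\epsilon$), this gives $|A_{i+1}| \le n^{-\Omega(\epsilon)} |A_i|$, so after $d = O(1/\epsilon)$ levels $A_{d+1} = 0$ and hence $C_{d+1} = \emptyset$. For each level $i$, apply the warm-up construction: build a sparse neighborhood cover of $G - \sum_{j\le i}C_j$ (via \Cref{thm:nbhCover} with $k = \Theta(s)$) into clusters of diameter $O(hs)$ such that each ball of radius $h/s$ is covered and each vertex lies in $n^{O(1/s)}$ clusters; for each cluster $S$ add a star with Steiner center connecting every $v \in S \cap \supp(A_i)$ to the center with an edge of length $O(hs)$ and capacity $A_i(v)$. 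Let $E'$ be the union over all levels. The size bound follows: $\sum_i |\supp(A_i)|\cdot n^{O(1/s)} \le n^{O(\epsilon)}\,|A_0|\cdot O(1/\epsilon) = n^{1+O(\epsilon)}\log n/\epsilon^2$ (absorbing the geometric sum and $d$).

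For the \textbf{backward mapping}, any flow in $G\cup E'$ of congestion $1$ and length $h'$ can be un-shortcut level by level: each star edge used at level $i$ corresponds to an $h$-length $A_i$-respecting demand in $G-\sum_{j\le i}C_j$ (the cluster diameter bound and the $h/s$ covering radius guarantee the induced demand is $O(h/s)$-length-ish, hence $h$-length after the length slack bookkeeping); by \Cref{thm:routing_LCVE} (routing characterization) applied to $G-\sum_{j\le i}C_j$ this demand routes in that subgraph — hence in $G$ — with congestion $O(\log N / \phi) = n^{O(\epsilon)}\log n$ and without length increase, and summing over the $O(1/\epsilon)$ levels and the $n^{O(1/s)}$ cover-width gives total congestion slack $O(n^{O(\epsilon)}\log^3 n/\epsilon^2)$. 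For the \textbf{forward mapping}, given a feasible $h'$-length flow $F$ in $G$ ($h' \le h$), I would process each flow path $P$ by the recursive Step 1–Step 3 procedure: at level $i$ the input is a subpath $P'$ of length $\le h'$ (really $\le h$, after moving-cut length inflation is accounted for) in $G - \sum_{j\le i}C_j$; use the level-$i$ stars to shortcut $P'[x_i,y_i]$ between its first and last vertices in $\supp(A_i)$ within $2$ steps (legitimate because that subpath avoids $C_{i+1}$, so its endpoints lie in a common cover cluster, invoking the expansion of $A_i$ in $G-\sum_{j\le i}C_j$ for the congestion side); recurse on the prefix and suffix, which avoid $C_i$ and hence are valid level-$(i-1)$ inputs; concatenate with two original edges. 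The step count satisfies $t(i) \le 2t(i-1)+O(1)$, giving $t = 2^{O(d)} = 2^{O(1/\epsilon)}$, and the length slack accumulates as $O(s)$ per level over $O(1/\epsilon)=O(s)$ levels times the neighborhood-cover factor $O(s)$, i.e. $O(s^3) = O(1/\epsilon^3)$.

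The \textbf{main obstacle} is the forward mapping with the moving-cut subtlety: the cuts $C_i$ are fractional moving cuts, not classical vertex cuts, so there is no clean partition of $P'$ into prefix / middle / suffix at a single cut vertex — a path may accumulate its length-$h$ budget gradually across many low-value cut vertices. The fix I would pursue is to define, along $P'$, the first point $x_i$ where the accumulated $\l_{C_{i+1},hs}$-length increase along the prefix exceeds a threshold (say $h$), declare the prefix up to there the level-$(i-1)$ subproblem, and repeat; this partitions $P'$ into $O(s)$ pieces each of which has bounded $C_{i+1}$-length, rather than just $3$ pieces, which is why the recursion fans out by $O(s)$ per level and the step bound degrades to $2^{O(d)} \cdot \poly(s) = 2^{O(1/\epsilon)}$ (the $\poly(s)$ absorbed into the exponent). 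Carefully bookkeeping that each piece, stripped of its $C_{i+1}$-length, has $G-\sum_{j\le i+1}C_j$-length at most $h$ and therefore is a legitimate input to level $i$, and that the middle subpaths $P'[x,y]$ genuinely fit inside a single neighborhood-cover cluster so the stars apply, is the technically delicate part and where the analysis "very different from \cite{Haeupler2024emulator}" is forced.
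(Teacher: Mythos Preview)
Your high-level outline---build a hierarchy of LC-vertex expander decompositions, add stars on neighborhood covers at each level, and argue the forward mapping by a top-down recursion---matches the paper's strategy. However, two substantive gaps prevent the argument from going through as written.

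\textbf{Missing length scales.} You use a single global length parameter $h$ (the total graph length) throughout, with star edges of length $\Theta(hs)$. This does not give a \emph{competitive} length slack: a flow path of original length $h' \ll h$ is mapped to a shortcut path whose star-segments alone have length $\Theta(hs)$, so the slack on that demand is $\Theta(hs/h')$, unbounded in $1/\epsilon$. \Cref{def:LCFlowShortcut} requires length slack $O(1/\epsilon^3)$ for \emph{every} $h'$, so the star used on a given path must have length comparable to that path. The paper handles this by running, at every level $i$, a separate expander decomposition $C_{i+1,j}$ and neighborhood cover for each dyadic scale $h_j = 2^j$, $1 \le j \le \lceil\log_2 h\rceil$; a path is then shortcut at the scale $h_j \approx \leng(P,G)$, keeping the length competitive. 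This is also why $A_{i+1}$ must sum over all scales $j$.

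\textbf{The choice of $A_{i+1}$ and the moving-cut fix.} Setting $A_{i+1} = \mathbbm{1}_{\supp(C_{i+1})}$ breaks both the size recursion and the forward mapping. For the size recursion, your bound $|A_{i+1}| \le hs\cdot\kappa_{ED}\phi|A_i|$ picks up a factor $hs$ because a cut vertex may contribute only $1/(hs)$ to $|C_{i+1}|$; but $h = (m+n)N$ is $\poly(n)$, not $\poly(\log n)$ as you wrote, so absorbing $hs$ into $\phi$ forces $\phi = 1/\poly(n)$ and ruins the congestion slack. For the forward mapping, once cuts are fractional, your threshold-partition points need not land in $\supp(A_i)$, and even when they do, the demand placed on such a vertex by many paths can exceed $A_i(v)=1$; nothing in your scheme makes the level-$i$ star-demand $A_i$-respecting. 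The paper avoids both issues by taking $A_{i+1} := 4s^2\sum_j \deg_{C_{i+1,j}}$ and, in the forward mapping, \emph{spreading} each path's flow over all its vertices $w$ with weight proportional to a budget $x_P(w) \approx 4s^2\sum_{j'} C_{i,j'}(\text{$w$ and its incident $P$-edges})$. The resulting demand is $A_i$-respecting by construction (this is exactly \Cref{lemma:Reroute}), and $|A_{i+1}|$ decreases geometrically because $\sum_v \deg_C(v) \le 2|C|$ with no $h$-factor. This three-phase budget-spreading, not threshold partitioning, is the crux of the forward mapping.
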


\Cref{thm:vertex_emulator} is the main theorem of this section. In what follows, actually we will focus on constructing $h$-LC-flow shortcut. Setting $h = (m+n)N$ gives the LC-flow shortcut in \Cref{thm:vertex_emulator}.

\subsection{The Construction}


\begin{algorithm}
\caption{LC-FlowShortcut$(G,\epsilon,h)$}
\label{algo:emulator}
\begin{algorithmic}[1]
\State Initialize $A_{0} = u_{V(G)}$, where $u_{V(G)}$ denotes the vertex capacity function of $G$.
\State Initialize $s=1/\epsilon$, $\phi = 1/(n^{\epsilon}\kappa)$ ($\kappa = O(n^{O(1/s)}\log n)$ is the cut slack from \Cref{thm:vertex_ED}).
\State Initialize $i\gets 0$.
\While{$|A_{i}|> 0$}
\For{$j$ from $1$ to $\lceil \log_{2}h \rceil$}
\State $h_{j}\gets 2^{j}, h_{\cov,j} = 4h_{j}, h_{\diam,j} = h_{\cov,j}\cdot s$.
\State\label{line:LCED}$C_{i+1,j}\gets$ an $(h_{\diam,j},s)$-length $\phi$-expander decomposition of $A_{i}$ in $G$ by \Cref{thm:vertex_ED}.
\State\label{line:NC}${\cal N}_{i,j}\gets$ a neighborhood cover with covering radius $h_{\cov,j}$, diameter $h_{\diam,j}$ in $G-C_{i+1,j}$ by \Cref{thm:nbhCover}.
\State\label{line:StarGraphs}$H_{i,j}= \bigcup_{S\in{\cal N}_{i,j}} H_{S}$, where $H_{S}$ is the $h_{j}s$-length $A_{i}$-capacitated star graph on $S$.
\EndFor
\State\label{line:NextLevelNW} $A_{i+1} = \sum_{j}\frac{h_{\diam,j}\cdot s}{h_{j}}\cdot \deg_{C_{i+1,j}} = 4s^{2}\cdot\sum_{j}\deg_{C_{i+1,j}}$.
\EndWhile
\State Return $E' = \bigcup_{i,j} E(H_{i,j})$.
\end{algorithmic}
\end{algorithm}

The construction of the flow shortcut graph $G'$ is given by \Cref{algo:emulator}. The star graphs in \Cref{line:StarGraphs} are formally defined in \Cref{def:StarGraphs}.

\begin{definition}[Star Graphs]
\label{def:StarGraphs}
Given a graph $G$ with a node-weighting $A$ and a length parameter $h$, the $h$-length $A$-capacitated star graph on some $S\subseteq V(G)$, denoted by $H_{S}$, has
\[
V(H_{S}) = (\supp(A)\cap S)\cup \{r_{S}\}\text{ and }E(H_{S}) = \{(v,r_{S})\mid v\in V(H_{S})\setminus\{r_{S}\}\},
\]
where the vertex $r_{S}$ is a Steiner vertex serving as the center and $V(H_{S})\setminus\{r_{S}\}$ are original vertices. The length and capacity of each original vertex is unchanged, while $r_{S}$ has length $1$ and capacity $\sum_{v\in S} A(v)$. Each edge $(v,r_{S})$ has length $h$ and capacity $A(v)$.
\end{definition}

In short, \Cref{algo:emulator} mainly constructs an length-constrained expander hierarchy $\{A_{i}, C_{i+1,j}\mid 0\leq i\leq d,1\leq j\leq \lceil \log_{2}h\rceil\}$, where $d$ is the largest $i$ such that $|A_{i}|>0$. We point out that $C_{d+1,j}$ is a zero cut for all $j$. Then the shortcut graph $G'$ is obtained by adding star graphs on neighborhoods of each LC-expander $G-C_{i+1,j}$.

We remark that we do LC-expander decompositions with different length parameters $h_{j}$ at one level because we aim at a shortcut graph with length slack significantly smaller than its step bound. Intuitively, if we only use LC-expanders with length parameter around $h$ to shortcut an original $h$-length flow path $P$, then inevitably each step will have length around $h$, which means the length slack cannot go far below the number of steps. Now, providing LC-expanders with different length parameters, when we want to shortcut a subpath of $P$ with length $h'$ far smaller than $h$, we can choose the appropriate LC-expander to obtain a shortcut with length around $h'$ instead of $h$. Another benefit is that this automatically gives a competitive length slack (this is why in \Cref{def:HLengthShortcut} we define the length slack of $h$-LC-flow shortcut to be competitive).

We first argue the size bound of $E'$. Observe that, in \Cref{line:LCED}, we have $|C_{i+1,j}|\leq \kappa\phi |A_{i}|\leq |A_{i}|/n^{\epsilon}$ by \Cref{thm:vertex_ED}. In \Cref{line:NC}, the width of each neighborhood cover ${\cal N}_{i,j}$ is $\omega = n^{O(1/s)}s = n^{O(\epsilon)}/\epsilon$ by \Cref{thm:nbhCover}. Furthermore, because in \Cref{line:NextLevelNW} we have $|A_{i+1}|\leq 4s^{2}\sum_{j'}|C_{i+1,j'}|\leq 4s^{2}\log h|A_{i}|/n^{\epsilon} = O(\log N/(\epsilon^{2}n^{\epsilon}))|A_{i}|$, we can upper bound $d$ by $d \leq O(\log |A_{0}|/\log(\epsilon^{2}n^{\epsilon}/\log N)) = O(1/\epsilon)$. Finally, by the algorithm, we have
\[
|E'| \le O(d\log h)\cdot \omega\cdot n = O(n^{1+O(\epsilon)}\log n/\epsilon^{2}).
\]

Next we show the quality of the shortcut. Before that, we introduce a helper lemma \Cref{lemma:StarGraphRouting}, which shows the demands that each $H_{i,j}$ can route within small steps. 
\begin{lemma}
\label{lemma:StarGraphRouting}
For each $i,j$, any demand $\hat{D}$ that is $h_{\cov,j}$-length in $G-C_{i+1,j}$, $A_{i}$-respecting and $u_{V(G)}$-respecting can be routed in $H_{i,j}$ with length $2h_{\diam,j}s+1$, congestion $1$ and step $2$.
\end{lemma}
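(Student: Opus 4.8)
The plan is to route $\hat D$ separately inside each neighborhood cover cluster, using the covering radius property to assign each demand pair to a single cluster, and then observe that routing within a star graph is essentially trivial. First I would invoke the covering property of ${\cal N}_{i,j}$: since $\hat D$ is $h_{\cov,j}$-length in $G-C_{i+1,j}$, for every pair $(u,v)$ with $\hat D(u,v)>0$ we have $\dist_{G-C_{i+1,j}}(u,v)\le h_{\cov,j}$, so $v\in\ball_{G-C_{i+1,j}}(u,h_{\cov,j})$. By the definition of a neighborhood cover with covering radius $h_{\cov,j}$ (Theorem~\ref{thm:nbhCover}), there is a clustering ${\cal S}\in{\cal N}_{i,j}$ and a cluster $S\in{\cal S}$ with $\ball_{G-C_{i+1,j}}(u,h_{\cov,j})\subseteq S$, so both $u$ and $v$ lie in $S$. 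Fix such an assignment, breaking ties arbitrarily, and let $\hat D_S$ be the part of $\hat D$ assigned to cluster $S$; then $\hat D=\sum_{S}\hat D_S$ and each $\hat D_S$ is supported on pairs inside $S$. Since both endpoints of any demand pair of $\hat D_S$ lie in $\supp(A_i)\cap S$ (they have positive $A_i$-respecting demand, hence positive $A_i$), both are vertices of the star graph $H_S$.

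Next I would route each $\hat D_S$ inside $H_S$ by sending, for each pair $(u,v)$, a flow of value $\hat D_S(u,v)$ along the two-edge path $u\to r_S\to v$. This flow has step $2$ and length exactly $\ell(u,r_S)+\ell(r_S)+\ell(r_S,v)=h_js+1+h_js=2h_js+1$, which since $h_{\diam,j}=h_{\cov,j}s=4h_js\ge h_js$ is at most $2h_{\diam,j}s+1$ (in fact much smaller, but this weaker bound suffices and matches the statement). For the congestion inside $H_S$: the edge $(u,r_S)$ carries total flow $\sum_v \hat D_S(u,v)+\sum_w\hat D_S(w,u)\le \hat D(u,\cdot)+\hat D(\cdot,u)\le 2A_i(u)$ if we are not careful — here I would use that the demand is $A_i$-respecting in the sense $\max\{\hat D(u,\cdot),\hat D(\cdot,u)\}\le A_i(u)$, but a single edge $(u,r_S)$ carries both outgoing and incoming flow at $u$. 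The clean fix is to note the edge capacity is $A_i(u)$ while the load is at most $\hat D_S(u,\cdot)+\hat D_S(\cdot,u)$; if $\hat D$ is symmetric (which we may assume in the undirected setting, as the paper does elsewhere) then $\hat D_S(u,\cdot)=\hat D_S(\cdot,u)$ and we should instead route each symmetric pair once, giving load $\le \hat D_S(u,\cdot)\le A_i(u)$, i.e.\ congestion $\le 1$ on the edge. The center $r_S$ has capacity $\sum_{w\in S}A_i(w)$ and carries flow $\sum_{(u,v)}\hat D_S(u,v)\le \sum_{u\in S}\hat D_S(u,\cdot)\le \sum_{u\in S}A_i(u)$, so its congestion is also $\le 1$.

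Finally, since the clusters of a single clustering ${\cal S}$ are vertex-disjoint and the star graphs $\{H_S\}_{S}$ share only their disjoint original vertices (the Steiner centers $r_S$ are distinct), the flows for different clusters do not interfere on the star-graph edges or Steiner centers. The only shared objects are the original vertices of $G$, and there we must also respect $u_{V(G)}$: an original vertex $v$ appears in at most $\omega$ clusters, but we only ever use $v$ as an endpoint of its flow paths, so the total flow through $v$ across all clusters is $\sum_{S\ni v}\hat D_S(v,\cdot)\le \hat D(v,\cdot)\le u_{V(G)}(v)$ using that $\hat D$ is $u_{V(G)}$-respecting — this is exactly why both respecting hypotheses are needed. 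Combining, the union flow routes $\hat D$ in $H_{i,j}=\bigcup_S H_S$ with congestion $1$, length $2h_js+1\le 2h_{\diam,j}s+1$, and step $2$, as claimed.

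The main obstacle I anticipate is the bookkeeping around the congestion-$1$ claim on the star edges: a naive routing double-counts flow at each original vertex (once as source, once as sink), which would give congestion $2$ rather than $1$. Resolving this cleanly requires using the symmetry of the demand (reducing to routing each unordered pair once) together with the precise $\max\{\hat D(u,\cdot),\hat D(\cdot,u)\}\le A_i(u)$ form of the $A_i$-respecting condition, and simultaneously checking the $u_{V(G)}$-respecting condition to control the load on the shared original vertices across the $\omega$ overlapping clusters. Everything else — the covering-radius argument and the trivial two-hop routing through each Steiner center — is routine.
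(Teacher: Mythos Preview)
Your proposal is correct and follows essentially the same approach as the paper's own proof: assign each demand pair $(u,v)$ to some cluster $S\in{\cal N}_{i,j}$ containing both (using that $\hat D$ is $h_{\cov,j}$-length and ${\cal N}_{i,j}$ has covering radius $h_{\cov,j}$), then route the induced sub-demand $\hat D_S$ through the star $H_S$ via the two-edge path $u\!-\!r_S\!-\!v$. The paper's proof is in fact considerably terser than yours: it simply asserts that each $\hat D_S$ is $A_i$-respecting and $u_{V(G)}$-respecting and then says the congestion-$1$, step-$2$, length-$(2h_{\diam,j}s+1)$ routing in $H_S$ follows ``by the definition of $H_S$'', without working through the edge/vertex load bookkeeping you carry out. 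Your extra care about the factor-of-$2$ on the star edges and the need for the $u_{V(G)}$-respecting hypothesis to control the original-vertex congestion across overlapping clusters goes beyond what the paper writes, but it is the right instinct and does not diverge from the intended argument.
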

\begin{proof}
Recall that $H_{i,j} = \bigcup_{S\in{\cal N}_{i,j}} H_{S}$. For each demand pair $(u,v)\in \supp(\hat{D})$, we will assign it in $H_{S}$, where $S\in {\cal N}_{i,j}$ is an arbitrary cluster containing both $u$ and $v$. Note that such an $S$ must exist because $\hat{D}$ is $h_{\cov,j}$-length in $G-C_{i+1,j}$ and ${\cal N}_{i,j}$ has covering radius $h_{\cov,j}$ in $G-C_{i+1,j}$. Now, for each $S\in{\cal N}_{i,j}$, its assigned demand $\hat{D}_{S}$ has $V(\hat{D}_{S})\subseteq S$ and it is $A_{i}$-respecting and $u_{V(G)}$-respecting, so $\hat{D}_{S}$ can be routed in $H_{S}$ with length $2h_{\diam.j}s+1$, congestion $1$ and step $2$ by the definition of $H_{S}$.
\end{proof}

Now we show that the shortcut $E'$ constructed by \Cref{algo:emulator} has length slack $O(1/\epsilon^{3})$, congestion slack $n^{O(\epsilon)}$ and step $2^{O(1/\epsilon)}$. To do this, it suffices to show the quality of the forward mapping and backward mapping, i.e. \Cref{lem:forward_mapping} and \Cref{lem:backward_mapping}, whose proofs are given in \Cref{sect:ForwardMapping} and \Cref{sect:BackwardMapping}. Let $G' = G\cup E'$ be the shortcut graph.

\begin{lemma}[Forward Mapping]\label{lem:forward_mapping}
For any feasible $h$-length flow $F$ in $G$, there is a feasible flow $F'$ routing $\Dem(F)$ in $G'$ with $\leng(F')\leq \leng(F)\cdot O(1/\epsilon^{3})$ and $\step(F')\leq 2^{O(1/\epsilon)}$. 
\end{lemma}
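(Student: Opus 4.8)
## Proof Plan for the Forward Mapping Lemma

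The plan is to decompose $F$ into its simple flow paths (it already is a sum over simple paths) and to shortcut each flow path $P$ separately into a flow $\tilde F_P$ in $G'$ that routes $F(P)$ units between the two endpoints of $P$, so that $F':=\sum_P \tilde F_P$ routes $\Dem(F)$ exactly. Each $\tilde F_P$ is produced by the top-down recursion $\textsc{Shortcut}(P',i)$ from the technical overview, which receives a subpath $P'$ of $P$ and a level $i$ under the invariant that every vertex of $P'$ avoids $\supp(A_{i+1})$ (hence $P'$ is disjoint from every $C_{i+1,j}$ and $\ell_{G-C_{i+1,j}}(P')=\ell_G(P')$). Inside a call I would pick $j=j(P')$, the smallest index with $h_j\geq \ell_G(P')$ (so $h_j=\Theta(\ell_G(P'))$); if $P'$ meets $\supp(A_i)$, let $x,y$ be its first and last vertices in $\supp(A_i)$ and $x',y'$ the $P'$-neighbours of $x,y$ lying outside $P'[x,y]$, and return the concatenation $\textsc{Shortcut}(P'[\mathrm{start},x'],i-1)\cdot(x',x)\cdot R_{xy}\cdot(y,y')\cdot\textsc{Shortcut}(P'[y',\mathrm{end}],i-1)$, where $R_{xy}$ is a $2$-hop route $x\to r_S\to y$ through some star $H_S$ of $H_{i,j}$; if $P'$ misses $\supp(A_i)$ just recurse at level $i-1$; and the base case $i=0$ routes $\mathrm{start}(P')\to\mathrm{end}(P')$ in a single $H_{0,j}$-hop since $\supp(A_0)=V$. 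The invariant propagates because the handed-down prefix and suffix lie strictly before/after the first/last $\supp(A_i)$-vertex and so miss $\supp(A_i)\supseteq\supp(\deg_{C_{i,j}})$; the top call $\textsc{Shortcut}(P,d)$ is legitimate because $C_{d+1,j}$ is the zero cut; and the cluster $S\ni x,y$ exists because $\dist_{G-C_{i+1,j}}(x,y)\leq \ell_{G-C_{i+1,j}}(P'[x,y])=\ell_G(P'[x,y])\leq \ell_G(P')\leq h_j\leq h_{\cov,j}$, the covering radius of $\mathcal{N}_{i,j}$ in $G-C_{i+1,j}$.

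For the step bound, a call adds at most $4$ new edges (two bridge edges and the $2$-hop star route) and spawns two calls one level down, and there are $d+1=O(1/\epsilon)$ levels; solving $T(i)\leq 2T(i-1)+4$ with $T(0)=2$ gives $\step(\tilde F_P)=T(d)=2^{O(1/\epsilon)}$. For the length bound I would split the route's length into three parts. The bridge edges are, over the whole recursion tree of $P$, pairwise-distinct edges of $P$ (each is a boundary edge never used inside a child call), so they total at most $\ell_G(P)$; the junction vertices $x,y,x',y'$ of a call lie on its subpath and contribute at most $4\ell_G(P'_{\mathrm{node}})$; and each $2$-hop star route at a call with chosen $j$ has length $\leq 2h_{\diam,j}s+1=O(h_j s^2)=O(\ell_G(P'_{\mathrm{node}})\cdot s^2)$ by \Cref{lemma:StarGraphRouting}. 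Since at any fixed recursion level the node-subpaths are pairwise-disjoint sub-intervals of $P$, their lengths sum to at most $\ell_G(P)$; summing over the $O(1/\epsilon)$ levels and using $s=1/\epsilon$ yields a total of $O(\ell_G(P)/\epsilon^3)$, hence $\leng(F')=O(\leng(F)/\epsilon^3)$.

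It then remains to show $F'$ is feasible. Congestion on the original graph is easy: $\tilde F_P$ touches $G$ only via its bridge edges and junction vertices, which lie on $P$ and are each traversed at most once (the route's original vertices form a subsequence of $P$'s vertex order), so the total load on any original edge $e$ (resp.\ vertex $v$) is $\leq\sum_{P\ni e}F(P)=F(e)\leq u_G(e)$ (resp.\ $\leq F(v)\leq u_G(v)$). The distinct star graphs $H_{i,j}$ are edge- and Steiner-vertex-disjoint, so it suffices to route, for each $(i,j)$, the aggregate demand $\hat D_{i,j}$ (the union of all "$F(P)$ units from $x$ to $y$'' over level-$i$ calls with parameter $j$, which is $h_{\cov,j}$-length in $G-C_{i+1,j}$) through $H_{i,j}$ with congestion $1$ and $2$ steps — which \Cref{lemma:StarGraphRouting} delivers provided $\hat D_{i,j}$ is $A_i$-respecting and $u_{V(G)}$-respecting. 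The $u_{V(G)}$ part is immediate: a path contributes at most one level-$i$ call with a given endpoint $w$ (its level-$i$ subpaths are disjoint), so $\hat D_{i,j}(w,\cdot),\hat D_{i,j}(\cdot,w)\leq F(w)\leq u_G(w)$.

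The hard part, and where I expect the real work to be, is showing $\hat D_{i,j}$ is $A_i$-respecting, i.e.\ $\hat D_{i,j}(w,\cdot)\leq A_i(w)=4s^2\sum_{j''}\deg_{C_{i,j''}}(w)$ for every $w$. The naive bound $\hat D_{i,j}(w,\cdot)\leq F(w)\leq u_G(w)=A_0(w)$ is far too weak, so one needs a charging argument exploiting that $w\in\supp(A_i)=\bigcup_{j''}\supp(\deg_{C_{i,j''}})$: a flow path that reaches level $i$ with $w$ as an endpoint must cross the support of $\deg_{C_{i,j''}}$ immediately before or after $w$ for the witnessing $j''$, and since a feasible $h$-length flow places at most $h'|C|$ "cut-length mass'' on any moving cut $C$ of length parameter $h'$, the total flow crossing $C_{i,j''}$ near $w$ is $O(h_{\diam,j''}\deg_{C_{i,j''}}(w))$; the boosting factor $\frac{h_{\diam,j}\,s}{h_j}=4s^2$ baked into \Cref{line:NextLevelNW} of \Cref{algo:emulator} is exactly calibrated to absorb this ratio and turn the bound into $\hat D_{i,j}(w,\cdot)\leq A_i(w)$. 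Making this precise — tracking how the recursion partitions each $P$ while simultaneously certifying the per-vertex star capacities at every level — is the crux; once it is in place, \Cref{lem:forward_mapping} follows by assembling the step, length, and congestion bounds above.
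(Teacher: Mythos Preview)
Your scheme matches the simplified pure-cut argument from the paper's technical overview, but it does not survive the passage to the actual moving cuts $C_{i,j}$ produced by \Cref{thm:vertex_ED}, and the gap is precisely the one you flagged as ``the hard part''. With fractional cut values, a vertex $w$ can lie in $\supp(A_i)$ while having $A_i(w)=4s^2\sum_{j''}\deg_{C_{i,j''}}(w)$ arbitrarily small compared to $u_G(w)$ --- for instance when only a single incident edge carries the minimum nonzero cut value $1/(h_{\diam,j''}s)$. Your recursion routes the \emph{entire} value $F(P)$ through the single first vertex $x\in\supp(A_i)$ on $P'$; summing over all flow paths that hit $w$ as their first $\supp(A_i)$-vertex gives $\hat D_{i,j}(w,\cdot)$ as large as $F(w)\le u_G(w)$, which in general is not bounded by $A_i(w)$. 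The charging sketch you propose (``total flow crossing $C_{i,j''}$ near $w$ is $O(h_{\diam,j''}\deg_{C_{i,j''}}(w))$'') controls a flow-times-cut-value product, not the raw flow entering $w$, so it cannot close this gap.

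The paper's proof repairs exactly this point by abandoning the single-vertex split. For each path $P$ at level $i$ it defines a per-vertex \emph{budget} $x_P(w_k)=4s^2\sum_{j'}\bigl(C_{i,j'}(w_{k-1},w_k)+C_{i,j'}(w_k)+C_{i,j'}(w_k,w_{k+1})\bigr)$, and routes $F_i(P)$ not to one vertex $x$ but to a \emph{distribution} over a prefix set $L$ (and from a suffix set $R$), each vertex $w_k$ receiving the fraction $x'_P(w_k)\le x_P(w_k)$. The $A_i$-respecting bound is then a one-line computation: $\sum_{P\ni w}F_i(P)\,x_P(w)\le 4s^2\bigl(u_G(w)\sum_{j'}C_{i,j'}(w)+\sum_{e\ni w}u_G(e)\sum_{j'}C_{i,j'}(e)\bigr)=A_i(w)$, using only that $F_i$ is feasible. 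Correspondingly, the invariant carried down is not ``$P'$ avoids $\supp(A_{i+1})$'' (which is too strong to maintain with fractional cuts) but the weaker length bound $\leng(P,G-C_{i+1,j})\le 4h_j$; and paths whose total level-$i$ cut mass is at most $3/(4s^2)$ are deferred wholesale to level $i-1$ without any splitting (your ``$P'$ misses $\supp(A_i)$'' case is the pure-cut shadow of this). Your length, step, and original-graph congestion analyses port over essentially unchanged once this spreading is in place.
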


\begin{lemma}[Backward Mapping]\label{lem:backward_mapping}
    For any feasible flow $F'$ in $G'$ such that $V(\Dem(F'))\subseteq V(G)$, there is a flow $F$ routing $\Dem(F')$ in $G$ with $\leng(F)\leq \leng(F')$ and $\cong(F)\leq n^{O(\epsilon)}$. 
\end{lemma}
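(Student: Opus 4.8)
The plan is to take a feasible flow $F'$ in $G' = G\cup E'$ whose demand endpoints all lie in $V(G)$, and convert each flow path into a walk in $G$ of no greater length, at the cost of a bounded congestion blow-up. The key structural observation is that the only non-$G$ edges in $G'$ are the star edges in the graphs $H_{i,j}$, and each such $H_{i,j}$ lives over a neighborhood cover ${\cal N}_{i,j}$ of $G-C_{i+1,j}$ whose clusters have diameter $h_{\diam,j}$. So whenever a flow path of $F'$ uses a detour through a star center $r_S$ with $S\in{\cal N}_{i,j}$, it must enter $r_S$ from some original vertex $u\in S$ and leave to some original vertex $v\in S$; I replace the two-edge detour $u\to r_S\to v$ by a shortest $(u,v)$-path in $G-C_{i+1,j}\subseteq G$, which has length at most $h_{\diam,j}$. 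Concatenating these replacements over all star-detours in a single flow path $P'$ yields a walk $P$ in $G$ from the same source to the same sink; I then shortcut $P$ into a simple path (this only decreases length). Summing $F'(P')$ over all paths gives a flow $F$ in $G$ routing $\Dem(F')$.

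First I would handle the length bound. For each star edge $(u,r_S)$ in $H_{i,j}$ its length in $G'$ is $h_j s$, so the two-edge detour $u\to r_S\to v$ contributes length $2h_j s + 1$ (the extra $1$ from $\ell(r_S)=1$) to $P'$; meanwhile the replacement $(u,v)$-path in $G-C_{i+1,j}$ has length at most $h_{\diam,j} = h_{\cov,j}s = 4h_j s$. So as long as the replacement length is at most the detour length we are fine — but $4h_j s > 2h_j s+1$, so this naive comparison is not quite enough. The fix is that Lemma \ref{lemma:StarGraphRouting} and the construction are set up so that the star edges only carry demand that is $h_{\cov,j}$-length \emph{in $G-C_{i+1,j}$}; more carefully, I should argue that any flow path of $F'$ using an $H_{i,j}$-detour between $u,v\in S$ certifies $\dist_{G-C_{i+1,j}}(u,v)\le h_{\diam,j}$ by the diameter bound of ${\cal N}_{i,j}$, and then bound the replacement path length against the portion of $\leng(F')$ that the detour already "spends." Since the length slack of the shortcut overall is $O(1/\epsilon^3)$ but here we are claiming $\leng(F)\le\leng(F')$ exactly, the right accounting is: the star center $r_S$ has length $1$ and each star edge has length $h_j s$, and a cluster in ${\cal N}_{i,j}$ has diameter $\le h_{\diam,j}=4h_j s$ \emph{in $G$}, while a detour through $r_S$ costs $2h_j s+1\ge \tfrac12\cdot 4h_j s$ when $h_j s\ge 1$; hence replacing is within a factor $2$ — wait, that gives $\leng(F)\le 2\leng(F')$, not $\le\leng(F')$. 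I would therefore re-examine whether the star edges should be given length $h_{\diam,j}/2 = 2h_j s$ or whether the diameter should be read as $h_{\diam,j}=h_j s$; assuming the constants in \Cref{algo:emulator} are chosen consistently (i.e.\ $h_{\diam,j}$ equals exactly twice the star-edge length plus the center length, up to the additive $1$), the clean statement $\leng(F)\le\leng(F')$ follows, and this bookkeeping is the first thing to nail down.

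Next, the congestion bound. A vertex or edge $x$ of $G$ can receive flow in $F$ from two sources: flow that already traversed $x$ in $F'$ (contributing congestion $\le 1$ since $F'$ is feasible), and flow rerouted onto $x$ because $x$ lies on a replacement shortest path for some star-detour through some $r_S$ with $S\in{\cal N}_{i,j}$. For the latter, I fix a level pair $(i,j)$: the star edges incident to all centers $r_S$, $S\in\mathcal N_{i,j}$, have total capacity $\sum_S\sum_{v\in S}A_i(v)\le \omega\cdot|A_i|$ where $\omega=n^{O(\epsilon)}/\epsilon$ is the width, but more usefully, at any single original vertex $u$ the star edges into the at most $\omega$ centers $r_S$ with $u\in S$ have total capacity $\omega\cdot A_i(u)$; since $A_i\preceq \ldots$ and $A_0 = u_{V(G)}$, and $|A_{i+1}|$ shrinks geometrically, the $A_i$-weighted detours rerouted through $x$ across all $i$ sum to $n^{O(\epsilon)}$ times $u_G(x)$, up to the $O(\log h)=O(\log n)$ choices of $j$ and the $O(1/\epsilon)$ levels $d$. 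This is exactly where the $n^{O(\epsilon)}$ congestion slack comes from: each of the $O(d\log h)$ expander levels contributes congestion $\le\omega = n^{O(\epsilon)}/\epsilon$ from rerouting, plus the ambient $1$ from $F'$ itself, for a total $O(d\log h\cdot\omega) = n^{O(\epsilon)}$ (absorbing the polylog and $\epsilon$ factors into the exponent as the preliminaries permit).

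The main obstacle I expect is the congestion accounting, specifically controlling how much rerouted flow a single vertex $x\in G$ absorbs. The subtlety is that many clusters $S$ (across different levels $i$ and different scales $j$) may each route their star-detour replacements through $x$, and we need the total to stay $n^{O(\epsilon)}u_G(x)$ rather than, say, $n^{1+O(\epsilon)}u_G(x)$. The resolution hinges on two facts: (1) within a single $\mathcal N_{i,j}$, the clusters a fixed vertex belongs to number at most the width $\omega$, and the star capacities at $x$ from those clusters are $A_i$-bounded, so a shortest-path replacement in $G-C_{i+1,j}$ pushes at most $\omega A_i(x)$ units through $x$ \emph{if the replacement paths were edge-disjoint} — which they are not, so actually one must bound the total flow through $x$ by $\sum_{S\ni x}(\text{flow through }r_S)$, and the flow through $r_S$ is $\le \sum_{v\in S}A_i(v)$, which could be large; (2) hence the correct bound is not "flow through $x$ in reroute $\le$ sum of star capacities" but rather "flow through $x$ $\le$ sum over clusters $S$ with $x$ on the chosen replacement path of (flow through $r_S$ along that detour)", and one controls this via a congestion argument on $G-C_{i+1,j}$: the union of the replacement shortest paths forms a low-congestion routing of the $h_{\cov,j}$-length, $A_i$-respecting demand induced on the original vertices, and by \Cref{thm:routing_LCVE} (applied since $A_i$ is $(h_{\diam,j},s)$-length $\phi$-expanding in $G-C_{i+1,j}$) this routing has congestion $O(\log N/\phi) = n^{O(\epsilon)}$. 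I would carry it out in that order: (i) fix the constants so $\leng(F)\le\leng(F')$; (ii) for each $(i,j)$, extract from $F'$ the demand that its $H_{i,j}$-detours carry, observe it is $A_i$-respecting, $u_{V(G)}$-respecting, and $h_{\cov,j}$-length in $G-C_{i+1,j}$; (iii) reroute it through $G-C_{i+1,j}$ with congestion $O(\log N/\phi)$ and no length increase via \Cref{thm:routing_LCVE}; (iv) splice these rerouted pieces back into the $G$-portions of the flow paths of $F'$ and sum congestions over the $O(d\log h)$ levels to get $n^{O(\epsilon)}$.
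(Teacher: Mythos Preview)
Your final plan (i)--(iv) is essentially the paper's proof: decompose $F'$ into its $G$-part and its $H_{i,j}$-parts, reroute each $H_{i,j}$-part in $G-C_{i+1,j}$ via \Cref{thm:routing_LCVE}, then splice and sum congestions over the $O(d\log h)$ pairs $(i,j)$. Two small corrections to step~(ii): the demand induced by the $H_{i,j}$-detours is only $h_{\diam,j}$-length in $G-C_{i+1,j}$ (from the cluster-diameter bound of ${\cal N}_{i,j}$), not $h_{\cov,j}$-length---an arbitrary feasible $F'$ may connect any two vertices of a cluster, not just $h_{\cov,j}$-close ones; and it is $(\omega\cdot A_i)$-respecting rather than $A_i$-respecting, since a vertex can sit in up to $\omega$ clusters of ${\cal N}_{i,j}$---you noted this width factor earlier but dropped it in the summary, and it is exactly what makes the per-level congestion $O(\omega\log N/\phi)$ rather than $O(\log N/\phi)$. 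On the length constant you worried about, the paper's accounting is that a star detour costs $2h_{\diam,j}s+1$ in $F'$ while the reroute via \Cref{thm:routing_LCVE} costs $h_{\diam,j}\cdot s$ in $G$, so $\leng(F)\le\leng(F')$ holds directly once the star-edge length is set accordingly.
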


\subsection{Forward Mapping: Proof of \Cref{lem:forward_mapping}}
\label{sect:ForwardMapping}
We will employ a top-down argument. Start from the top level $d$. At the beginning of processing a level $i\geq 0$, we are given a feasible flow $F_{i}$ with the following invariant:
each flow path $P\in\path(F_{i})$ has $\leng(P,G-C_{i+1,j})\leq h_{\cov,j}=4h_{j}$, where $j$ is the minimum index such that $h_{j}\geq \leng(P,G)$ (which means $h_{j}/2\leq \leng(P,G)\leq h_{j}$).
Initially at the top level $d$, we set $F_{d} = F$. Note that $F_{d}$ satisfies the invariant above because $C_{d+1,j}$ is a zero cut for any $j$.

First, at the bottom level $i=0$, we can easily shortcut every flow path in $F_{0}$. For each star graph $H_{0,j}$, we assign it a demand $\hat{D}_{0,j}$ which sums over $\Dem(P)$ for each flow path $P\in\path(F_{0})$ such that $j$ is the minimum index with $h_{j}\geq \leng(P,G)$. Observe that each $\hat{D}_{0,j}$ is an $h_{\cov,j}$-length in $G-C_{1,j}$ and $A_{0}$-respecting, so it can be routed in $H_{0,j}$ with length $2h_{\diam,j}s+1$, congestion $1$ and step $2$ by \Cref{lemma:StarGraphRouting}.

From now on we consider levels $i\geq 1$. When processing a level $i$, for each flow path $P\in\path(F_{i})$, we may shortcut some subpaths of $P$ using shortcut edges in $H$. The subpaths of $P$ that have not been shortcut will be added to $F_{i-1}$, meaning that they are deferred to lower levels to get shortcut. At the end, the final $F_{i-1}$ should ensure the invariants above, and we proceed to the lower level $i-1$.

Now we will explain the shortcut at a level $i\geq 1$ in detail. Fix a flow path $P\in\path(F_{i})$. Let $j$ be the minimum index such that $h_{j}\geq \leng(P,G)$. We consider two cases.

\medskip

\noindent{\textbf{Case 1: Defer.}} When $4s^{2}\cdot \sum_{j'}C_{i,j'}(P)\leq 3$, we simply add $P$ to $F_{i-1}$. We have
\[
\leng(P,G-C_{i,j}) = \leng(P,G) + h_{\diam,j}\cdot s\cdot C_{i,j}(P)\leq 4h_{j}\leq h_{\cov,j},
\]
where the inequality is by $\leng(P,G)\leq h_{j}$, $h_{\diam,j} = 4sh_{j}$ and $C_{i,j}(P)\leq 3/(4s^{2})$. Therefore, in this case, the flow path added to $F_{i-1}$ satisfies the invariant.

\medskip

\noindent{\textbf{Case 2: Shortcut.}} Now suppose $4s^{2}\cdot \sum_{j'}C_{i,j'}(P)>3$. Let $u$ and $v$ be $P$'s endpoints. We say $u$ is the left side and $v$ the right side. For each $0\leq k\leq |P|$, we refer to $w_{k}$ as the $P$-vertex with $k$ steps away from $u$. In particular, $w_{0} = u$ and $w_{|P|} = v$.

We now define two functions $x:V(P)\to \mathbb{R}$ denoting the \emph{budgets} of $P$-vertices from the left and the right respectively: for each vertex $w_{k}\in V(P)$,
\[
x_{P}(w_{k}) = 4s^{2}\sum_{j'}C_{i,j'}(w_{k-1},w_{k}) +  C_{i,j'}(w_{k}) + C_{i,j'}(w_{k},w_{k+1}).
\]
In particular, $x_{P}(w_{0}) = 4s^{2}\sum_{j'}C_{i,j'}(w_{0}) + C_{i,j'}(w_{0},w_{1})$ and $x_{P}(w_{|P|}) = 4s^{2}\sum_{j'}C_{i,j'}(w_{|P|-1},w_{|P|}) + C_{i,j'}(w_{|P|})$. Let $k_{L}$ be the minimum index such that $\sum_{0\leq k\leq k_{L}} x_{P}(w_{k}) \geq 1$, and symmetrically let $k_{R}$ be the maximum index such that $\sum_{k_{R}\leq k\leq |P|} x_{P}(w_{k})\geq 1$. To avoid clutter, we let $L = \{w_{0},...,w_{k_{L}}\}$ and $R = \{w_{k_{R}},...,w_{|P|}\}$. 
The following \Cref{claim:LRoverlap} says that $L$ and $R$ have at most one common vertex.

\begin{claim}
\label{claim:LRoverlap}
$k_{L}\leq k_{R}$.
\end{claim}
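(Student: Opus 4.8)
The plan is a short averaging argument over the \emph{budgets} $x_{P}(w_{k})$: I will show that their total along $P$ is strictly more than $2$, and that the inequality $k_{L}>k_{R}$ would force this same total to be strictly less than $2$, a contradiction.

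First I would evaluate the total budget. When $\sum_{k=0}^{|P|}x_{P}(w_{k})$ is expanded, each vertex $w_{k}$ of $P$ contributes the term $4s^{2}\sum_{j'}C_{i,j'}(w_{k})$ exactly once, and each edge $(w_{k-1},w_{k})$ of $P$ contributes $4s^{2}\sum_{j'}C_{i,j'}(w_{k-1},w_{k})$ exactly twice (once as the edge following $w_{k-1}$, once as the edge preceding $w_{k}$); the two endpoint special cases in the definition of $x_{P}$ make the accounting at $w_{0}$ and $w_{|P|}$ consistent with this. Discarding the second copy of each edge term, I get
\[
\sum_{k=0}^{|P|} x_{P}(w_{k}) \;\ge\; 4s^{2}\sum_{j'}\Big(\sum_{v\in V(P)}C_{i,j'}(v)+\sum_{e\in P}C_{i,j'}(e)\Big) \;=\; 4s^{2}\sum_{j'}C_{i,j'}(P) \;>\; 3,
\]
where the last inequality is precisely the Case~2 hypothesis under which $k_{L},k_{R}$ and this claim are defined.

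Then I would argue by contradiction. Assume $k_{L}>k_{R}$, i.e.\ $k_{R}\le k_{L}-1$; note this already forces $k_{L}\ge 1$ and $k_{R}\le |P|-1$, so the index ranges below are legitimate. Since every $x_{P}(w_{k})\ge 0$ and $k_{R}\le k_{L}-1$, minimality of $k_{L}$ yields $\sum_{k=0}^{k_{R}}x_{P}(w_{k})\le\sum_{k=0}^{k_{L}-1}x_{P}(w_{k})<1$, and maximality of $k_{R}$ yields $\sum_{k=k_{R}+1}^{|P|}x_{P}(w_{k})<1$. Adding these two bounds, and observing that the two index ranges partition $\{0,\dots,|P|\}$, gives $\sum_{k=0}^{|P|}x_{P}(w_{k})<2$, contradicting the bound $>3$ established above. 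Hence $k_{L}\le k_{R}$.

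I do not expect a genuine obstacle here. The only points that need care are the double counting of internal edges when summing the budgets (this is exactly what provides the slack between the available total, $>3$, and the value $2$ needed for the contradiction), and the trivial check that $k_{L}>k_{R}$ rules out the degenerate indices $k_{L}=0$ and $k_{R}=|P|$, so that ``$k_{L}-1$'' and ``$k_{R}+1$'' stay within $\{0,\dots,|P|\}$.
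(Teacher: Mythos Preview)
Your proposal is correct and follows essentially the same approach as the paper's proof: both use that $\sum_{k}x_{P}(w_{k})\ge 4s^{2}\sum_{j'}C_{i,j'}(P)>3$ together with the strict upper bounds $\sum_{k\le k_{L}-1}x_{P}(w_{k})<1$ and $\sum_{k\ge k_{R}+1}x_{P}(w_{k})<1$ coming from the extremality of $k_{L},k_{R}$. The only cosmetic difference is that the paper phrases the conclusion as ``there exists $k$ with $k_{L}-1<k<k_{R}+1$'' while you phrase it as a contradiction from $k_{L}>k_{R}$; your write-up is in fact slightly more careful about the edge double-counting and the legitimacy of the indices $k_{L}-1,k_{R}+1$.
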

\begin{proof}
By the definition of $k_{L}$ and $k_{R}$, we have $\sum_{0\leq k\leq k_{L}-1}x_{P}(w_{k})<1$ and $\sum_{k_{R}+1\leq k\leq |P|}x_{P}(w_{k})<1$. However, $\sum_{0\leq k\leq |P|}x_{P}(w_{k})\geq 4s^{2}\sum_{j'}C_{i,j'}(P)>3$. This means there exists a vertex $w_{k}$ with $k_{L}-1<k<k_{R}+1$, which implies $k_{L}\leq k_{R}$
\end{proof}

We assign each vertex $w\in L\cup R$ a \emph{load} $x'_{P}(w)\leq x_{P}(w)$, satisfying that $\sum_{w\in L}x'_{P}(w)=1$ and $\sum_{w\in R}x'_{P}(w)=1$. We consider the following three-phase strategy to route $F_{i}(P)$ flow units from $u$ to $v$. \emph{Phase 1:} the vertex $u$ sends $F_{i}(P)$ flow units and each $w\in L$ receives $F_{i}(P)x'_{P}(w)$ units. \emph{Phase 2:} Each vertex $w\in L$ sends $F_{i}(P)x'_{P}(w)$ units and each vertex $w\in R$ receives $F_{i}(P)x'_{P}(w)$ units. \emph{Phase 3:} Each vertex $w\in R$ sends exactly $F_{i}(P)x'_{P}(w)$ units, and the vertex $v$ receives $F_{i}(P)$ units.

\medskip

\noindent{\underline{The First and Third Phases.}} Roughly speaking, for the first and third phases, we will add their corresponding original flows into $F_{i-1}$, meaning that they will be deferred to lower levels to get shortcut. 

Regarding the first phase, recall that for each $w_{k}\in L$, we want to route $F_{i}(P)x'(w_{k})$ units from $u$ to $w_{k}$. To do this, we add into $F_{i-1}$ a flow path $P_{k} = P[w_{0},w_{k-1}]$ with value $F_{i}(P)\cdot x'(w_{k})$. That is, we require the lower levels to give a shortcut that routes $F_{i}(P)\cdot x'(w_{k})$ from $u$ to $w_{k-1}$ (the $P$-vertex one step closer to $u$ than $w_{k}$). Then, we route $F_{i}(P)x'(w_{k})$ units from $w_{k-1}$ to $w_{k}$ using the original edge $(w_{k-1},w_{k})\in E(G)$.

The third phase is handled in a similar way. For each $w_{k}\in R$, we route $F_{i}(P)x'(w_{k})$ flow units from $w_{k}$ to $w_{k+1}$ using the original edge $(w_{k-1},w_{k})$, and then we add into $F_{i-1}$ a flow path $P_{k} = P[w_{k+1},v]$ with value $F_{i}(P)\cdot x'(w_{k})$.

To proceed to the lower level $i-1$, it remains to show that the flow paths added into $F_{i-1}$ satisfy the invariant. Consider a flow path $P_{k} = P[u,w_{k-1}]$ added from the first phase (where $w_{k}\leq L$, i.e. $0\leq k\leq k_{L}$). We want to show that $\leng(P_{k}, G - C_{i,j_{k}})\leq h_{\cov,j_{k}}$, where $j_{k}$ is the minimum index such that $h_{j_{k}}\geq \leng(P_{k},G)$. By the definition of the budget function $x$, we have
$\sum_{j'}C_{i,j'}(P_{k})\leq \frac{1}{4s^{2}}\sum_{0\leq k'\leq k-1}x_{P}(w_{k'})< 1/(4s^{2})$,
where the second inequality is by $k\leq k_{L}$ and $\sum_{0\leq k'\leq k_{L}-1}x_{P}(w_{k'})<1$ (from the definition of $k_{L}$). In particular, $C_{i,j_{k}}(P_{k})\leq 1/(4s^{2})$. Because $C_{i,j_{k}}$ is an $(h_{\diam,j_{k}}s)$-length moving cut, we have
\[
\leng(P_{k}, G-C_{i,j_{k}}) = \leng(P_{k},G) + h_{\diam,j_{k}}\cdot s\cdot C_{i,j_{k}}(P_{k})\leq 2h_{j_{k}}\leq h_{\cov,j_{k}},
\]
as desired, where the first inequality uses $h_{\diam,j_{k}} = 4s h_{j_{k}}$. By a similar argument, we can show that each flow path added from the third phase also satisfies the invariant, so we will not explain it in detail.

\medskip

\noindent{\underline{The Second Phase.}} The second phase is where the shortcut happens. We define an arbitrary (multi-commodity) demand $\hat{D}_{i,P}$ capturing this single-commodity demand. Namely, $\hat{D}_{i,P}$ satisfies that (1) $|\hat{D}_{i,P}| = F_{i}(P)$; (2) for each $w\in L$, $\hat{D}_{i,P}(w,\cdot) = F_{i}(P)x'_{P}(w)$; and (3) for each $w\in R$, $\hat{D}_{i,P}(\cdot,w) = F_{i}(P)x'_{P}(w)$. We will \emph{assign} $\hat{D}_{i,P}$ to $H_{i,j}$, meaning that we route $\hat{D}_{i,P}$ using shortcut edges in $H_{i,j}$. 

By the above assignment, for each $H_{i,j}$ at level $i$, its total assigned demand, denoted by $\hat{D}_{i,j}$, sums over $\hat{D}_{i,P}$ of all $P\in\path(F_{i})$ s.t. $j$ is the minimum index with $h_{j}\geq \leng(P,G)$. The following \Cref{lemma:Reroute} showing that $\hat{D}_{i,j}$ can be routed with low steps, small length and congestion $1$.

\begin{lemma}
\label{lemma:Reroute}
For each $H_{i,j}$ at level $i$, its total assigned demand $\hat{D}_{i,j}$ can be routed in $H_{i,j}$ with length $2h_{\diam,j}s+1$, congestion $1$ and step $2$.
\end{lemma}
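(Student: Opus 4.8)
The plan is to reduce this to \Cref{lemma:StarGraphRouting}, the helper lemma already established for star graphs. To do so I need to verify that the total assigned demand $\hat D_{i,j}$ satisfies the three hypotheses of \Cref{lemma:StarGraphRouting}: that it is $h_{\cov,j}$-length in $G-C_{i+1,j}$, that it is $A_{i}$-respecting, and that it is $u_{V(G)}$-respecting. Once these are checked, \Cref{lemma:StarGraphRouting} immediately gives a routing in $H_{i,j}$ with length $2h_{\diam,j}s+1$, congestion $1$, and step $2$, which is exactly the claimed conclusion.

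First I would check the length bound. Fix $P\in\path(F_{i})$ with $j$ the minimum index such that $h_{j}\geq \leng(P,G)$, and consider a demand pair $(w,w')$ with $w\in L$, $w'\in R$ contributing to $\hat D_{i,P}$. Both $w$ and $w'$ lie on $P$, so $\dist_{G-C_{i+1,j}}(w,w')\leq \leng(P,G-C_{i+1,j})$. By the invariant maintained by the top-down procedure, $\leng(P,G-C_{i+1,j})\leq h_{\cov,j}$ (this is precisely the invariant: $\leng(P,G-C_{i+1,j})\le 4h_j = h_{\cov,j}$ for the minimal $j$ with $h_j\ge\leng(P,G)$). Hence every pair in the support of $\hat D_{i,P}$, and therefore every pair in the support of $\hat D_{i,j}$, is at distance at most $h_{\cov,j}$ in $G-C_{i+1,j}$, so $\hat D_{i,j}$ is $h_{\cov,j}$-length in $G-C_{i+1,j}$.

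Next I would verify the two capacity-respecting properties, and this is where the construction of the loads $x'_{P}$ and the choice of the next-level node-weighting $A_{i+1}$ are used. For a fixed $P$: by construction of $\hat D_{i,P}$, each $w\in L$ sends out exactly $F_{i}(P)x'_{P}(w)$ units and each $w\in R$ receives exactly $F_{i}(P)x'_{P}(w)$ units; since $x'_{P}(w)\leq x_{P}(w)$, the per-vertex demand from $P$ is at most $F_{i}(P)\,x_{P}(w)$. Summing over all flow paths $P\in\path(F_{i})$ assigned to $H_{i,j}$, the total demand at a vertex $w$ is at most $\sum_{P}F_{i}(P)\,x_{P}(w)$. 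Now I unpack $x_{P}(w_k)=4s^2\sum_{j'}\big(C_{i,j'}(w_{k-1},w_k)+C_{i,j'}(w_k)+C_{i,j'}(w_k,w_{k+1})\big)$: summing $F_i(P)x_P(w)$ over the flow paths through $w$, the congestion-$1$ feasibility of $F_i$ in $G$ bounds the total flow crossing each edge incident to $w$ by its capacity and the total flow through $w$ by $u_{V(G)}(w)$, so $\sum_P F_i(P) x_P(w) \le 4s^2\sum_{j'}\deg_{C_{i,j'}}(w)$. Comparing with \Cref{line:NextLevelNW}, this is exactly $A_{i+1}(w)$ — wait, I must be careful about the index shift: at level $i$ the relevant cuts are $C_{i,j'}$, and the node-weighting whose definition involves $\deg_{C_{i,j'}}$ is $A_{i}=4s^2\sum_{j'}\deg_{C_{i,j'}}$ (shifting the index in \Cref{line:NextLevelNW} down by one). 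Hence $\sum_P F_i(P) x_P(w)\le A_{i}(w)$, which is precisely the statement that $\hat D_{i,j}$ is $A_{i}$-respecting. Finally, $u_{V(G)}$-respecting holds because the total demand out of (resp.\ into) $w$ coming from a path $P$ through $w$ is at most $F_i(P)$, and $\sum_{P\ni w}F_i(P)\le u_{V(G)}(w)$ again by feasibility of $F_i$; summing and noting the loads only redistribute within $L$ and $R$ without exceeding $F_i(P)$ per path gives the bound.

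I expect the main obstacle to be the bookkeeping in the capacity argument: one must be careful that (i) the three-term budget $x_P$ is set up so that charging it against the moving-cut degrees along $P$ exactly reproduces the coefficient $4s^2$ and the sum over $j'$ in the definition of $A_i$, and (ii) the index shift between ``cuts $C_{i,j'}$ used at level $i$'' and ``the node-weighting $A_i$'' is handled correctly, since \Cref{line:NextLevelNW} defines $A_{i+1}$ in terms of $C_{i+1,j'}$. The length bound and the $u_{V(G)}$-respecting property are comparatively routine consequences of the invariant and of the congestion-$1$ feasibility of $F_i$, respectively. Once all three hypotheses are in place, the conclusion is just an application of \Cref{lemma:StarGraphRouting}.
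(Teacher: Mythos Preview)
Your proposal is correct and follows essentially the same approach as the paper: you reduce to \Cref{lemma:StarGraphRouting} by verifying the three hypotheses ($h_{\cov,j}$-length in $G-C_{i+1,j}$ via the invariant, $A_i$-respecting via the budget calculation against $\deg_{C_{i,j'}}$, and $u_{V(G)}$-respecting via feasibility of $F_i$). The paper does the same, first bounding the larger demand $\hat D_i = \sum_{P\in\path(F_i)}\hat D_{i,P}$ by $A_i$ and then noting $\hat D_{i,j}\preceq \hat D_i$; your index-shift observation that $A_i = 4s^2\sum_{j'}\deg_{C_{i,j'}}$ for $i\geq 1$ is exactly the point, and the paper's ``trivially $u_{V(G)}$-respecting (since $\hat D_i$ can be routed in $G$ according to the feasible flow $F_i$)'' is the same as your argument.
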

\begin{proof}
First, we show that the demand $\hat{D}_{i} := \sum_{P\in\path(F_{i})} \hat{D}_{i,P}$ is $A_{i}$-respecting, which means $\hat{D}_{i,j}\preceq \hat{D}_{i}$ is also $A_{i}$-respecting. To see this, consider any vertex $w\in V(G)$, and we have
\begin{align*}
\max\{\hat{D}_{i}(w,\cdot),\hat{D}_{i}(\cdot,w)\}&\leq \sum_{P\in\path(F_{i})}\max\{\hat{D}_{i,P}(w,\cdot),\hat{D}_{i,P}(\cdot,w)\}
\leq \sum_{P\in\path(F_{i})\text{ s.t. }w\in P} F_{i}(P)x_{P}(w)\\
&\leq 4s^{2}\left(u_{G}(w)\sum_{j'}C_{i,j'}(w) + \sum_{e\in E(G)\text{ 
incident to }w}u_{G}(e)\cdot \sum_{j'}C_{i,j'}(e)\right)\\
& = A_{i}(w).
\end{align*}
Here the key step is the third inequality, which is because (1) $F_{i}$ is feasible, and (2) $x_{P}(w)$ is the total level-$i$ cut values of $w$ and its incident $P$-edges scaled up by $4s^2$.

Next, $\hat{D}_{i,j}$ is $h_{\cov,j}$-length in $G-C_{i+1,j}$ because for each demand pair $(u,v)\in \supp(\hat{D}_{i,j})$, $u,v$ is on some $P\in\path(F_{i})$ and $\leng(P,G-C_{i+1,j})\leq h_{\cov,j}$.

Finally, since $\hat{D}_{i,j}$ is $h_{\cov,j}$-length in $G-C_{i+1,j}$, $A_{i}$-respecting and trivially $u_{V(G)}$-respecting (since $\hat{D}_{i}$ can be routed in $G$ according to the feasible flow $F_{i}$), $\hat{D}_{i,j}$ can be routed in $H_{i,j}$ with length $2h_{\diam,j}s+1$, congestion $1$ and step $2$ by \Cref{lemma:StarGraphRouting}.
\end{proof}

\medskip

\noindent{\textbf{Quality of the Forward Mapping.}} We now show that $\Dem(F)$ can be routed in $G'=G\cup \bigcup_{i,j} H_{i,j}$ with length $\leng(F)\cdot O(1/\epsilon^{3})$, congestion $1$ and step $2^{O(1/\epsilon)}$.

\medskip

\noindent{\underline{Length and Step.}} We first argue the length and step bounds using induction on levels. It suffices to prove the following statement for each level $i$:
for each flow path $P\in\path(F_{i})$, the above forward mapping scheme maps it to a flow $\hat{F}_{i,P}$ in $G\cup \sum_{i'\leq i,j'}H_{i,j'}$ with length $20(i+1)\cdot s^{2}\leng(P,G)$ and step $6\cdot 2^{i+1}-4$. 
The statement for the top level $i=d$ implies the length and step bounds after plugging in $d\leq O(1/\epsilon)$ and $s = 1/\epsilon$.

Consider the base case $i=0$. A flow path $P\in\path(F_{0})$ is shortcut by $H_{0,j}$, where $j$ is the minimum index with $h_{j}\geq \leng(P,G)$, which implies $h_{j}/2\leq \leng(P,G)$. As we discussed above, the shortcut has length $2h_{\diam,j}s+1$ and step $2$. Since $2h_{\diam,j}s+1\leq 8s^{2}h_{j}+1\leq 20s^{2}\leng(P,G)$, we complete the proof of the base case.

Consider an inductive step $i\geq 1$. Assume the statement holds for $i-1$. For each flow path $P\in\path(F_{i})$, if it belongs to Case 1, the length and step of $\hat{F}_{i,P}$ can be directly bounded by the induction hypothesis, so we assume it belongs to Case 2 from now on. Recall that $\hat{F}_{i,P}$ is the concatenation of routings from three phases. 

\emph{Phases 1 and 3.} Consider the first-phase routing (the third-phase routing is analogous). By the induction hypothesis, its corresponding flow path $P_{k} = P[u,w_{k-1}]$ deferred to $F_{i-1}$ will be mapped to a flow $\hat{F}_{i-1,P_{k}}$ in the shortcut graph with length $20is^{2}\cdot\leng(P[u,w_{k-1}],G)$ and step $6\cdot 2^{i}-4$. The extra edge at the end of the routing $(w_{k-1},w_{k})$ has length $\leng(P[w_{k-1},w_{k}],G)$ and step $1$. In conclusion, the first-phase routing has length $\max_{w_{k}\in L}(20is^{2}\cdot \leng(P[u,w_{k-1}],G) + \leng(P[w_{k-1},w_{k}]))\leq 20is^{2}\cdot\leng(P[u,w_{k_{L}}],G)$ and step $6\cdot 2^{i}-3$.

\emph{Phase 2.}
The second-phase routing is on $H_{i,j}$, where $j$ satisfies $h_{j}/2\leq \leng(P,G)$. For each $w_{k}\in L$, by \Cref{lemma:Reroute} and an argument similar to the base case, the second-phase routing has length $20s^{2}\leng(P,G)$ and step $2$. 

Finally, by \Cref{claim:LRoverlap}, $\hat{F}_{i,P}$ has length $20(i+1)s^{2}\cdot \leng(P,G)$ and step $6\cdot 2^{i+1}-4$.

\medskip

\noindent{\underline{Congestion.}} The congestion in $\bigcup_{i,j} H_{i,j}\subseteq G'$ is $1$ by \Cref{lemma:Reroute}. The congestion on $G\subseteq G'$ is also $1$ because the flow we left in the original graph (in the second phase) is at most the feasible flow $F_{i}$.

\subsection{Backward Mapping: Proof of \cref{lem:backward_mapping}}
\label{sect:BackwardMapping}

We first decompose the flow $F'$ in $G'$ into flows in subgraphs $G$ and $H_{i,j}$ (for all $i,j$) of $G'$, denoted by $F'_{G}$ and $F'_{i,j}$ respectively. Formally speaking, $F'_{i,j}$ collects, for all $P\in\path(F')$, all flow subpaths in $P\cap H_{i,j}$ (the values of these subpaths are still $F'(P)$), and $F'_{G}$ collects all flow subpaths in $P\cap G$ for all $P\in\path(F')$. 

Next, we show that $\Dem(F'_{i,j})$ can be routed by a flow $F_{i,j}$ in $G$ with length $h_{\diam,j}\cdot s$ and congestion $O(\omega\cdot\log N/\phi)$, where $\omega = n^{O(\epsilon)}\log n$ is the width of the neighborhood cover ${\cal N}_{i,j}$ and $\phi = 1/n^{O(\epsilon)}$ is the expansion of the hierarchy. Because $F'_{i,j}$ is a feasible flow in $H_{i,j}$, its demand $\Dem(F'_{i,j})$ is $(\omega\cdot A_{i})$-respecting and $h_{\diam,j}$-length in $G-C_{i+1,j}$ by \Cref{def:StarGraphs} of star graphs and the diameter of ${\cal N}_{i,j}$. Because $A_{i}$ is $(h_{\diam,j},s)$-length expanding in $G-C_{i+1,j}$, $\Dem(F'_{i,j})$ can be routed in $G-C_{i+1,j}$ with length $h_{\diam,j}\cdot s$ and congestion $O(\log N/\phi)$ by \Cref{thm:routing_LCVE}.

Finally, we let the flow $F$ in $G$ routing $\Dem(F')$ be the concatenation of $F'_{G}$ and all $F_{i,j}$. We have $\leng(F)\leq \leng(F')$ because each $F_{i,j}$ has length $h_{\diam,j}\cdot s\leq \leng(F'_{i,j}) = 2h_{\diam,j}s + 1$. The congestion of $F$ is \[
1 + (d\log h)\cdot O(\omega\cdot\log N/\phi) = n^{O(\epsilon)}\log n\log^{2}N/\epsilon^{2} = O(n^{O(\epsilon)}\log^{3}n/\epsilon^{2}) 
\]
because $F'_{G}$ has congestion $1$ in $G$, each $F_{i,j}$ has congestion $O(\omega\cdot\log N/\phi)$, and there are totally $d\cdot \log h$ many $F_{i,j}$.
\clearpage

\bibliographystyle{alpha}
\bibliography{main}

\clearpage
\appendix
\section{Length-Constrained Directed Expansion (Continued)}

\subsection{Connection to Classic Directed Expander Decomposition}
\label{sec:appendix_directedED}


The classic setting of a cut in a graph is a bipartition of the vertex set. 
A cut is considered sparse if few edges connect the two partitions.
\begin{definition}[Classic Directed Sparse Cut]
Given a directed graph $G = (V,E)$, 
let $E(A, B)$ denote the set of edges with head vertices in $A$ and tail vertices in $B$,
a cut $(S,V\setminus S)$ where $\vol_{G}^{sum}(S) \le \vol_{G}^{sum}(V\setminus S)$ is $\phi$-out-sparse if $\sum_{e \in E(S, V\setminus S)} u_G(e) < \phi \cdot \vol_{G}^{sum}(S)$. Similarly, the cut is $\phi$-in-sparse if $\sum_{e \in E(V\setminus S, S)} u_G(e) < \phi \cdot \vol_{G}^{sum}(S)$.
\end{definition}
In directed graphs, the standard definition of an expander only requires that the graph be free from sparse cuts in both directions. That is, a directed graph is a $\phi$-expander if and only if it is strongly connected and there is no $\phi$-in-sparse or $\phi$-out-sparse cut.
Below, we extend the above definition by interpreting an expander as the union of strongly connected components where each does not admit a sparse cut.
\begin{definition}[Classic Directed Expander]
    Given a directed graph $G = (V,E)$, let $X_1,...,X_n \subset V$ be a partition of $V$ and each $X_i$ is a strongly connected component (SCC) of $G$. For $\phi \in (0,1)$, a graph $G$ is a classic directed $\phi$-expander if every induced graph $G[X_i]$ has no $\phi$-out-sparse cut and $\phi$-in-sparse cut.
\end{definition}



Now we discuss the connection between classic directed expanders and length-constrained directed expanders, assuming we only consider \emph{pure cuts} when defining LC-directed expanders. Here a pure cut are just a classic edge cuts, i.e. a subset of edges. We can also interpret a pure cut as a moving cut with cut values either $0$ or $1$. 

We say a graph $G$ is an $(h,s)$-length $\phi$-expander if $\deg_{G}^{sum}$ is $(h,s)$-length $\phi$-expanding in $G$. 
Then in fact, for sufficient large $h$, e.g. $h = n^{2}N$, $G$ is an $(h,s)$-length $\phi$-expander if and only if $G$ is a classic $\Theta(\phi)$-expander. In other words, LC-directed expanders and classic directed expanders are equivalent when we consider only pure cuts and very large $h$.

It is relatively easy to see that if $G$ is not a classic $\phi$-expander, then $G$ is not an $(h,s)$-length $O(\phi)$-expander. Let $(S,X\setminus S)$ for some SCC $X$ of $G$ (with $\vol^{sum}_{G[X]}(S)\leq \vol^{sum}_{G[X]}(X\setminus S)$) be the witnessing $\phi$-sparse (say $\phi$-out-sparse) cut. Then the pure cut $C = E(S,X\setminus S)$ is an $(h,s)$-length $\phi$-sparse cut of $G$, because we can construct a $\deg^{sum}_{G}$-respecting $h$-length symmetric demand $D$ with $|D| \geq \Omega(\vol^{sum}_{G[X]}(S))$ by considering a bipartite graph (with multiple edges) between $X$ and $X\setminus S$ such that each vertex $v \in X$ incident to exactly $\deg^{sum}_{G[X]}(v)$ edges and each vertex $v\in X\setminus S$ incident to at most $\deg^{sum}_{G[X]}(v)$ edges.

Now we show that if $G$ is not an $(h,s)$-length $\phi$-expander, then $G$ is not a classic $\phi$-expander. Let $C$ be an $(h,s)$-length $\phi$-sparse cut of $G$ with witnessing demand $D$. Because $D$ is $h$-length and symmetric, $D$ has no inter-SCC demand pairs. This means there exists an SCC $X$ of $G$ such that $\spars_{hs}(C_{X},D_{X})\leq \phi$, where $C_{X}$ and $D_{X}$ are $C$ and $D$ restricted on $G[X]$. Let $X'_{1},X'_{2},...,X'_{k}$ be the SCCs of $G[X]\setminus C_{X}$. We say a cut $(S,X\setminus S)$ is \emph{valid} if
\begin{itemize}
\item it respects the partition $\{X'_{1},X'_{2},...,X'_{k}\}$, i.e. $S$ is the union of a subset of $X'_{i}$, and
\item $S$ has no out-edges in $G\setminus C_{X}$.
\end{itemize}
Observe that any valid cut $(S,X\setminus S)$ has $\min\{|E(S,X\setminus S)|,|E(X\setminus S,S)|\}\leq |C_{X}|$. Thus it remains to find a valid cut $(S^{*},X\setminus S^{*})$ with $\min\{\vol^{sum}_{G[X]}(S^{*}),\vol^{sum}_{G[X]}(X\setminus S^{*})\} = \Omega(|D_{X}|)$.

Let $d(X'_{i}) = \sum_{u\in X'_{i},v\in X}D_{X}(u,v)$. By definition, $\sum_{X'_{i}} d(X'_{i}) = |D_{X}|$. Because $D_{X}$ is $\deg^{sum}_{G[X]}$-respecting (since $D$ has no inter-SCC demand pairs), each $X'_{i}$ has $\vol^{sum}_{G[X]}(X'_{i})\geq d(X'_{i})$. Therefore, we just need to find a valid cut $(S^{*},X\setminus S^{*})$ with $\sum_{X'_{i}\subseteq S^{*}}d(X'_{i}) = \Omega(|D_{X}|)$, which is an easy exercise by considering a topological order of $\{X'_{1},...,X'_{k}\}$.



\subsection{Routing Characterization: Proof of \Cref{thm:routing_LCDE}}\label{sec:appendix_LC_routing}

In this section, we give the proof of \cref{thm:routing_LCDE}. We note that each direction in \Cref{thm:routing_LCDE} can be explained with the following \Cref{thm:non_expanding_bad_routing} and \Cref{thm:flowcutgap} respectively.

We start with the easier direction \Cref{thm:non_expanding_bad_routing}, which shows that there exists a worst-case demand that cannot be routed with desirable congestion and length given the low conductance.

\begin{theorem}\label{thm:non_expanding_bad_routing}
    In a directed graph $G$, suppose that for a demand $D$ there exists a $2h$-length moving cut $C$ with $2h$-length sparsity $\phi = \spa_{2h}(C, D)$. Then, any $h$-length routing of $D$ has congestion at least $\frac{1}{2\phi}$.
\end{theorem}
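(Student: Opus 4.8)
The plan is to prove this directly, as the easy direction of an approximate length-constrained flow--cut gap. Fix any flow $F$ that routes $D$ in $G$ with $\leng(F)\le h$, and let $\kappa=\conge(F)$; the goal is to show $\kappa\ge\frac{1}{2\phi}$. The single ingredient that does the work is a local observation: every flow path $P\in\path(F)$ whose endpoints $(u,v)$ are $2h$-separated by $C$ (that is, $\dist_{G-C}(u,v)>2h$) must accumulate cut value exceeding $\tfrac12$ along its edges, i.e. $\sum_{e\in P}C(e)>\tfrac12$. Indeed, $P$ is also a $(u,v)$-path in $G-C$ (the same edge set, with the length increases $\l_{C,2h}(e)=2h\cdot C(e)$ added), so $\l_{G-C}(P)\ge\dist_{G-C}(u,v)>2h$; on the other hand $\l_{G-C}(P)=\l_{G}(P)+2h\sum_{e\in P}C(e)\le h+2h\sum_{e\in P}C(e)$ since $\l_G(P)\le\leng(F)\le h$. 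Comparing the two bounds gives $2h\sum_{e\in P}C(e)>h$, hence $\sum_{e\in P}C(e)>\tfrac12$.

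Next I would run a double-counting argument on the quantity $\sum_{e\in E}C(e)\,F(e)$, where $F(e)=\sum_{P\ni e}F(P)$ is the total flow through $e$. On one side, $\conge(F)=\kappa$ means $F(e)\le\kappa\,u(e)$ for every edge, so $\sum_{e}C(e)F(e)\le\kappa\sum_{e}u(e)C(e)=\kappa|C|$. On the other side, regrouping by flow paths, $\sum_{e}C(e)F(e)=\sum_{P\in\path(F)}F(P)\sum_{e\in P}C(e)$; discarding all (non-negative) terms except those coming from flow paths between $2h$-separated pairs and applying the local observation, this is at least $\tfrac12\sum_{(u,v):\,\dist_{G-C}(u,v)>2h}\ \sum_{(u,v)\text{-flow path }P}F(P)=\tfrac12\sep_{2h}(C,D)$, where the last equality uses that $F$ routes $D$, so the $(u,v)$-flow paths of $F$ carry exactly $D(u,v)$ units. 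Chaining the two inequalities yields $\kappa|C|\ge\tfrac12\sep_{2h}(C,D)$, i.e. $\kappa\ge\frac{\sep_{2h}(C,D)}{2|C|}=\frac{1}{2\phi}$, which is the claim.

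I do not expect a genuine obstacle here; the one place that needs care is the local observation, where the hop/length bound on the routing has to be translated into a lower bound on the cut mass along each separated flow path. This is precisely where the factor-$2$ slack between the routing length $h$ and the cut's separation threshold $2h$ is consumed, and it is the reason the resulting congestion lower bound is $\frac{1}{2\phi}$ rather than $\frac{1}{\phi}$. A minor point to keep in mind is that $F$ routing $D$ should be read as $\Dem(F)\succeq D$, which only helps (it can only increase the flow value carried between separated pairs), so the bound is unaffected.
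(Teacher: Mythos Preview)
Your proof is correct and follows essentially the same approach as the paper: both establish that each $h$-length flow path between a $2h$-separated pair must accumulate at least $\tfrac12$ total cut value, and then double-count $\sum_e C(e)F(e)$ (equivalently $\sum_e \ell_{C,2h}(e)F(e)$) against $\kappa|C|$ on one side and $\tfrac12\sep_{2h}(C,D)$ on the other. The only cosmetic difference is that the paper first passes to the fully separated sub-demand while you keep $D$ and restrict the sum to separated pairs, which is equivalent.
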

\begin{proof}
    Without loss of generality, we can assume all demand pairs $(u,v)\in \supp(D)$ is $2h$-length separated by $C$, i.e. $\sep_{2h}(C,D) = |D|$, otherwise we consider the sub-demand of $D$ that is $2h$-length separated by $C$ instead.

    Let $F$ be an $h$-length flow routing $D$ in $G$ with minimum congestion denoted by $\gamma$. For each flow path $P\in\path(F)$, we have $\leng(P,G)\leq h$ but $\leng(P,G-C)\geq 2h$, so 
    \[
    \sum_{P\in\path(F)}F(P)\cdot \sum_{e\in P}\ell_{C,2h}(e)\geq \vvalue(F)\cdot h = |D|\cdot h.
    \]
    On the other hand, we have
    \begin{align*}
    \sum_{P\in\path(F)} F(P)\cdot \sum_{e\in P}\ell_{C,2h}(e) &= \sum_{e\in E(G)}\ell_{C,2h}(e)\sum_{P\in\path(F)\text{ s.t. }P\ni e} F(P)\\
    &\leq\sum_{e\in E(G)}\ell_{C,2h}(e)\cdot \gamma\cdot u(e)\\
    &= \gamma\cdot\sum_{e\in E(G)}2h\cdot C(e)\cdot u(e)\\
    &= \gamma\cdot 2h\cdot |C|.
    \end{align*}
    Therefore, we have $\gamma \geq \frac{|D'| \cdot h}{2h \cdot |C|} = \frac{1}{2\phi}$ where we use $\frac{|C|}{|D'|} = \phi$ 
\end{proof}

On the other hand, \Cref{thm:flowcutgap} says that if there exists a lower bound over the sparsity of some demand $D$, we can also construct an $h$-length routing of $D$ with a bounded congestion.
\begin{theorem}\label{thm:flowcutgap}
    In a directed graph $G$, suppose that $D$ is a demand such that every $h$-length moving cut $C$ has sparsity at least $\phi$ for a $D$, i.e., $\spa_{h}(C,D) \ge \phi$. Then, $D$ can be routed along $h$-length paths with congestion at most $O(\frac{\log N}{\phi})$.
\end{theorem}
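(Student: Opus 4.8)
The plan is to prove \Cref{thm:flowcutgap} by the usual flow--cut-gap route --- LP duality followed by a rounding step --- adapted so that the ``cut'' extracted from a dual solution is an $h$-length moving cut rather than a classical edge cut; the $O(\log N)$ loss will come from the rounding alone. (Alternatively one could run a multiplicative-weights / Garg--K\"onemann scheme to build the flow directly, but the dual argument is cleaner for an existence statement.)

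First I would set up the length-constrained concurrent-flow LP: maximize $z$ such that $z\cdot D$ is routable along \emph{$h$-length} paths (paths $P$ with $\ell_G(P)\le h$) with congestion $1$; its optimum equals $1/\gamma^{*}$, where $\gamma^{*}$ is the minimum congestion of an $h$-length routing of $D$. The LP dual has edge lengths $\ell:E\to\mathbb{R}_{\ge0}$ as variables, and strong duality gives
\[
\gamma^{*}=\max_{\ell\ge0}\ \frac{\sum_{u,v}D(u,v)\cdot \dist^{(h)}_{\ell}(u,v)}{\sum_{e}u_G(e)\cdot\ell(e)},
\]
where $\dist^{(h)}_{\ell}(u,v)$ is the least $\ell$-length of a $(u,v)$-path that is $h$-length in $\ell_G$. (The hypothesis already forces $D$ to be $h$-length-constrained: otherwise the empty moving cut would $h$-separate some demand pair and have sparsity $0<\phi$; so these distances are finite on $\supp D$.) Hence it suffices to prove that for every $\ell\ge0$,
\[
\textstyle\sum_{u,v}D(u,v)\,\dist^{(h)}_{\ell}(u,v)\ \le\ O(\log N/\phi)\cdot\sum_{e}u_G(e)\,\ell(e).
\]

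Fix such an $\ell$; write $L=\sum_e u_G(e)\ell(e)$ and, for a threshold $t>0$, $\sep_{\ge t}=\sum_{u,v:\,\dist^{(h)}_{\ell}(u,v)\ge t}D(u,v)$. The core step is to turn $\ell$ into a family of $h$-length moving cuts: for each $t$ set
\[
C_t(e)=\begin{cases}\bigl\lceil h\cdot\min\{1,\,2\ell(e)/t\}\bigr\rceil/h, & \text{if } 2\ell(e)/t\ge 1/h,\\ 0,& \text{otherwise.}\end{cases}
\]
I would check (i) \emph{size}: every edge with $C_t(e)>0$ has $\ell(e)\ge t/(2h)$, so $C_t(e)\le\min\{1,2\ell(e)/t\}+1/h\le 4\ell(e)/t$, giving $|C_t|\le 4L/t$; and (ii) \emph{separation}: $C_t$ $h$-separates every pair with $\dist^{(h)}_{\ell}(u,v)\ge t$, so $\sep_h(C_t,D)\ge\sep_{\ge t}$. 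For (ii), take a $(u,v)$-path $P$: if $\ell_G(P)>h$ it is already too long in $G-C_t$; otherwise $P$ has at most $h$ edges and $\ell$-length $\ge t$, the edges of $P$ with $\ell(e)<t/(2h)$ (those zeroed out) carry total $\ell$-length $<h\cdot t/(2h)=t/2$, so the surviving edges carry $\ell$-length $\ge t/2$ and hence $\sum_{e\in P}C_t(e)\ge\sum\min\{1,2\ell(e)/t\}\ge1$; since all edge lengths are $\ge1$, $\ell_{G-C_t}(P)\ge\ell_G(P)+h\ge h+1>h$. Thus $\spa_h(C_t,D)\le(4L/t)/\sep_{\ge t}$. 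Finally, a layer-cake identity gives $\sum_{u,v}D(u,v)\dist^{(h)}_{\ell}(u,v)=\int_0^{\infty}\sep_{\ge t}\,dt$, and decomposing this integral dyadically over the range $[m,M]$ of the positive values $\dist^{(h)}_{\ell}(u,v)$ on $\supp D$ yields $\sum_{u,v}D(u,v)\dist^{(h)}_{\ell}(u,v)\le O(\log(M/m))\cdot\max_t(t\cdot\sep_{\ge t})$; after the standard normalization (rescale $D$, and take the dual optimum at a vertex of the dual polyhedron, whose entries then have $\poly(N)$ bit-length) one has $M/m\le\poly(N)$, so $\log(M/m)=O(\log N)$. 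Picking $t^{*}$ attaining $\max_t(t\cdot\sep_{\ge t})$ and combining, $\spa_h(C_{t^{*}},D)\le O(\log N)\cdot L/\bigl(\sum_{u,v}D(u,v)\dist^{(h)}_{\ell}(u,v)\bigr)$, so the hypothesis $\spa_h(C_{t^{*}},D)\ge\phi$ gives the displayed inequality; taking the maximum over $\ell$ yields $\gamma^{*}\le O(\log N/\phi)$, i.e.\ $D$ routes along $h$-length paths with congestion $O(\log N/\phi)$.

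The hard part will be the rounding in the core step: one must simultaneously keep $C_t$ a \emph{valid} moving cut (values in multiples of $1/h$), make it heavy enough to $h$-separate \emph{every} $h$-length path between a chosen pair --- whose $\ell_G$-length can be anything in $[1,h]$ --- and keep $|C_t|$ small despite the forced additive $+1/h$ per edge. Zeroing out edges with $\ell(e)<t/(2h)$ and charging both the separation slack and the rounding overhead to the hop bound $|P|\le h$ and to the surviving heavy edges is what reconciles the three requirements; the only remaining logarithmic factor, coming from the dyadic choice of $t$, is the classical flow--cut gap and seems unavoidable.
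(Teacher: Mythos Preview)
Your approach is essentially the paper's: concurrent-flow LP, strong duality, then round the dual lengths $\ell$ into an $h$-length moving cut. The construction of $C_t$ together with its size and separation analysis is correct and matches the paper's Lemma~\ref{lem:dual-to-moving-cut} (your zeroing-out of light edges and the paper's rounding-down both exploit $\ell_G(e)\ge 1$ to absorb the $1/h$ rounding error).

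The gap is in the last step, where you control the logarithm. Your layer-cake bound yields the factor $1+\ln(M/m)$, with $M,m$ the largest and smallest positive values of $\dist^{(h)}_\ell$ on $\supp D$, and you then assert $M/m\le\poly(N)$ by passing to a vertex of the dual polyhedron. This does not work as stated: the dual feasible region is cut out by the constraint $\sum_i d_i c_i\ge 1$, whose coefficients are the demand values $d_i$, and nothing in the hypothesis bounds or discretizes $D$ --- so a dual vertex can have entries $c_i$ of arbitrarily large ratio (and rescaling $D$ cannot help, since $M/m$ is scale-invariant). The paper extracts the logarithm differently: a bucketing lemma on the $c_i$ (Lemma~\ref{lem:bucketing-lemma}) shows directly that
\[
\max_t\,t\cdot S(t)\ \ge\ \frac{\sum_i d_i c_i}{1+\ln\bigl(\sum_i d_i/\min_i d_i\bigr)},
\]
so the relevant ratio is $d_{\mathrm{ratio}}=\sum_i d_i/\min_i d_i$ rather than $M/m$. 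It then preprocesses $D$ to force $d_{\mathrm{ratio}}\le n^4 N$: first $|D|\le n^2N/\phi$ (else the all-ones moving cut has sparsity $<\phi$), and pairs with $d_i\le 1/(\phi n^2)$ may be dropped and routed afterwards at additional congestion $\le 1/\phi$. If you replace your layer-cake inequality by this bucketing bound (they lower-bound the same quantity $\max_t t\,S(t)$), the rest of your argument goes through unchanged.
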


The remainder of this section is the proof 
\Cref{thm:flowcutgap}. For a given $h$-length demand $D$ which sends $d_i$ amount of demand from some vertex $s_i$ to some vertex $t_i$, we consider the (exponential size) concurrent multicommodity flow LP and its dual given in \Cref{fig:primaldual-flows}. In it we denote with ${\cal P}_i$ the (exponential size) set of all $h$-length directed paths that go from $s_i$ to $t_i$:

\newcommand{\primal}{\textsc{ConcurrentFlow}}
\newcommand{\dual}{\textsc{Cut}}

\begin{figure}[ht]
	\begin{center}
		\begin{tabular}{rl|rl}
			\multicolumn{2}{c|}{Primal: \primal$(D,h)$} & \multicolumn{2}{c}{Dual: \dual$(D,h)$
				} \\ \hline  
			maximize & $z$ & minimize & $L = \sum_{e \in E} u(e) \cdot \ell_e$ \\
			subject to: & & subject to: & \\
			
			$\forall i\in [k]$: &  $\sum_{p\in {\cal P}_i} f_i(p) \geq z \cdot d_i$ & $\forall i\in [k], p \in {\cal P}_i$: & $\sum_{e\in p} \ell_e \geq c_i$ \\
			
			$\forall e \in E$: & $\sum_{p\ni e} f_i(p) \leq u(e)$ & & $\sum_{i \in [k]} d_i c_i \geq 1$ \\
			
			$\forall i\in [k],p$: & $f_i(p) \geq 0$ & $\forall e\in E$: & $\ell_e \geq0$ \\
			
			& & $\forall i\in [k]$: & $c_i\geq 0$				
		\end{tabular}
	\end{center}
	\vspace{-0.5cm}
	\caption{The concurrent flow LP relaxation and its dual.}
	\label{fig:primaldual-flows}
\end{figure}

We note that the primal LP exactly computes a routing for $z\cdot D$ while constraining the congestion to be no more than 1. It is equivalent to the inverse of the optimal $h$-length congestion of a fractional routing for $D$.

\begin{lemma}\label{lem:flow-LP-equals-congestion}
Let $z, \{f_i(p) \mid i\in [k], p\in {\cal P}_i\}$ be an optimal solution for \primal$(D,h)$. Then the optimal $h$-length congestion for any fractional $h$-length routing of $D$ is exactly $\frac{1}{z}$. 
\end{lemma}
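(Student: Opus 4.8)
The plan is to prove the two inequalities $\gamma^\ast\le 1/z$ and $\gamma^\ast\ge 1/z$, where $\gamma^\ast$ denotes the minimum $h$-length congestion over all fractional $h$-length routings of $D$, by a direct scaling argument. The underlying observation is that, for a fixed objective value $z'>0$, the LP \primal$(D,h)$ is feasible with value $z'$ if and only if the scaled demand $z'D$ (sending $z'd_i$ units from $s_i$ to $t_i$ for each $i$) admits an $h$-length routing with edge-congestion at most $1$; dividing such a flow through by $z'$ turns it into an $h$-length routing of $D$ with congestion at most $1/z'$, and this correspondence is reversible. Thus maximizing $z$ is exactly the same as minimizing $1/\gamma$ over $h$-length routings of $D$.

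Concretely, for the bound $\gamma^\ast\le 1/z$ I would take the optimal primal solution $(z,\{f_i(p)\})$, set $g_i(p):=f_i(p)/z$, and verify that $\{g_i\}$ routes $D$ along $h$-length paths (since $\sum_{p\in{\cal P}_i}g_i(p)\ge d_i$) with congestion $\frac{1}{z\,u(e)}\sum_{p\ni e}f_i(p)\le 1/z$ on every edge $e$. For $\gamma^\ast\ge 1/z$ I would start from an arbitrary fractional $h$-length routing $\{g_i(p)\}$ of $D$ with congestion $\gamma$; after scaling each commodity down so that $\sum_{p}g_i(p)=d_i$ (which only decreases congestion), the pair $(1/\gamma,\{g_i(p)/\gamma\})$ satisfies all constraints of \primal$(D,h)$ — the demand constraint reads $\sum_p g_i(p)/\gamma=(1/\gamma)d_i$ and the capacity constraint reads $\frac{1}{\gamma}\sum_{p\ni e}g_i(p)\le \frac{1}{\gamma}\cdot\gamma\,u(e)=u(e)$ — so optimality of $z$ forces $z\ge 1/\gamma$, i.e.\ $\gamma\ge 1/z$, for every such routing. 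Combining: the routing $\{g_i\}$ from the first part has congestion $\le 1/z$, while every $h$-length routing of $D$ has congestion $\ge 1/z$; hence $\{g_i\}$ realizes congestion exactly $1/z$ and this is the optimum, so $\gamma^\ast=1/z$.

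There is no substantial obstacle here: the lemma is essentially the definitional equivalence between ``maximum throughput at congestion $1$'' and ``inverse of minimum congestion at throughput $1$''. The only points needing a little care are the direction of the scaling factor, the harmless replacement of $\sum_p f_i(p)\ge z d_i$ by an equality (and likewise the normalization of a routing of $D$ to route exactly $d_i$ units per commodity), and the degenerate case $z=0$ — which does not arise here because $D$ is $h$-length, so every demand pair $(s_i,t_i)$ with $d_i>0$ has an $h$-length path and hence ${\cal P}_i\neq\varnothing$, giving $z>0$.
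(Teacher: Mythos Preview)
Your proposal is correct and follows essentially the same approach as the paper: both directions are established by the scaling argument (divide the optimal LP solution by $z$ to get a routing of $D$ with congestion $\le 1/z$, and scale an arbitrary routing of congestion $\gamma$ by $1/\gamma$ to get a feasible LP solution with value $1/\gamma$). Your additional remarks on normalization and the $z=0$ degenerate case are sound but not needed for the paper's version.
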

\begin{proof}
Scale the flows by $\frac{1}{z}$. They now satisfy the demands (thanks to the first set of primal constraints) and cause at most congestion $\frac{1}{z}$ on any edge (thanks to the second set of primal constraints). Therefore the optimal $h$-length congestion of fractionally routing $D$ is at most $\frac{1}{z}$. 

On the other hand, any fractional $h$-length routing of $D$ with congestion $\gamma$ can be scaled down by a factor of $\frac{1}{\gamma}$ to give a feasible solution for \primal$(D,h)$ with $z = \frac{1}{\gamma}$. Thus if the optimal $h$-length congestion for any fractional $h$-length demand is $\gamma' < \frac{1}{z}$, then we can get a feasible solution with $z' = \frac{1}{\gamma'} > z$, which contradicts that $z$ is the optimal value. 
\end{proof}

Next, we show that a feasible solution to the dual LP with value $L$ can be transformed into a $h$-length moving cut $C$ for $D$ of sparsity $\spa_h(C,D) = O(L \cdot \log N)$. The intuition is to think of the dual variables $c_i$ as indicating whether or not (or to what extend) we are separating the demands $d_i$ while the $\ell_i$ variables will correspond to scaled length increases in the moving cut. The scaling is such that we are (fractionally) cutting one unit of demand (as forced by the second set of dual demands). With this intuition in mind, the following lemma \cite{haeupler2022hopexpander} shows that the $c_i$ values can be ``rounded" into an appropriate subset of demands to separate, together with the appropriate scaling factor.  

\begin{lemma}\label{lem:bucketing-lemma}
	Given sequences $c_1, \ldots, c_k, d_1, \ldots, d_k \in \reals_{\ge 0}$ with $\sum_{i\in [k]} d_i \cdot c_i \ge 1$ there exists a non-empty subset $I \subseteq [k]$ with $\min_{i \in I} c_i \ge \frac{1}{\alpha \cdot \sum_{i\in I}d_i}$ for $\alpha = 1 + \ln \left( \frac{\sum_{i\in [k]} d_i}{\min_{i\in [k]} d_i} \right )$.
\end{lemma}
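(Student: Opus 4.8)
The plan is a sorting-plus-telescoping argument: order the indices by the $c_i$'s and show that one of the prefix sets $\{1,\dots,\ell\}$ must already be a valid choice of $I$. First dispose of the degenerate case: if $\min_{i\in[k]}d_i=0$ then $\alpha=+\infty$ and the claimed lower bound $\frac{1}{\alpha\sum_{i\in I}d_i}$ is $0$, so (picking any index $j$, which exists since $k\ge1$) the singleton $I=\{j\}$ works trivially. Hence assume from now on that all $d_i>0$. After relabeling, assume $c_1\ge c_2\ge\cdots\ge c_k$, and set $P_0=0$, $P_\ell=\sum_{i=1}^\ell d_i$ for $\ell\in[k]$, and $S=P_k=\sum_i d_i$. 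For each $\ell$ the natural candidate is $I=\{1,\dots,\ell\}$, which has $\min_{i\in I}c_i=c_\ell$ and $\sum_{i\in I}d_i=P_\ell$; so it suffices to exhibit some $\ell\in[k]$ with $c_\ell\ge\frac{1}{\alpha P_\ell}$.

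Suppose, for contradiction, that $c_\ell<\frac{1}{\alpha P_\ell}$ for \emph{every} $\ell\in[k]$. Then, using $\sum_i d_i c_i\ge1$,
\[
1\;\le\;\sum_{\ell=1}^{k} d_\ell c_\ell \;<\;\frac{1}{\alpha}\sum_{\ell=1}^{k}\frac{d_\ell}{P_\ell}
\;=\;\frac{1}{\alpha}\Bigl(\frac{d_1}{P_1}+\sum_{\ell=2}^{k}\frac{P_\ell-P_{\ell-1}}{P_\ell}\Bigr).
\]
The first term equals $d_1/P_1=1$. For $\ell\ge2$, apply the elementary inequality $1-x\le\ln(1/x)$ (valid for $x\in(0,1]$) with $x=P_{\ell-1}/P_\ell\in(0,1)$ to get $\frac{P_\ell-P_{\ell-1}}{P_\ell}=1-\frac{P_{\ell-1}}{P_\ell}\le\ln\frac{P_\ell}{P_{\ell-1}}$; summing telescopes to $\sum_{\ell=2}^{k}\ln\frac{P_\ell}{P_{\ell-1}}=\ln\frac{P_k}{P_1}=\ln\frac{S}{d_1}\le\ln\frac{S}{\min_i d_i}$.

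Substituting back yields $1<\frac{1}{\alpha}\bigl(1+\ln\frac{S}{\min_i d_i}\bigr)=\frac{1}{\alpha}\cdot\alpha=1$, a contradiction. Therefore some $\ell\in[k]$ satisfies $c_\ell\ge\frac{1}{\alpha P_\ell}$, and $I=\{1,\dots,\ell\}$ is the desired nonempty subset. I do not expect a genuine obstacle here; the only points that need care are handling the $\ell=1$ boundary term of the telescoping sum — which is exactly the source of the additive $1$ in the formula for $\alpha$ — replacing $d_1$ by $\min_i d_i$ inside the logarithm (legitimate since $d_1\ge\min_i d_i$), and keeping the inequality chain strict (legitimate because $k\ge1$ and $d_1>0$) so that the contradiction is genuinely with the hypothesis $\sum_i d_ic_i\ge1$.
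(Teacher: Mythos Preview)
Your proof is correct and follows essentially the same approach as the paper: sort by the $c_i$'s, suppose for contradiction that no prefix set works, and derive the contradiction by bounding $\sum_\ell d_\ell/P_\ell \le 1 + \ln(S/\min_i d_i)$. The only cosmetic difference is that the paper phrases the key estimate as an integral comparison ($f(x)\le 1/x$ on $[d_1,S]$) whereas you invoke the equivalent pointwise inequality $1-x\le\ln(1/x)$ and telescope; you also explicitly dispose of the degenerate case $\min_i d_i=0$, which the paper leaves implicit.
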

\begin{proof}
	Suppose, without loss of generality, that $c_1 \geq c_2 \geq \ldots c_k$ and assume for the sake of contradiction that none of the sets $[1], [2], \ldots, [k]$ satisfy the condition. In other words, if we let $d([i]) = \sum_{j = 1}^i d_j$, then $c_i < \frac{1}{\alpha} \cdot \frac{1}{d([i])}$ for all $i \in [k]$. Multiplying both sides by $d_i$ and summing them up, we get that $1 \le \sum_{i=1}^k d_i c_i < \frac{1}{\alpha} \sum_{i=1}^k \frac{d_i}{d([i])}$. Reordering terms, this implies $\sum_{i=1}^k \frac{d_i}{d([i])} > \alpha$.
	
	Define $f(x)$ as $1/d_1$ on $[0, d_1)$; $1/(d_1+d_2)$ on $[d_1, d_1+d_2)$; ...; $1/d([i])$ on $[d([i-1]), d([i]))$ for $i \in [k]$. Now we have
	\begin{align*}
	\int_0^{d([k])} f(x) = \frac{d_1}{d_1} + \frac{d_2}{d_1 + d_2} + \frac{d_3}{d_1 + d_2 + d_3} + \ldots + \frac{d_k}{d([k])} > \alpha.
	\end{align*}
	However, since $f(x) \le 1/x$
	\begin{align*}
	\int_0^{d([k])} f(x) & = \int_0^{d_1} f(x)\, dx + \int_{d_1}^{d([k])} f(x)\, dx \\
	& \le 1 + \int_{d_1}^{d([k])} \frac{1}{x}\, dx = 1 + \ln \frac{d([k])}{d_1} .
	\end{align*}
	Hence we have $\alpha < \int_0^{d([k])} f(x) \le 1 + \ln \frac{d([k])}{d_1}$, but we set $\alpha = 1 + \ln \left( \frac{\sum_{i\in [k]} d_i}{\min_{i\in [k]} d_i} \right ) \ge 1 + \ln \frac{d([k])}{d_1}$, where we reach a contradiction and finish the proof.
\end{proof}


\begin{lemma}\label{lem:dual-to-moving-cut}
Suppose $\{\ell_e \mid e \in E\}, \{c_i \mid i\in [k]\}$ is a feasible solution for \dual$(D,h)$. Then there exists an $h$-length moving cut $C$ for $D$ with $h$-length sparsity $\spa_h(C,D) = O(L\cdot \log d_{\ratio})$ where $L = \sum_{e \in E} u(e) \cdot \ell_e$ and $d_{\ratio} = \sum_{i\in[k]} d_{i}/\min_{i\in[k]} d_{i}$.
\end{lemma}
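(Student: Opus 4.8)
The plan is to turn the dual solution $\{\ell_e\}, \{c_i\}$ into an $h$-length moving cut by the standard rounding-and-scaling argument, using Lemma~\ref{lem:bucketing-lemma} to select a good subset of the demands. First I would normalize: by the dual constraint $\sum_i d_i c_i \ge 1$, I can apply Lemma~\ref{lem:bucketing-lemma} to the sequences $(c_i)$ and $(d_i)$ to obtain a non-empty index set $I \subseteq [k]$ with $\min_{i \in I} c_i \ge \frac{1}{\alpha \sum_{i \in I} d_i}$, where $\alpha = 1 + \ln d_{\ratio} = O(\log d_{\ratio})$. Set $c^\ast = \min_{i \in I} c_i$ and $D^\ast = \sum_{i \in I} d_i$, so $c^\ast \ge \frac{1}{\alpha D^\ast}$.

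Next I would define the candidate cut. Scale the lengths by $1/c^\ast$ and round to a multiple of $1/h$: concretely set $C(e) = \min\{1, \frac{1}{h}\lceil h \cdot \ell_e / c^\ast \rceil\}$, so that $C$ is a valid $h$-length moving cut (values in $\{0, 1/h, \dots, 1\}$) and the associated length increase satisfies $\ell_{C,h}(e) = h \cdot C(e) \ge h \cdot \ell_e / c^\ast$. For each $i \in I$, and each $h$-length $(s_i,t_i)$-path $p$ that remains $h$-length in $G - C$, the first dual constraint gives $\sum_{e \in p} \ell_e \ge c_i \ge c^\ast$; but then the rounded-up length increase along $p$ is $\sum_{e \in p} \ell_{C,h}(e) \ge \frac{h}{c^\ast}\sum_{e \in p}\ell_e \ge h$, and after accounting for the original path length (at most $h$ since $p$ is $h$-length in $G$) one gets $\dist_{G - C}(s_i, t_i) > h$ — here one must be slightly careful, rescaling by $2h$ versus $h$ as in Theorem~\ref{thm:non_expanding_bad_routing}, or equivalently scaling $\ell_e$ by $2/c^\ast$ so the cut length alone exceeds $h$ plus the at-most-$h$ original length. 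Thus every demand pair with $i \in I$ is $h$-length separated by $C$, giving $\sep_h(C, D) \ge \sum_{i \in I} d_i = D^\ast$.

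For the size bound: $|C| = \sum_e u(e) C(e) \le \sum_e u(e)(\frac{\ell_e}{c^\ast} + \frac{1}{h}) \cdot [\text{correction}]$; the rounding adds at most $u(e)/h$ per edge, and since all lengths are positive integers bounded by $N$ and $\ell_e$ is part of a feasible dual solution the additive rounding term is dominated by an $O(\log N)$-type loss (absorbed as the paper notes into $O(\log N)$ factors), so $|C| = O\!\bigl(\frac{L}{c^\ast}\bigr) = O(L \cdot \alpha D^\ast)$ up to the rounding loss. Combining, $\spa_h(C, D) = \frac{|C|}{\sep_h(C,D)} \le \frac{O(L \alpha D^\ast)}{D^\ast} = O(L \alpha) = O(L \log d_{\ratio})$, as claimed.

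The main obstacle I expect is handling the rounding of $\ell_e / c^\ast$ to multiples of $1/h$ cleanly: one must ensure simultaneously that (i) the rounded cut still $h$-length separates every pair in $I$ (which forces rounding \emph{up}, hence a slight scaling factor like $2$ somewhere), and (ii) the size blow-up from rounding up is only a constant (or $O(\log N)$) factor, not more. This is exactly the kind of bookkeeping the paper defers into its ``hide $O(\log N)$ factors'' convention, and the argument mirrors the undirected case of \cite{haeupler2022hopexpander}; I would follow that template, using the integrality of lengths and the $N = \poly(n)$ bound to control the rounding overhead.
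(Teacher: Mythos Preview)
Your overall strategy matches the paper's: apply the bucketing lemma to extract $I$ and $c^\ast$, scale the dual lengths by roughly $1/c^\ast$, round to a valid moving cut, and bound $|C|/\sep_h(C,D)$. The separation argument is essentially fine. The gap is in the size bound, and it is not mere bookkeeping.

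You round \emph{up} to the nearest multiple of $1/h$ and then write
\[
|C| \le \sum_e u(e)\Bigl(\frac{\ell_e}{c^\ast} + \frac{1}{h}\Bigr) = \frac{L}{c^\ast} + \frac{1}{h}\sum_{e:\,\ell_e>0} u(e),
\]
claiming the second term is an ``$O(\log N)$-type loss''. It is not. The $\ell_e$ are real-valued dual variables, not the integral graph lengths, so nothing prevents many edges from carrying a tiny $\ell_e>0$ while having huge capacity; then $\frac{1}{h}\sum_{e:\ell_e>0} u(e)$ can dwarf $L/c^\ast$ by an arbitrary factor. The paper's ``hide $O(\log N)$'' convention does not rescue this, and neither does the $N=\poly(n)$ assumption on graph lengths.

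The paper avoids the problem by rounding \emph{down}, which makes the size bound immediate ($|C|\le 2L/c_{\min}$, no additive term), and then uses the fact that graph edge lengths are positive integers (so $\ell_G(e)\ge 1$) to absorb the rounding loss on the separation side. Concretely, with $C(e)$ defined as $\min\{1, 2\ell_e/c_{\min}\}$ rounded down to a multiple of $1/h$, each edge on a path $p\in\mathcal{P}_i$ contributes
\[
\ell_G(e) + hC(e) \;\ge\; 1 + h\Bigl(\min\{1, 2\ell_e/c_{\min}\} - \tfrac{1}{h}\Bigr) \;=\; h\cdot\min\{1, 2\ell_e/c_{\min}\},
\]
and summing over $p$ (using $\sum_{e\in p}\ell_e \ge c_{\min}$) gives total length $>h$. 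The factor of $2$ in the scaling is there precisely so that the target is $2h$ before the per-edge loss of $1$ is subtracted. This is the missing idea in your proposal: shift the rounding error from the size side (where it is uncontrollable) to the separation side (where the integral edge lengths pay for it).
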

\begin{proof}
Applying \Cref{lem:bucketing-lemma} to the $c_i$ values results in a subset $I \subseteq [k]$  of demands with $c_{\min} = \min_{i \in I} c_i \ge \frac{1}{\alpha \cdot \sum_{i\in I}d_i}$ for $\alpha = 1 + \ln \left(\frac{\sum_{i\in [k]} d_i}{\min_{i\in [k]} d_i} \right ) = O(\log d_{\ratio})$. Given the scaling factor $c_{\min}$, 
for each edge $e$, we round down the value of $\min\{1, \frac{2\ell_e}{c_{\min}}\}$ to multiple of $1/h$, and define it to be the cut value over $e$ from the $h$-length moving cut $C$.

We show that $C$ successfully $h$-separates every demand $i \in I$. For this, we note that the first set of dual constraints guarantees that for every demand $i \in I$ and for every path $p \in \calP_i$, the length of $p$, assuming every edge $e$ has length $\ell_e$, is at least $c_{\min}$, i.e., $\forall i \in I, p\in \calP_i: \sum_{e\in p} \ell_e \geq c_{\min}$. 
Recall that edge lengths are positive integers as mentioned in \Cref{sec:preliminray}, so each edge has length at least $1$ in $G$. 
The length of a path $p \in \calP_i$ under cut $C$ is therefore at least 
\begin{align*}
\sum_{e\in p}\ell_{G-C}(e) &= \sum_{e\in P}\ell_{G}(e) + h\cdot C(e)\\
&\geq \sum_{e\in P} 1 + h\cdot (\min\{1,2\ell_{e}/c_{\min}\} - 1/h)\\
&\geq h\cdot\min\{1,\frac{2\sum_{e\in P}\ell_{e}}{c_{\min}}\}\\
&= 2h>h
\end{align*}
This guarantees that $C$ indeed $h$-separates every demand $i \in I$.

The amount of demands in $D$ separated by $C$ therefore satisfies
\[
\sep_h(C,D) \geq \sum_{i\in I} d_i \geq \frac{1}{\alpha \cdot c_{\min}},
\]
where the last inequality comes from the construction if $I$ guarantees that $c_{\min} \ge \frac{1}{\alpha \cdot \sum_{i\in I}d_i}$. By the definition of $C$, the size $|C|$ of the $h$-length cut $C$, is \[
\sum_e u(e) \cdot C(e) \leq \sum_e \frac{u(e) \cdot 2\ell_e}{c_{\min}} = \frac{\sum_e u(e)\cdot 2\ell_e}{c_{\min}} = \frac{2L}{c_{\min}}.
\]
Finally, the $h$-length sparsity $\spa_h(C,D)$ of $C$ for $D$ is $\frac{|C|}{\sep_h(C,D)} \leq \frac{2L/c_{\min}}{1/(\alpha \cdot c_{\min})} = 2L\cdot \alpha = O(L\cdot \log d_{\ratio})$, as desired.
\end{proof}

The proof of \Cref{thm:flowcutgap} follows now immediately from \Cref{lem:flow-LP-equals-congestion}, \Cref{lem:dual-to-moving-cut} and strong LP duality.

\begin{proof}[Proof of \Cref{thm:flowcutgap}]
First of all, $D$ must be $h$-length, otherwise the zero cut $C$ has $\spars_{h}(C,D) = 0$. Let $d_{\min} = \min\{D(u,v)\mid \forall u,v\text{ s.t. }D(u,v)>0\}$ and $d_{\ratio} = |D|/d_{\min}$. We can assume $d_{\ratio} \leq n^{4}N$ by the following reasons.
\begin{itemize}
\item First we have $|D|\leq n^{2}N/\phi$. Assume the opposite, the $h$-length moving cut $C$ with $C(e) = 1$ for all $e\in E(G)$ trivially has $\sep_{h}(C,D) = |D|$ and $|C| = \sum_{e\in E(G)}u_{G}(e)\leq n^{2}N$ (here we use that edge capacities are at most $N$). This means $\spars_{h}(C,D) = |C|/|D| < \frac{n^{2}N}{n^{2}N/\phi} = \phi $, a contradiction.
\item When $d_{\min}\leq 1/(\phi n^{2})$, we consider another demand $D'$ which drops all demand pairs $(u,v)$ with $D(u,v)\leq 1/(\phi n^{2})$, which means $d'_{\min} \geq 1/(\phi n^{2})$. Also, $D'$ still satisfies $\spars_{h}(C,D)\geq \phi$ for all $h$-length moving cuts $C$. Furthermore, if $D'$ can be routed with length $h$ and congestion $O(\log N/\phi)$, then $D$ can also be routed with length $h$ and congestion $O(\log N/\phi) + n^{2}/(\phi n^{2}) = O(\log N/\phi)$ (since each edge has capacity at least $1$).
\end{itemize}

Providing $d_{\ratio}\leq n^{4}N$, we prove the original statement by proving the counter-positive. That is, if a demand $D$ with $d_{\ratio}\leq n^{4}N$ cannot be routed along $h$-length paths with congestion at most $\gamma$ then there must exist a cut $C$ with $h$-length sparsity $\spa_h(C,D) = O(\frac{\log N}{\gamma})$ for $D$.

To see this note that, due to \Cref{lem:flow-LP-equals-congestion}, the value of \primal$(D,h)$ and because of strong duality also the value of the dual \dual$(D,h)$ is at most $\frac{1}{\gamma}$. \Cref{lem:dual-to-moving-cut} now directly implies that there exist the desired $h$-length cut with sparsity $O(\frac{\log d_{\ratio}}{\gamma}) = O(\frac{\log N}{\gamma})$ for $D$.
\end{proof}


\subsection{Linkedness}\label{sec:appendix_linkedness}

We can actually show the existence of a strengthened version of length-constrained expander decompositions called ``linked'' length-constrained expander decompositions. For each applied moving cut $C$, it will slightly increase the original node-weighting to allow more demand in the graph, which makes it more difficult for the graph to be a length-constrained expander. We define such an increase as a new node-weighting added to the original weighting:
\begin{definition}[Linked Node-Weighting]
    Let $C$ be an $h$-length moving cut of a graph $G = (V,E)$ and let $\l$ be a positive integer divisible by $h$. The linked node-weighting $L_{C}^\l$ assigns to each vertex $v$ the value of $\l \cdot \deg_{C}(v)$ where $\deg_{C}(v) = \sum_{e\ni v}u_G(e)\cdot C(e)$.
\end{definition}

\begin{definition}[Linked Length-Constrained Directed Expander Decomposition]
    Given a directed graph $G = (V,E)$, a directed $\l$-linked $(h,s)$-length $\phi$-expander decomposition for a node-weighting $A$ with cut slack $\kappa$ and length slack $s$ is an $h \cdot s$-length cut $C$ of size at most $\kappa \cdot \phi|A|$ such that $A + L_C^\l$ is $(h,s)$-length $\phi$-expanding in $G - C$.
\end{definition}

We have defined the sequence of moving cuts in \cref{dfn:seq_moving_cuts}, but now we also need to incorporate the linked node-weighting added whenever we apply the cut to the graph.
We similarily describe the sequence of linked moving cuts as follows:

\begin{definition}[Sequence of Linked Moving Cuts]\label{dfn:seq_linked_moving_cuts}
Given a directed graph $G = (V, E)$, and node-weighting $A$, let $(C_1, C_2, \dots,C_n)$ be a sequence of $h \cdot s$ moving cuts, let $G - \sum_{j < i} C_j$ denote the graph that is applied with cuts from $C_1$ to $C_{i-1}$ and let $A + \sum_{j < i}L_{C_j}^{\l}$ denote the node-weighting that is added with linked node-weighting from $L_{C_1}^\l$ to $L_{C_{i-1}}^\l$. 
We define $(C_1, C_2, \dots,C_n)$ as a sequence of $\l$-linked $\phi$-sparse moving cuts
if and only if the $(h,s)$-length sparsity of $C_i$ w.r.t $A + \sum_{j < i}L_{C_j}^{\l}$ in $G - \sum_{j < i} C_j$ is at most $\phi$.
\end{definition}

We again need to bound the overall size of those moving cuts. We employ a similar potential argument, but what is different that the petential will not only decrease from the cut but also increase from the added linked node-weighting. 
We divide the potential update into two phases to argue about the potential change more clearly, and show that the potential is overall decreasing.
\begin{lemma}\label{lem:existential_linked_cut_sequence}
Let $C_1, \ldots, C_k$ be an sequence of $\l$-linked $\phi$-sparse $h\cdot s$-length
cuts for some node-weighting $A$ in the graph $G$ where $h > 1$, $\phi < 1$, $\l \le 2^{- 8\alpha - 4} \cdot \frac{1}{\phi \ln n}$, $\alpha \ge 1$ and
$s > \frac{4\log_2 n}{\alpha}$, then $\sum_{i} |C_i| \leq (2^{8\alpha+3}\phi\ln n)\cdot |A|$.
\end{lemma}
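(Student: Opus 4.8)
The plan is to mirror the potential argument of \Cref{lem:existential_cut_sequence}, but now with the node-weighting growing from step to step, splitting each step's potential accounting into two phases. Write $G_1 = G$, $G_i = G - \sum_{j<i} C_j$, $A_1 = A$, and $A_i = A + \sum_{j<i} L_{C_j}^{\ell}$ for the node-weighting in force when $C_i$ is chosen. Let $w_i$ be the $h$-length $\alpha$-exponential distance weight in $G_i$, let $D_i = D^{\alpha}_{h,A_i}$ be the $\alpha$-exponential demand with respect to $G_i$ and $A_i$, and define the potential $P_i = \sum_{u\in V} A_i(u)\ln(w_i(u))$. Since $w_i(u)\ge w_i(u,u) = 1$ and $A_i(u)\ge 0$, we have $P_i\ge 0$ for all $i$.

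First I would analyze a single step $i\to i+1$ in two phases. \emph{Phase A} changes the graph from $G_i$ to $G_{i+1}$ while freezing the node-weighting at $A_i$; let $P' = \sum_u A_i(u)\ln(w_{i+1}(u))$. By \Cref{dfn:seq_linked_moving_cuts} we have $\spars_{(h,s)}(C_i, A_i) \le \phi$ in $G_i$, so \Cref{lem:separationexpdemand} (applied with node-weighting $A_i$) gives $\sep^i_{hs/2}(C_i, D_i) \ge 2^{-8\alpha-1}|C_i|/\phi$; then the exact chain of inequalities from \Cref{lem:existential_cut_sequence} (with $A$ replaced by $A_i$, using $-\ln(1-x)\ge x$, the vanishing of $w_{i+1}$ on pairs separated at round-trip distance $>hs/2 > 2h\log_2 n/\alpha$, which is where $s > 4\log_2 n/\alpha$ is used, and the identity $D_i(u,v) = A_i(u)w_i(u,v)/w_i(u) + A_i(v)w_i(v,u)/w_i(v)$) yields $P_i - P' \ge \tfrac12\sep^i_{hs/2}(C_i, D_i) \ge 2^{-8\alpha-2}|C_i|/\phi$. \emph{Phase B} changes the node-weighting from $A_i$ to $A_{i+1} = A_i + L_{C_i}^{\ell}$ while freezing the graph at $G_{i+1}$; since $|L_{C_i}^{\ell}| = \ell\sum_v\deg_{C_i}(v) = 2\ell|C_i|$ and $w_{i+1}(u)\le n$, we get $P_{i+1} - P' = \sum_u L_{C_i}^{\ell}(u)\ln(w_{i+1}(u)) \le 2\ell|C_i|\ln n$.

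Combining the two phases gives $P_i - P_{i+1} \ge 2^{-8\alpha-2}|C_i|/\phi - 2\ell|C_i|\ln n$. Under the hypothesis $\ell \le 2^{-8\alpha-4}/(\phi\ln n)$ the second term is at most $2^{-8\alpha-3}|C_i|/\phi$, so $P_i - P_{i+1} \ge 2^{-8\alpha-3}|C_i|/\phi$. Summing over $i$, telescoping, and using $P_{k+1}\ge 0$ together with $P_1 = \sum_u A(u)\ln(w_1(u)) \le |A|\ln n$, I obtain $2^{-8\alpha-3}\phi^{-1}\sum_i|C_i| \le P_1 \le |A|\ln n$, i.e. $\sum_i|C_i| \le 2^{8\alpha+3}\phi\ln n\cdot|A|$, as claimed.

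The part I expect to be routine is Phase A, which is essentially verbatim the computation already carried out in \Cref{lem:existential_cut_sequence}. The genuinely new ingredient is Phase B, and the crux there is purely quantitative: one must verify that the cap $\ell \le 2^{-8\alpha-4}/(\phi\ln n)$ on the linkedness parameter makes the potential \emph{increase} from adding $L_{C_i}^{\ell}$ at most half the potential \emph{decrease} produced by the cut, so that the net per-step drop stays a fixed fraction of $|C_i|/\phi$. The only place I would be careful is in tracking the powers of two — reconciling the factor $2^{8\alpha+1}$ from \Cref{lem:separationexpdemand} with the $2^{8\alpha+4}$ in the hypothesis on $\ell$ and the $2^{8\alpha+3}$ in the conclusion — and in checking that $P_{k+1}\ge 0$, which holds because $w_{k+1}(u)\ge 1$ always.
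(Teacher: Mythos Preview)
Your proposal is correct and follows essentially the same approach as the paper: the same two-phase potential argument (cut first, then add linked weighting), the same application of \Cref{lem:separationexpdemand} with $A_i$ in place of $A$, the same bound $|L_{C_i}^{\ell}| = 2\ell|C_i|$ in Phase~B, and the same telescoping conclusion with matching constants. Even your caution about tracking the powers of two is well placed, since those constants line up exactly as you describe.
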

\begin{proof}
    Let $G_1$ denote the initial graph $G$, $G_i = G - \sum_{j < i} C_j$, and let $A_i = A + \sum
    _{j < i} L_{C_j}^\l$. 
    From our assumption of the sequence of moving cuts, for every $i$ there exists a symmetric $h$-length $A_i$-respecting demand $D^{*}_{i}$ in the graph $G_i$ such that the $h\cdot s$ sparsity of $C_i$ w.r.t $D^{*}_{i}$ is at most $\phi$. 

    We again introduce the exponential demand for each graph $G_i$ and the corresponding node-weighting $A_i$. For simplicity, let $w_i$ denote the exponential distance weight $w^{\alpha}_h$ w.r.t $G_i$. 
    We further use $D_i$ to denote the corresponding exponential demand $D^i_{h,A_i}$ w.r.t graph $G_i$ and node-weighting $A_i$.
    It is worth noting that exponential demands are different from each other because the graph $G_i$ and the node-weighting $A_i$ is changed for each $i$.
    Specifically, from \cref{lem:separationexpdemand}, we have
    \begin{align*}
        \sep^i_{h\cdot s/2}(C_i, D_i) \ge 2^{-8\alpha-1} \cdot \frac{|C_i|}{\phi}
    \end{align*}

    Further, we define the same potential function $P_i: V \to \reals$ w.r.t the graph $G_i$ and node-weighting $A_i$. It assigns a value to each vertex $u$ with the amount of $P_i(u) =  A_i(u)\ln(w_{i}(u))$. 

    We divide the potential change into two phases to simplify analysis. In the first phase, we apply cut $C_i$ to the graph $G_i$ and then the resulting graph is $G_{i+1}$. The distance of some edges increases due to $C_i$, but the node-weighting still remains as the $A_{i}$. Then in the second phase, we added the corresponding linked node-weighting $L_{C_i}^\l$ to $A_i$ and get $A_{i+1}$. The distance between vertices in the graph $G_{i+1}$ remains the same in this phase.
    
    Start with graph $G_i$, each vertex $u$ will have potential $P_i(u)$. After applying cut $C_i$ to the graph $G_i$, we first get the resulting graph $G_{i+1}$ with same node-weighting $A_i$, and let $P'_i(u)$ denote the potential of vertex $u$ at this intermediate phase, we have $P'_i(u) = A_i(u)\ln(w_{i+1}(u))$. Since we only increase the length of some edges in $G_i$, the exponential weight can only decrease between any vertex pairs. Consequently, there is indeed a decrease from $P_i(u)$ to $P'_i(u)$, and we have the same result as in \cref{lem:existential_cut_sequence},
    \begin{align}
        P_i(u) - P'_{i}(u) \ge A_{i}(u) \cdot ( \frac{w_{i}(u) - w_{i+1}(u)}{w_{i}(u)})
    \end{align}
    The same analysis gives that the overall potential reduction is still at least 
    \begin{align*}
        \sum_{u \in V} P_i(u) - P'_i(u) \ge \frac{1}{2} \cdot \sep^i_{h\cdot s/2}(C_i, D^i_{h,A_i}) \ge 2^{-8\alpha-2} \cdot \frac{|C_i|}{\phi}
    \end{align*}
    In the second phase, the length of edges in graph $G_{i+1}$ does not increase, but the node-weighting $A_i$ is added with $L_{C_i}^\l$ to get $A_{i+1}$. Thus the potential over vertex $u$ increases from $P'_i(u)$ to $P_{i+1}(u)$, and we have
    \begin{align*}
        P_{i+1}(u) - P'_i(u) & = (A_{i+1}(u) - A_i(u))\cdot \ln(w^{i+1}_{h}(u))\\
        & = \deg_{C_i}(u) \cdot \l \cdot \ln(w^{i+1}_{h}(u))\\
    \end{align*}
    By summing up, we have
    \begin{align*}
        \sum_{u \in V} P_{i+1}(u) - P'_i(u) & = \sum_{u \in V}\deg_{C_i}(u) \cdot \l \cdot \ln(w_{i+1}(u))\\
        & \le 2|C_i|\cdot \l \cdot \ln{n}\\
        & \le 2^{-8\alpha-3}\cdot\frac{|C_i|}{\phi}
    \end{align*}
    We use the condition that $\l \le 2^{- 8\alpha - 4} \cdot \frac{1}{\phi \ln n}$ in second inequality.
    We show that the overall potential $P_i$ decreases at least $2^{-8\alpha-2} \cdot \frac{|C_i|}{\phi}$ to $P'_i$, and then increases at most $2^{-8\alpha-3} \cdot \frac{|C_i|}{\phi}$ to $P_{i+1}$, which is equivalent to that the overall potential is monotonously decreasing as the following:
    \begin{align*}
        \sum_{u \in V} P_{i}(u) - P_{i+1}(u) \ge 2^{-8\alpha-3} \cdot \frac{|C_i|}{\phi}
    \end{align*}
    where $\sum_{u\in V}P_1(u) \ge \sum_{u\in V}P_2(u) \ge \dots \ge \sum_{u\in V}P_{k+1}(u)$. By this, we can come to a similar conclusion over the summation of size of all cuts.
    \begin{align*}
        \sum_{i}|C_i| \le 2^{8\alpha+3}\phi \cdot |A| \cdot \ln n
    \end{align*}
\end{proof}

\begin{theorem}\label{thm:linked_edge_ED}
    For any $G = (V,E)$, a node-weighting $A$, $h > 1$, $\alpha \ge 1$, $\l \le 2^{- 8\alpha - 4} \cdot \frac{1}{\phi \ln n}$, $\phi < 1$ and a length slack parameter $s = O(\log n)$, there is a directed $\l$-linked $(h,s)$-length $\phi$-expander decomposition for $A$ with cut slack $\kappa = O(n^{O(\frac{1}{s})}\log n)$.
\end{theorem}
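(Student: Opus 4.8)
\textbf{Proof proposal for Theorem~\ref{thm:linked_edge_ED}.}

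The plan is to mirror exactly the argument used for the unlinked case (\Cref{thm:directed_edge_ED}), substituting \Cref{lem:existential_linked_cut_sequence} for \Cref{lem:existential_cut_sequence}. That is, I would run the obvious greedy process: starting from $G$ and the node-weighting $A$, as long as $A + \sum_{j<i} L^{\l}_{C_j}$ is not $(h,s)$-length $\phi$-expanding in $G_i = G - \sum_{j<i} C_j$, pick any $h\cdot s$-length moving cut $C_i$ witnessing non-expansion, i.e. with $(h,s)$-length sparsity strictly below $\phi$ with respect to $A + \sum_{j<i} L^{\l}_{C_j}$, apply it, add the corresponding linked node-weighting $L^{\l}_{C_i}$, and repeat.

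First I would observe that this process terminates. Each iteration strictly increases some edge's length increase by at least $1/h$ (the granularity of a moving cut), and once an edge's total accumulated cut value reaches $1$ we may truncate it (as in the proof of \Cref{thm:directed_edge_ED}, increasing length beyond $h\cdot s$ is pointless for $h\cdot s$-length sparsity). Hence the total number of iterations is finite, say $k$, and $(C_1,\dots,C_k)$ forms a sequence of $\l$-linked $\phi$-sparse $h\cdot s$-length cuts in the sense of \Cref{dfn:seq_linked_moving_cuts}. Second, I would invoke \Cref{lem:existential_linked_cut_sequence} with the hypothesis $\l \le 2^{-8\alpha-4}/(\phi\ln n)$ and, choosing $\alpha$ so that $s > 4\log_2 n/\alpha$ (e.g. $\alpha = \Theta(\log n / s)$, which is legitimate since $s = O(\log n)$ forces $\alpha \ge \Omega(1)$, and one can always shrink to $\alpha \ge 1$), conclude $\sum_i |C_i| \le (2^{8\alpha+3}\phi\ln n)\cdot |A|$. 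Letting $C = \sum_{j\le k} C_j$ (truncated to an $h\cdot s$-length cut) we get $|C| \le \sum_i |C_i| \le \kappa\cdot\phi|A|$ with $\kappa = 2^{8\alpha+3}\ln n = O(n^{O(1/s)}\log n)$, using $\alpha = \Theta(\log n/s)$ so that $2^{8\alpha} = n^{O(1/s)}$.

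Finally I would check the correctness guarantee: by the termination condition, at iteration $k$ the weighting $A + \sum_{j\le k} L^{\l}_{C_j}$ is $(h,s)$-length $\phi$-expanding in $G_{k+1} = G - C$. It remains to note that $\sum_{j\le k} L^{\l}_{C_j} = L^{\l}_{C}$ as node-weightings (up to the truncation, which only decreases the linked weighting, and expansion is monotone under decreasing the node-weighting, so expansion is preserved — this monotonicity is immediate from the definition of sparsity, as a smaller node-weighting only shrinks the class of respecting demands). Hence $A + L^{\l}_C$ is $(h,s)$-length $\phi$-expanding in $G - C$, so $C$ is a valid directed $\l$-linked $(h,s)$-length $\phi$-expander decomposition for $A$ with cut slack $\kappa = O(n^{O(1/s)}\log n)$, as claimed.

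The only step requiring genuine care is the handling of truncation together with the additive linked weighting: when we cap accumulated cut values at $1$ (so the union cut is $h\cdot s$-length), the linked weighting $L^{\l}_C$ of the capped cut is pointwise at most $\sum_j L^{\l}_{C_j}$, and one must be sure that expanding for the larger weighting $A + \sum_j L^{\l}_{C_j}$ implies expanding for $A + L^{\l}_C$ — which follows from the aforementioned monotonicity. Everything else is a verbatim repeat of the proof of \Cref{thm:directed_edge_ED}, with the size bound supplied by \Cref{lem:existential_linked_cut_sequence} instead of \Cref{lem:existential_cut_sequence}; no new technical obstacle arises, which is why the statement is placed in the appendix as a straightforward strengthening.
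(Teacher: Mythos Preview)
Your proposal is correct and follows essentially the same approach as the paper: iterate the greedy sparse-cut extraction and invoke \Cref{lem:existential_linked_cut_sequence} in place of \Cref{lem:existential_cut_sequence} to bound the total cut size. The paper's own proof is in fact terser than yours; your explicit treatment of truncation and the monotonicity of expansion under decreasing the linked node-weighting is a point the paper silently passes over but which you handle correctly.
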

\begin{proof}
    Similar as the analysis in \cref{thm:directed_edge_ED}, we could find cut iteratively from the graph $G$ until the resulting graph is $\phi$-expander. 
    We take the union of cuts as $C_{\le k}$ and by definition $C_{\le k}$ is a valid expander decomposition for $G$ and $A$.
    Further \cref{lem:existential_linked_cut_sequence} guarantees that $|C_{\le k}| = \sum_{j\le k}|C_j| \le 2^{8\alpha+3}\phi \cdot |A| \cdot \ln n$ as long as $s > \frac{4\ln n}{\alpha}$.
    This gives that $\kappa \le 2^{8\alpha+3}\cdot\ln n = O(n^{O(\frac{1}{s})}\log n)$.
\end{proof}

\end{document}